\newtheorem{theorem}{Theorem}
\newtheorem{definition}[theorem]{Definition}
\newtheorem{lemma}[theorem]{Lemma}
\newtheorem{fact}[theorem]{Fact}
\newtheorem{corollary}[theorem]{Corollary}
\newtheorem{criterion}[theorem]{Criterion}
\newtheorem{open}[theorem]{Open Problem}
\DeclareMathOperator{\diam}{diam}
\newcommand{\semisize}{\operatorname{semi-size}_n}
\DeclareMathOperator{\parent}{parent}
\DeclareMathOperator{\double}{double}
\DeclareMathOperator{\polylog}{polylog}
\DeclareMathOperator{\poly}{poly}
\DeclareMathOperator{\predecessor}{predecessor}
\DeclareMathOperator{\size}{size}
\title{A well-separated pair decomposition for low density graphs}
\author{
    Joachim Gudmundsson\footnote{University of Sydney, Australia. joachim.gudmundsson@sydney.edu.au. Joachim Gudmundsson was funded in part by the Australian Government through the Australian Research Council (project number DP240101353).}
    ~and 
    Sampson Wong\footnote{University of Copenhagen, Denmark. sampson.wong123@gmail.com. Sampson Wong was funded in part by the European Union's Marie Skłodowska-Curie Actions Postdoctoral Fellowship (project number 101146276), and in part by the Independent Research Fund Denmark's Sapere Aude research career programme (project number 1054-00032B).}
}
\date{}
\begin{document}

\maketitle

\begin{abstract}
Low density graphs are considered to be a realistic graph class for modelling road networks. It has advantages over other popular graph classes for road networks, such as planar graphs, bounded highway dimension graphs, and spanners. We believe that low density graphs have the potential to be a useful graph class for road networks, but until now, its usefulness is limited by a lack of available tools.

In this paper, we develop two fundamental tools for low density graphs, that is, a well-separated pair decomposition and an approximate distance oracle. We believe that by expanding the algorithmic toolbox for low density graphs, we can help provide a useful and realistic graph class for road networks, which in turn, may help explain the many efficient and practical heuristics available for road networks.
\end{abstract}

\section{Introduction}
\label{section:introduction}

\subsection{Motivation}
\label{subsection:motivation}

Advances in technology have made the collection of geographic data easier than ever before. Nowadays, continent-sized road networks are stored in graphs with up to a billion vertices and edges. To analyse these large graphs, highly efficient algorithms and data structures are required.

Many heuristics for analysing road networks are highly efficient in practice. One explanation as to why many heuristics are efficient on road networks but inefficient on general graphs is that these heuristics can exploit the underlying properties of road networks. For example, experiments show that road networks have desirable underlying properties such as small separators~\cite{DBLP:conf/ipps/DellingGRW11,DBLP:journals/jea/DibbeltSW16} and bounded maximum degree~\cite{DBLP:journals/networks/BoyaciDL22,DBLP:conf/gis/EppsteinG08}. A common structure that exists for road networks that does not exist for general graphs is an efficient shortest path data structure~\cite{DBLP:conf/alenex/BastFMSS07,DBLP:conf/aips/BlumS18,DBLP:conf/wea/GeisbergerSSD08,DBLP:conf/alenex/Gutman04}. 

Many attempts have been made to explain why road networks have desirable properties such as small separators and efficient shortest path data structures. In the theory community, a popular approach is to argue that road networks belong to a certain \emph{graph class}, and then to prove a set of desirable properties for the graph class. Numerous graph classes have been proposed for road networks. When assessing these graph classes, it is natural to consider two criteria~\cite{DBLP:conf/gis/Eppstein017}.

\begin{criterion}[Usefulness]
\label{critieria:usefulness}
How well does this graph class explain the desirable properties of road networks? Do graphs in this graph class have small separators, or efficient shortest path data structures?
\end{criterion}

\begin{criterion}[Realism]
\label{criteria:realism}
How well does this graph class model real-world road networks? Do all, or most, real-world road networks belong to this graph class?
\end{criterion}

Planar graphs, bounded highway dimension graphs, and spanners are among the most popular graph classes for road networks\footnote{In Section~\ref{subsection:other_graph_classes}, we provide a more extensive overview of existing graph classes for modelling road networks.}. Next, we will assess these popular graph classes using Criteria~\ref{critieria:usefulness} and Criteria~\ref{criteria:realism}. Then, we will introduce low density graphs, which is our preferred graph class for road networks. 

\textbf{Planar graphs.} Planar graphs are graphs with no edge crossings. An advantage of planar graphs is that it has a broad algorithmic toolbox, including separators~\cite{DBLP:conf/focs/LiptonT77}, shortest path separators~\cite{DBLP:conf/soda/Klein02,DBLP:journals/jacm/Thorup04}, cycle separators~\cite{DBLP:journals/jcss/Miller86}, $r$-divisions~\cite{DBLP:journals/siamcomp/Frederickson87}, recursive divisions~\cite{DBLP:journals/jcss/HenzingerKRS97}, abstract Voronoi diagrams~\cite{DBLP:conf/soda/Cabello17}, tree covers~\cite{DBLP:conf/icalp/BartalFN19,DBLP:conf/focs/ChangCLMST23}, and low treewidth embeddings~\cite{DBLP:conf/soda/ChangCC0PP25,DBLP:conf/focs/Cohen-AddadLPP23}, to name a few. Shortest path separators have been used to construct efficient approximate shortest path data structures in planar graphs~\cite{DBLP:conf/soda/Klein02,DBLP:journals/jacm/Thorup04}. 

Unfortunately, \mbox{real-world} road networks are far from planar~\cite{boeing2020planarity,DBLP:conf/gis/EppsteinG08}. Non-planar features of road networks include bridges, tunnels and overpasses. 

\textbf{Bounded highway dimension\footnote{In 2025, an alternative definition of highway dimension was proposed~\cite{DBLP:conf/soda/FeldmannF25}, which we will discuss further in Section~\ref{subsection:other_graph_classes}.}.} Highway dimension~\cite{DBLP:journals/jacm/AbrahamDFGW16}, introduced in 2010, is graph parameter for road networks. A graph has highway dimension~$h$ if, for every~$r$ and every ball of radius~$4r$, there exists~$h$ vertices that cover all shortest paths with length at least~$r$ inside the ball. Bounded highway dimension guarantees provable running times for shortest path data structure heuristics, such as Contraction Hierarchies~\cite{DBLP:conf/wea/GeisbergerSSD08}, Transit Nodes~\cite{DBLP:conf/alenex/BastFMSS07}, and Hub Labelling~\cite{DBLP:journals/siamcomp/CohenHKZ03}. For many researchers, bounded highway dimension is the preferred graph class for modelling road networks~\cite{DBLP:conf/soda/ColletteI24,DBLP:conf/soda/FeldmannF25,DBLP:conf/icalp/FeldmannFKP15,DBLP:conf/icalp/0002KV19,DBLP:conf/soda/JayaprakashS22,DBLP:conf/soda/KosowskiV17}.

Unfortunately, it is unknown whether real-world road networks have bounded highway dimension. It is NP-hard to compute the highway dimension of a graph~\cite{DBLP:conf/iwpec/000119}, making the model difficult to verify. Experiments suggest that even a lower bound for the highway dimension is non-constant, and may be superlogarithmic~\cite{DBLP:conf/cocoon/BlumS18}. Moreover, road networks that follow a grid layout~\cite{wiki:gridplan} have a large highway dimension~\cite{DBLP:journals/jco/0001FS21,DBLP:conf/soda/FeldmannF25}. For example, in Figure~\ref{figure:grids_and_trees} (left), the $\sqrt n \times \! \sqrt n$-grid has a highway dimension of $h = \Theta(\sqrt n)$. 

\textbf{Spanners.} Spanners are a popular class of geometric networks. For any pair of vertices in a geometric \mbox{$t$-spanner}, the shortest path distance in the graph between the two vertices is at most~$t$ times the Euclidean distance between the vertices. Spanners represent good network design, through the absence of large detours. Geometric spanners have been studied extensively, see the reference textbook~\cite{DBLP:books/daglib/0017763}. Efficient approximate shortest path data structures have been constructed for spanners~\cite{DBLP:journals/talg/GudmundssonLNS08}. 

Unfortunately, many road networks are not spanners. Rivers, mountains and valleys impose physical constraints on road networks~\cite{rodrigue2020geography}, which can lead to large detours~\cite{DBLP:conf/gis/HuaXT18}. Even without physical constraints, the suburban cul-de-sac layout~\cite{wiki:suburbanisation} prioritises buildable area over connectivity~\cite{grammenos2002residential,zhang2013cul}, which can also lead to large detours. In Figure~\ref{figure:grids_and_trees}~(right), the $\sqrt n \times \! \sqrt n$ comb models a suburban cul-de-sac network, and has a spanning ratio of $t = \Theta(\sqrt n)$.

\begin{figure}[ht]
    \centering
    \includegraphics[width=0.7\textwidth]{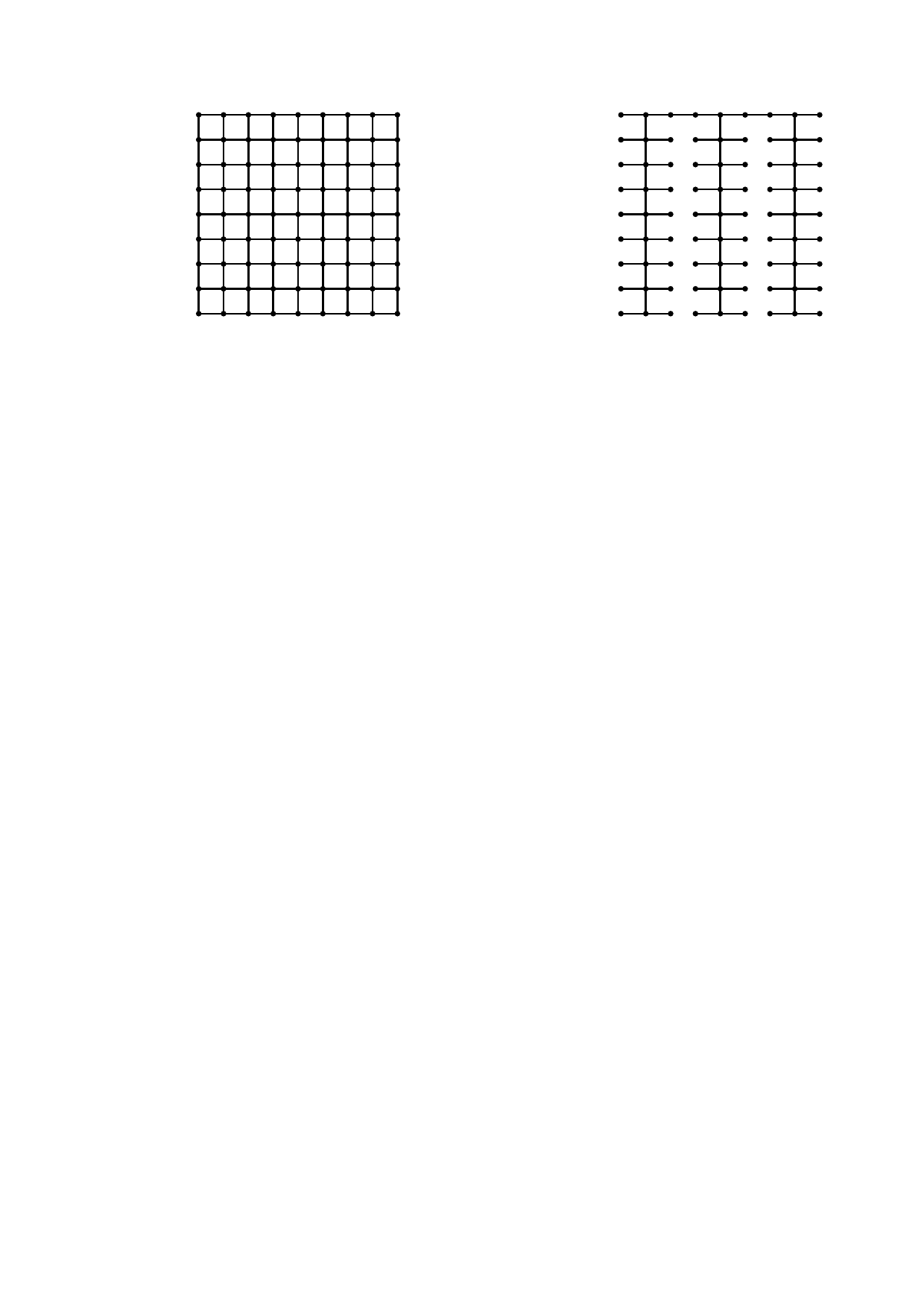}
    \caption{Left: A $\sqrt n \times \! \sqrt n$ grid has a highway dimension, skeleton dimension, and treewidth of $\Theta(\sqrt n)$. Right: A $\sqrt n \times \! \sqrt n$ comb has a spanning ratio, doubling constant, and highway dimension of $\Theta(\sqrt n)$.}
    \label{figure:grids_and_trees}
\end{figure}

\textbf{Low density graphs.} Low density~\cite{DBLP:books/daglib/0084325}, introduced in 1994, is a versatile geometric property that has been used to model obstacles~\cite{DBLP:journals/dcg/StappenOBV98}, trajectories~\cite{DBLP:journals/dcg/DriemelHW12} and road networks~\cite{DBLP:conf/alenex/ChenDGNW11}. Low density states that not too many large objects can be in any given area. We formally define low density in Section~\ref{section:definitions}. 

Low density graphs are considered to be a realistic graph class for modelling road networks. An advantage of low density graphs is that realistic features of road networks, such as edge crossings, grid-like structures (Figure~\ref{figure:grids_and_trees}, left) and comb-like structures (Figure~\ref{figure:grids_and_trees}, right), can all be modelled with low density graphs. Experiments support the claim that road networks are low density. It has been verified that the city-sized road networks of San Francisco, Athens, Berlin and San Antonio are all low density graphs~\cite{DBLP:conf/alenex/ChenDGNW11}.

Unfortunately, low density graphs only have a narrow algorithmic toolbox, limiting its usefulness as a graph class. Nonetheless, we believe that low density graphs have the potential to be a useful graph class for road networks. It was shown in~2022 that low density graphs have small separators~\cite{DBLP:conf/soda/LeT22}. Map matching under the Fr\'echet distance has been studied on low density graphs~\cite{DBLP:conf/compgeom/BuchinBG0W24,DBLP:conf/alenex/ChenDGNW11}. But little else is known. In particular, low density graphs lack an efficient shortest path data structure. Shortest path data structures are useful in both theory and practice, and are motivated by applications such as navigation systems~\cite{blog:uber,geisberger2015route,wiki:graphhopper}.

\begin{open}
    \label{open}
    Is there an efficient shortest path data structure for low density graphs?
\end{open}

Resolving Open Problem~\ref{open} would help bridge the gap, in terms of usefulness, between low density graphs and other popular graph classes.

\subsection{Contributions}
\label{subsection:contributions}

We resolve Open Problem~\ref{open} in the affirmative. We construct an approximate shortest path data structure for low density graphs, which has $O(n \varepsilon^{-4} \log n)$ size and can answer $(1+\varepsilon)$-approximate shortest path queries in $O(1)$ time. Along the way, we construct a~$(1/\varepsilon)$-well-separated pair decomposition of~$O(n \varepsilon^{-4} \log n)$ size. 

The well-separated pair decomposition is a highly influential structure in computational geometry, and has led a wide range of approximation algorithms for proximity problems in Euclidean spaces~\cite{DBLP:conf/soda/CallahanK93,DBLP:journals/jacm/CallahanK95}, doubling spaces~\cite{DBLP:journals/siamcomp/Har-PeledM06,DBLP:conf/stoc/Talwar04}, and unit disc graphs~\cite{DBLP:conf/stoc/GaoZ03}. We hope that our results will pave the way for more tools to be developed for low density graphs, which would further increase the usefulness of low density as a graph class. 

\subsection{Previous work}
\label{subsection:previous_work}

Gao and Zhang~\cite{DBLP:conf/stoc/GaoZ03} construct a well-separated pair decomposition for unit disc graphs. We regard this to be one of the most relevant results to our work. To the best of our knowledge, their result is the only previous work that constructs a well-separated pair decomposition for a graph class. Gao and Zhang~\cite{DBLP:conf/stoc/GaoZ03} show how to transform a well-separated pair decomposition of a graph into an approximate shortest path data structure. In Section~\ref{subsection:approximate_distance_oracle}, we apply essentially the same transformation.

The well-separated pair decomposition is a highly influential structure. Applications of the well-separated pair decomposition include spanners~\cite{DBLP:conf/soda/CallahanK93}, approximate nearest neighbour queries~\cite{DBLP:journals/jacm/AryaMNSW98}, tree covers~\cite{DBLP:conf/stoc/AryaDMSS95}, and approximate distance oracles~\cite{DBLP:conf/stoc/GaoZ03}. For a thorough treatment of the topic, refer to the textbook chapters by Har-Peled~\cite[Chapter~3]{har2011geometric} and by Narasimhan and Smid~\cite[Chapter~9]{DBLP:books/daglib/0017763}. Since the well-separated pair decomposition leads to efficient approximation algorithms for a wide range of proximity problems, a natural open question~\cite[Chapter~9.6]{DBLP:books/daglib/0017763} is: for which metrics spaces does there exist a well-separated pair decomposition of subquadratic size?

To the best of our knowledge, the only metrics in which the well-separated pair decomposition has previously been studied are Euclidean spaces, unit disc graphs, doubling spaces, and $c$-packed graphs. For Euclidean spaces, Callahan and Kosaraju~\cite{DBLP:journals/jacm/CallahanK95} constructed a well-separated pair decomposition with $O(n\varepsilon^{-d})$ pairs and separation constant~$(1/\varepsilon)$, for point sets in~$d$-dimensional Euclidean space. For unit disc graphs, Gao and Zhang~\cite{DBLP:conf/stoc/GaoZ03} constructed a well-separated pair decomposition with $O(n \varepsilon^{-4} \log n)$ pairs and a separation constant of~$(1/\varepsilon)$. Har-Peled, Raichel and Robson~\cite{har2025well} provided an improved construction with $O(n \varepsilon^{-2} \log n)$ pairs. For doubling spaces, Talwar~\cite{DBLP:conf/stoc/Talwar04}  provided a well-separated pair decomposition with $O(n \varepsilon^{-ddim} \log \Delta)$, where $\Delta$ is the spread of the point set, $ddim$ is the doubling dimension, and $(1/\varepsilon)$ is the separation constant. Har-Peled and Mendel~\cite{DBLP:journals/siamcomp/Har-PeledM06} provided an improved, randomised construction, with~$O(n\varepsilon^{-ddim})$ pairs. For $c$-packed graphs, Deryckere, Gudmundsson, van Renssen, Sha and Wong~\cite{DBLP:conf/wads/DeryckereGRSW25} constructed a well-separated pair decomposition with $O(c^3 n \varepsilon^{-1})$ pairs.

Low density was among the first realistic input models to be considered in computational geometry, see the introduction of~\cite{DBLP:conf/compgeom/BergKSV97}. In van der Stappen's thesis~\cite{DBLP:books/daglib/0084325}, he introduced low density and proved that non-intersecting fat objects formed a low density environment. A $\lambda$-low-density environment has at most~$\lambda$ large objects in any given area, where an object is large if it has greater diameter than that of the given area. Schwarzkopf and Vleugels~\cite{DBLP:journals/ipl/SchwarzkopfV96} constructed a range searching data structure for low density environments. Van der Stappen, Overmars, de Berg and Vleugels~\cite{DBLP:journals/dcg/StappenOBV98} studied motion planning in low density environments. Berretty, Overmars and van der Stappen~\cite{DBLP:journals/comgeo/BerrettyOS98} considered moving obstacles in dynamic low density environment. Har-Peled and Quanrad~\cite{DBLP:journals/siamcomp/Har-PeledQ17} studied the intersection graphs of low density environments. It is worth noting that the definition of low density graphs used in~\cite{DBLP:journals/siamcomp/Har-PeledQ17} different from our definition, but similar in spirit.

Driemel, Har-Peled and Wenk~\cite{DBLP:journals/dcg/DriemelHW12} introduced a definition of low density for a set of edges. A set of edges is $\lambda$-low-density if there are at most $\lambda$ long edges intersecting any given disc, refer also to Section~\ref{section:definitions}. Driemel, Har-Peled and Wenk~\cite{DBLP:journals/dcg/DriemelHW12} used low density to study polygonal curves under the Fr\'echet distance. They provided a near-linear time algorithm to compute the Fr\'echet distance between a pair of low density curves, by showing that after simplifying both curves, the free space diagram between the two curves has linear complexity. The $c$-packedness property was also introduced in Driemel, Har-Peled and Wenk~\cite{DBLP:journals/dcg/DriemelHW12}. 

Chen, Driemel, Guibas, Nguyen and Wenk~\cite{DBLP:conf/alenex/ChenDGNW11} were the first to consider low density graphs. They considered the map matching problem under the Fr\'echet distance on low density graphs. They additionally required that the matched trajectory is $c$-packed. Under these assumptions, they  provided a near-linear time algorithm for the map matching problem. Buchin, Buchin, Gudmundsson, Popov and Wong~\cite{DBLP:conf/compgeom/BuchinBG0W24} constructed a data structure for map matching under the Fr\'echet distance. Their data structure has subquadratic size, and answers map matching queries in sublinear time, provided that the road network is a low density spanner.

De Berg, Katz, van der Stappen and Vleugels~\cite{DBLP:conf/compgeom/BergKSV97} were first to consider the problem of computing the $\lambda$-low-density value. Given a set of $n$ non-intersecting objects, their algorithm computes the parameter $\lambda$ in $O(n \log^3 n + \lambda n \log^2 n + \lambda^2 n)$ time. Chen, Driemel, Guibas, Nguyen and Wenk~\cite{DBLP:conf/alenex/ChenDGNW11} provide an efficient implementation for computing the $\lambda$-low-density value of a graph, and verified that $\lambda \leq 28$ for the city-sized road networks of San Francisco, Athens, Berlin and San Antonio. Gudmundsson, Huang and Wong~\cite{DBLP:conf/cocoon/GudmundssonHW23} provided an $O(n \log n + \lambda n/\varepsilon^{4})$ time, $(2+\varepsilon)$-approximation algorithm for computing~$\lambda$. 

Low density graphs have small separators. Le and Than~\cite{DBLP:conf/soda/LeT22} introduced the $\tau$-lanky property, which is similar to and is implied by the $\lambda$-low-density property. See Definitions~\ref{definition:low-density} and~\ref{definition:lanky}. They proved that, in $\mathbb R^d$, any $\tau$-lanky graph contains a separator of size $O(n^{1-1/d})$.  Their result implies that a low density graph embedded in the Euclidean plane has an $O(\sqrt n)$-sized separator.

Approximate distance oracles are highly efficient shortest path data structures. Typically, an approximate distance oracle has a constant approximation ratio, subquadratic size, and constant query time. See Definition~\ref{definition:ado}. For general graphs, Thorup and Zwick~\cite{DBLP:conf/stoc/ThorupZ01} constructed an approximate distance oracle with an approximation ratio of $2k-1$, a size of $O(kn^{1+1/k})$, and a query time of $O(k)$. They also showed that for general graphs, any approximate distance oracle with an approximation ratio of $2k+1$ requires $\Omega(kn^{1+1/k})$ size. For planar graphs, Thorup~\cite{DBLP:journals/jacm/Thorup04} and Klein~\cite{DBLP:conf/soda/Klein02} independently constructed a $(1+\varepsilon)$-approximate distance oracle with $O(n \varepsilon^{-1} \log n)$ size and $O(\varepsilon^{-1})$ query time. Follow up works~\cite{DBLP:journals/algorithmica/ChanS19,DBLP:journals/tcs/GuX19,DBLP:conf/icalp/KawarabayashiKS11,DBLP:conf/soda/KawarabayashiST13,DBLP:conf/soda/Le23,DBLP:conf/soda/Wulff-Nilsen16} improved the size and query time trade-off. The state of the art result of Le and Wulff-Nilsen~\cite{DBLP:conf/focs/LeW21} is a $(1+\varepsilon)$-approximate distance oracle with~$O(n \varepsilon^{-2})$ size and~$O(\varepsilon^{-2})$ query time. For $K_r$-minor-free graphs, Chang, Conroy, Le, Milenkovi\'c, Solomon and Than~\cite{DBLP:conf/soda/ChangCLMST24} provided a $(1+\varepsilon)$-approximate distance oracle with~$O(n2^{\varepsilon^{-1}r^{O(r)}})$ size and~$O(2^{\varepsilon^{-1}r^{O(r)}})$ query time.
For geometric $t$-spanners, Gudmundsson, Levcopoulos, Narasimhan and Smid~\cite{DBLP:journals/talg/GudmundssonLNS08} constructed a $(1+\varepsilon)$-approximate distance oracle with $O(n t^4 \varepsilon^{-4} \log n)$ size and $O(t^6 \varepsilon^{-4})$ query time, see also \cite[Chapter~17.3]{DBLP:books/daglib/0017763}. For unit disc graphs, Gao and Zhang~\cite{DBLP:conf/stoc/GaoZ03} constructed a $(1+\varepsilon)$-approximate distance oracle with $O(n \varepsilon^{-4} \log n)$ size and $O(1)$ query time. Chan and Skrepetos~\cite{DBLP:journals/jocg/ChanS19a} improved the construction time of the approximate distance oracle for unit disc graphs. For bounded highway dimension graphs, Abraham, Delling, Fiat, Goldberg and Werneck~\cite{DBLP:journals/jacm/AbrahamDFGW16} constructed a distance oracle with $O(nh \log \Delta)$ size and $O(h \log \Delta)$ query time, where~$h$ is the highway dimension and~$\Delta$ is the spread.

\subsection{Other graph classes}
\label{subsection:other_graph_classes}

In Section~\ref{subsection:motivation}, we compared low density graphs to three popular graph classes, that is, planar graphs, bounded highway dimension graphs, and spanners. In this section, we will discuss some of the other graph classes that have also been considered for modelling road networks. 

The \emph{treewidth} of a graph measures how close to a tree the graph is. Specifically, a graph has treewidth~$tw$ if the size of the largest vertex set in its tree decomposition is~$tw$. Graphs with treewidth~$tw$ have separators of size~$O(tw)$~\cite{DBLP:journals/jct/DvorakN19} and have a shortest path data structures of $O(tw^3 n)$ size and $O(tw^3 \alpha(n))$ query time~\cite{DBLP:journals/algorithmica/ChaudhuriZ00}. Here, $\alpha(n)$ denotes the inverse Ackermann function. Unfortunately, the treewidth of the grid graph in Figure~\ref{figure:grids_and_trees} (left) is $\Theta(\sqrt n)$, and experiments suggest that $tw = \Omega(\sqrt[3] n)$ on real-world road networks~\cite{DBLP:journals/jea/DibbeltSW16,DBLP:conf/icdt/ManiuSJ19}.

The \emph{doubling dimension} of a metric measures its dimensionality by using a packing property. Formally, a metric has doubling dimension $ddim$ if any ball of radius~$r$ can be covered with at most $2^{ddim}$ balls of radius $r/2$. Talwar~\cite{DBLP:conf/stoc/Talwar04} and Har-Peled and Mendel~\cite{DBLP:journals/siamcomp/Har-PeledM06} constructed $(1+\varepsilon)$-approximate distance oracles for doubling metrics, with $O(n \log \Delta /\varepsilon^{ddim})$ size and $O(n /\varepsilon^{ddim})$ size, respectively. However, verifying the doubling dimension of a graph is NP-hard~\cite{DBLP:journals/siamdm/GottliebK13}, and the comb graph in Figure~\ref{figure:grids_and_trees} (right) has a doubling constant of $2^{ddim} = \Theta(\sqrt n)$~\cite[Chapter~5]{DBLP:phd/basesearch/Blum23}.  

The \emph{skeleton dimension} of a graph is a closely related parameter to the highway dimension. The skeleton dimension is the maximum width of the so-called ``skeleton'' of all shortest path trees. Kosowski and Viennot~\cite{DBLP:conf/soda/KosowskiV17} proved that $k < h$, where $k$ is the skeleton dimension and $h$ is the highway dimension. They also constructed a shortest path data structure of size~$O(nk \log \Delta)$ and query time of $O(k \log \Delta)$, where $\Delta$ is the spread of the point set. Unfortunately, the skeleton dimension of a grid graph in Figure~\ref{figure:grids_and_trees} (left) is $k = \Theta(\sqrt n)$. Blum and Storandt~\cite{DBLP:conf/cocoon/BlumS18} computed the skeleton dimension for real-world road networks and found that $k \leq 114$, which for their road network sizes corresponded to $\sqrt n \gg k \gg \log n$.

The \emph{$c$-packedness} of a graph is the smallest value of~$c$ such that, for any radius~$r$ and any disc of radius~$r$, the total length of edges inside the disc is at most $c \cdot r$. Deryckere, Gudmundsson, van Renssen, Sha and Wong~\cite{DBLP:conf/wads/DeryckereGRSW25} constructed, for $c$-packed graphs, a well-separated pair decomposition of size $O(c^3 n \varepsilon^{-1})$, an exact distance oracle of size $O(cn \log n)$, a tree cover with $O(c^6 \varepsilon^{-2})$ trees, and an approximate distance oracle of size $O(c^6 n \varepsilon^{-2})$. A $c$-packed graph has $O(c)$ treewidth and $O(c)$ doubling dimension, which makes it one of the most useful but least realistic graph classes for road networks. 

It is worth noting that low density graphs can have non-constant highway dimension, non-constant skeleton dimension and non-constant treewidth, see Figure~\ref{figure:grids_and_trees} (left). They can also have non-constant spanning ratio and non-constant doubling dimension, see Figure~\ref{figure:grids_and_trees} (right). Low density graphs can be non-planar. Therefore, any results for these other graph classes do not immediately extend to low density graphs.

Previous works have considered the shortcomings of popular graph classes. Eppstein and Goodrich~\cite{DBLP:conf/gis/EppsteinG08} noted that road networks are non-planar, and empirically measured the number of edge crossings to be proportional to~$\sqrt n$. Blum, Funke and Storandt~\cite{DBLP:journals/jco/0001FS21} and Feldmann and Filtser~\cite{DBLP:conf/soda/FeldmannF25} claimed that road networks contain grid graphs, and therefore a realistic model for road networks should also contain grid graphs. Eppstein and Gupta~\cite{DBLP:conf/gis/Eppstein017} stated the need for a graph class that is realistic, i.e. accurately models real-world road networks, and is useful, i.e. leads to efficient algorithms. 

Eppstein and Goodrich~\cite{DBLP:conf/gis/EppsteinG08} proposed a realistic alternative to planar graphs. They defined a \emph{disc neighbourhood system} to be a subgraph of a disc intersection graph where the set of discs have at most constant ply. They showed that disc neighbourhood systems admit $O(\sqrt n)$-sized separators, similar to planar graphs and real-world road networks. They also verified that on real-world road networks, the maximum ply is indeed constant, provided that large enough discs may be ignored. 

Eppstein and Gupta~\cite{DBLP:conf/gis/Eppstein017} proposed another realistic alternative to planar graphs. The \emph{crossing degeneracy} of a graph is the maximum minimum degree over all subgraphs of its crossing graph. They proved that the graph class admits $O(\sqrt n)$-sized separators, and verified that on real-world road networks, the crossing degeneracy is at most~6. We note that low density implies constant crossing degeneracy.

Blum, Funke and Storandt~\cite{DBLP:journals/jco/0001FS21} proposed a realistic alternative to highway dimension, skeleton dimension, and treewidth. Its main advantage is that it includes grid graphs. A graph has \emph{bounded growth} if, for all vertices~$u$ and all radii~$r$, the number of vertices within distance~$r$ of~$u$ is at most~$cr^2$ for some constant~$c$. The authors showed that bounded growth graphs have a shortest path data structure of size~$O(n \sqrt n)$ and a query time of~$O(\sqrt n)$. Funke and Storandt~\cite{DBLP:conf/isaac/FunkeS15} verified experimentally that the German road network is bounded growth for~$c=1$.

Feldmann and Filtser~\cite{DBLP:conf/soda/FeldmannF25} proposed a more general definition of highway dimension that includes grid graphs and doubling metrics. The main difference is, for every~$r$ and every ball of radius $\approx 4r$, instead of covering the shortest path between any pair of points at distance $>r$, the hitting set covers at least one approximate shortest path between the same pairs of points. The authors develop an extensive algorithmic toolbox using this definition. Unfortunately, it is unknown how realistic it is. There is currently no polynomial time algorithm for computing the highway dimension, moreover, it is merely assumed that the hitting sets can be computed in polynomial time. Since the highway dimension of a graph currently can not be measured with a polynomial time algorithm, it is difficult to verify the model empirically. We also note that the highway dimension of the comb graph in Figure~\ref{figure:grids_and_trees} (right) is $\Theta(\sqrt n)$.

Finally, we summarise our comparison between graph classes. Numerous graph classes have been proposed, some of these graphs are useful while others are realistic. Many of the graph classes have a shortest path data structure of $O(n \polylog n)$ size and $O(\polylog n)$ query time, which matches the practical performance of the most efficient heuristics~\cite{DBLP:conf/aips/BlumS18}. Some of the graph classes admit an~$O(\sqrt n)$ sized separator, which is a well-known property of real-world road networks~\cite{DBLP:conf/ipps/DellingGRW11,DBLP:journals/jea/DibbeltSW16}. For some graph classes, there exists a polynomial time algorithm to compute the graph parameter on real-world road networks, with empirical studies showing that the parameter is small in practice. Features such as grid graphs, comb graphs, and edge crossings can be modelled by many graph classes, but only a few can model all three features. We propose a model that unifies all three features. The SELG graph class is a generalisation of both the grid and comb graphs when~$\ell=1$, and includes edge crossings for $\ell \geq \sqrt 2$.

\begin{definition}[SELG]
\label{definition:SELG}
A short-edged lattice graph, or SELG, is a graph where all vertices are lattice vertices in $\mathbb Z \times \mathbb Z$, and all edges have length at most~$\ell$, where~$\ell$ is a constant.
\end{definition}

To the best of our knowledge, low-density graphs are the only known graph class that simultaneously $(i)$ admits an approximate distance oracle of size $O(n \polylog n)$ with $O(\polylog n)$ query time, $(ii)$ admits separators of size $O(\sqrt n)$, $(iii)$ has had the relevant graph parameter empirically measured on real-world graphs using a polynomial time algorithm, with results confirming that the parameter is small in practice, and $(iv)$ subsumes the SELG graph class and therefore models grid, comb and edge-crossing features.

\begin{table}[ht]
    \centering
    \renewcommand{\arraystretch}{1.1}
    \begin{tabular}{|c||c|c||c|c|c|c|c|}
    \hline
    & \multicolumn{2}{c||}{Useful} & \multicolumn{5}{c|}{Realistic} \\ \hline
    & Distance oracle & Separator & Empirical & SELG  & Grid & Comb & Cross \\ 
    \hline
    Planar & \checkmark & \checkmark &&& \checkmark & \checkmark & \\ \hline
    Highway dimension & \checkmark && && ~\checkmark && \checkmark \\ \hline
    Spanners &\checkmark &&&& \checkmark & & \checkmark \\ \hline
    Treewidth &\checkmark &\checkmark &&&& \checkmark & \checkmark \\ \hline
    Doubling dimension &\checkmark &&&&\checkmark && \checkmark\\ \hline
    Skeleton dimension &\checkmark &&&&&\checkmark& \checkmark \\ \hline
    $c$-packed &\checkmark & \checkmark &&&&& \checkmark \\ \hline
    Disc neighbourhood &&\checkmark  &\checkmark &\checkmark&\checkmark &\checkmark & \checkmark  \\ \hline
    Crossing degeneracy &&\checkmark  &\checkmark &\checkmark&\checkmark &\checkmark & \checkmark  \\ \hline
    Bounded growth &&&\checkmark &\checkmark &\checkmark &\checkmark & \checkmark \\ \hline
    Low density &\checkmark &\checkmark &\checkmark &\checkmark &\checkmark &\checkmark & \checkmark \\ \hline
    \end{tabular}   
    \caption{Comparing graph classes for road networks, using Criterion~\ref{critieria:usefulness} and Criterion~\ref{criteria:realism}.}
    \label{table:benchmarks} 
\end{table}

\section{Definitions}
\label{section:definitions}

Low density is a geometric property that states that not too many large objects can be in any given area. In the case of graphs, the objects are edges, the size of an object is its Euclidean length, and the given area is a disc. To define low density, we first define the density of a disc with respect to a graph. Given a graph and a disc in the plane, we define the density of the disc to be the number of long edges in the graph that intersect it, where an edge is defined as long if its length is at least the radius of the disc. Given a graph, the graph is $\lambda$-low-density if every discs in the plane has density $\leq \lambda$. Refer to Figure~\ref{figure:low_density_definition}.

\begin{figure}[ht]
    \centering
    \includegraphics[width=0.8\textwidth]{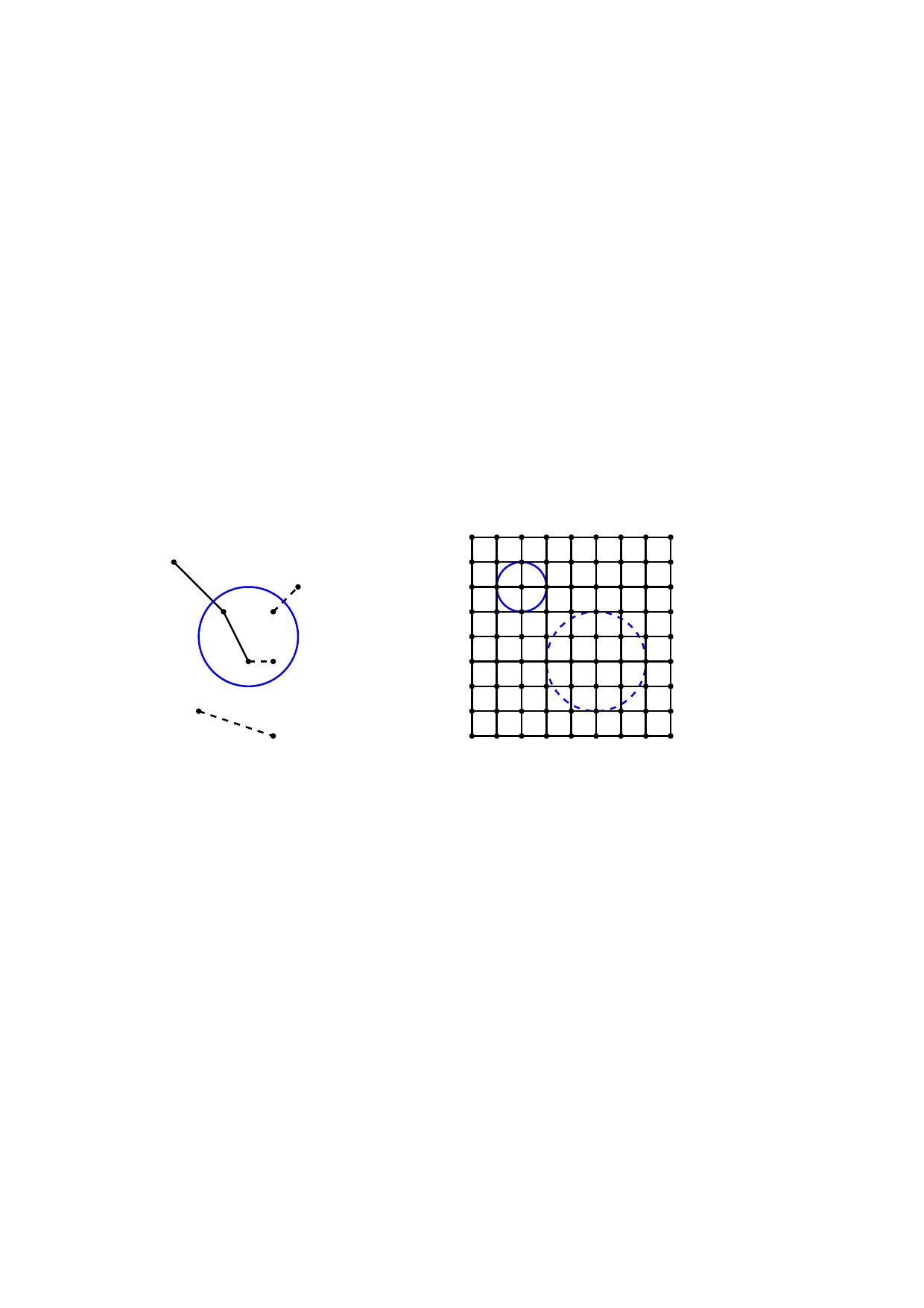}
    \caption{Left: The solid edges count towards the density of the blue disc, but the dashed edges do not. Right: A unit grid graph is a low density graph, since balls of radius~$r$ have constant density for all $r \leq 1$, and have zero density for all $r > 1$.}
    \label{figure:low_density_definition}
\end{figure}

We formally state the low density property in Definition~\ref{definition:low-density}. 

\begin{definition}[Low density graph]
    \label{definition:low-density}
    Let $G=(V,E)$ be a graph with a straight line embedding in $\mathbb R^2$, and let $\lambda \in \mathbb N$. We say that~$G$ is $\lambda$-low-density if, for all $r \in \mathbb R^+$ and all balls $B$ of radius $r$, there are at most $\lambda$ edges in~$E$ that intersect~$B$ and have length at least~$r$. If $\lambda = O(1)$, we say that~$G$ is a low density graph.
\end{definition}

Note that a subgraph of a low density graph is also low density. Next, we state the $\tau$-lanky property, which is similar to and is implied by $\lambda$-low-density.

\begin{definition}[Lanky graph~\cite{DBLP:conf/soda/LeT22}]
    \label{definition:lanky}
    Let $G=(V,E)$ be a graph embedded in $\mathbb R^2$, and let $\tau \in \mathbb N$. We say that $G$ is $\tau$-lanky if, for all $r \in \mathbb R^+$ and all balls $B$ of radius $r$ centred at a vertex~$v \in V$, there are at most $\tau$ edges in~$E$ that have length at least~$r$, have one endpoint inside~$B$, and have one endpoint outside~$B$.
\end{definition}

The result of Le and Than~\cite{DBLP:conf/soda/LeT22} states that $\tau$-lanky graphs have a separator of size~$O(\tau \sqrt n)$, see Fact~\ref{fact:balanced_separator_lanky}. Any~$\lambda$-low-density graph is~$\lambda$-lanky, so the separator result also applies to $\lambda$-low-density graphs. 

Recall from Section~\ref{subsection:contributions} that one of our contributions is to construct a well-separated pair decomposition for low density graphs. Next, we define the well-separated pair decomposition for a general metric~$M$. In this paper, we have two types of well-separated pair decompositions, those where the metric is the Euclidean metric, and those where the metric is the low density graph metric. See Figure~\ref{figure:wspd}. 

\begin{definition}[Well-separated pair]
    Let $A$ and $B$ be point sets in a metric space $M$. Define the distance between sets to be $d_M(A,B) = \min_{a \in A, b \in B} d_M(a,b)$, and define the diameter of~$A$ to be $diam_M(A) = \max_{a,a' \in A} d_M(a,a')$. Then the pair $(A,B)$ is a $(1/\varepsilon)$-well-separated pair in $M$ if 
    $$(1/\varepsilon) \cdot \max(\diam_M(A),\diam_M(B)) \leq d_M(A,B).$$
\end{definition}

\begin{definition}[Pair decomposition]
    Let $V$ be a point set. A collection of pairs $\mathcal P = \{(A_i, B_i)\}_{i=1}^k$ is a pair decomposition of $V$ if for any two distinct points $u, v \in V$, there is exactly one pair $(A_i,B_i) \in \mathcal P$ such that either $(i)$ $u \in A_i$ and $v \in B_i$, or $(ii)$ $v \in A_i$ and $u \in B_i$. 
\end{definition}

\begin{definition}[Well-separated pair decomposition, WSPD]
    \label{definition:wspd}
    Let $V$ be a point set in a metric space $M$. A well-separated pair decomposition in $M$ with separation constant $1/\varepsilon$ (alternatively, a \mbox{$(1/\varepsilon)$-WSPD} in $M$) is a pair decomposition $\mathcal P = \{(A_i, B_i)\}_{i=1}^k$ such that every pair $(A_i, B_i)$ is $(1/\varepsilon)$-well-separated in $M$. 
\end{definition}

\begin{figure}[ht]
    \centering
    \includegraphics[width=0.8\textwidth]{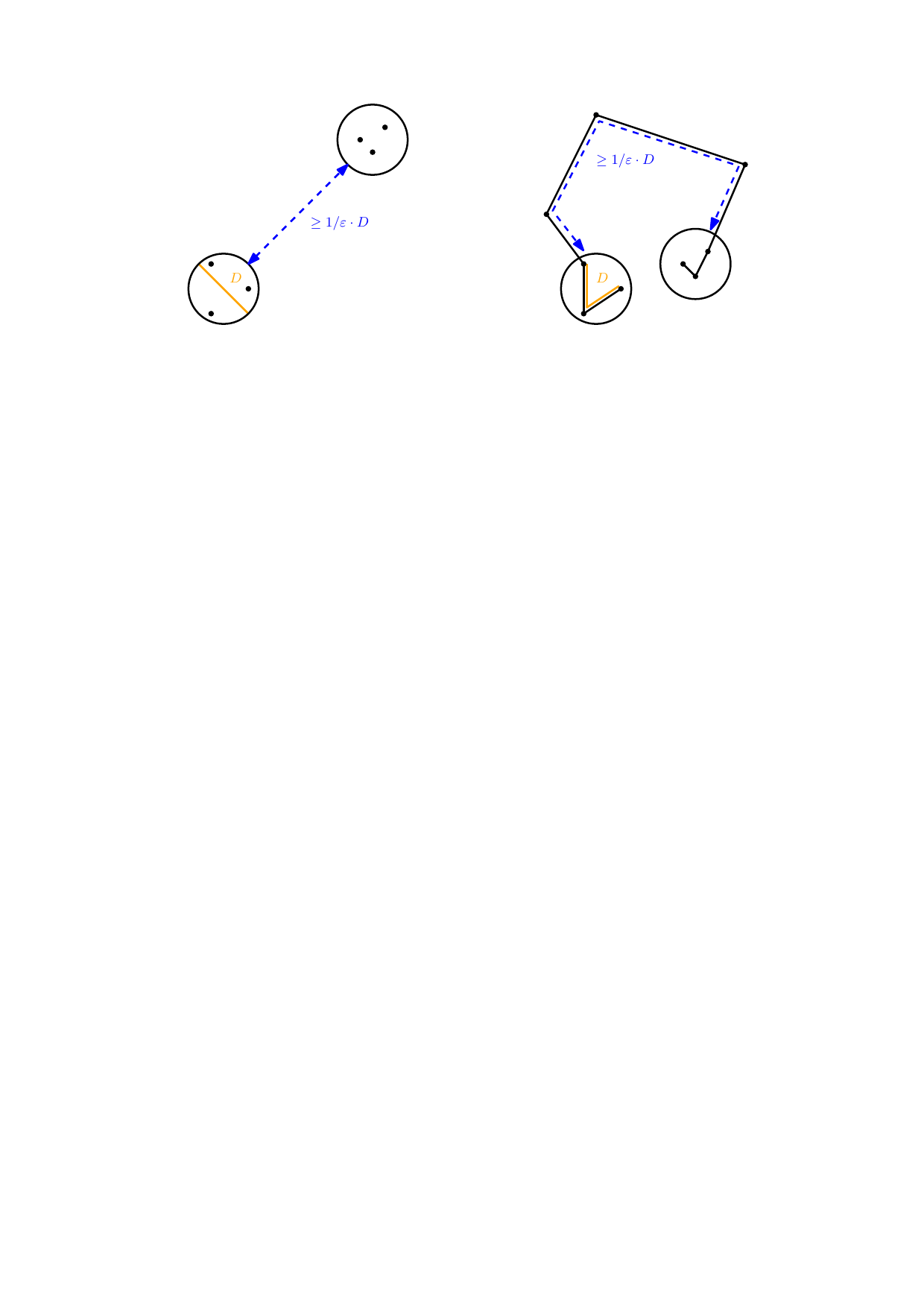}
    \caption{A well-separated pair under the Euclidean metric (left) and under a graph metric (right).
    }
    \label{figure:wspd}
\end{figure}

Recall also from Section~\ref{subsection:contributions} that another of our contributions is to construct an approximate distance oracle for low density graphs. Next, we define an approximate distance oracle over a set of vertices~$V$ and a general metric~$M$. 

\begin{definition}[Approximate distance oracle]
    \label{definition:ado}
    Let $V$ be a point set in a metric space~$M$. A $(1+\varepsilon)$-approximate distance oracle is a data structure that, given a pair of points $u,v \in V$, returns in $O(1)$ query time a $(1+\varepsilon)$-approximation of the distance $d_M(u,v)$. 
\end{definition}

For approximate distance oracles with $O(1)$ query time, it is common to assume the word RAM model of computation~\cite{DBLP:conf/stoc/GaoZ03,DBLP:journals/siamcomp/Har-PeledM06,DBLP:conf/soda/LeT22,DBLP:journals/jacm/Thorup04,DBLP:conf/stoc/ThorupZ01}. In this paper, we assume the unit-cost floating-point word RAM model of Har-Peled and Mendel~\cite{DBLP:journals/siamcomp/Har-PeledM06}. The floating-point model allows us to efficiently represent point sets that have large spread.

\begin{definition}[Unit-cost floating-point word RAM~\cite{DBLP:journals/siamcomp/Har-PeledM06}]
    \label{definition:word_ram}
    Suppose the input consists of $\poly(n)$ real numbers in $[-\Phi,-\Phi^{-1}] \cup [\Phi^{-1},\Phi]$ and let $t \in \mathbb N$ be an accuracy parameter. The unit cost floating-point word RAM model has words of length $O(\log n + \log \log \Phi + t)$ which can represent any floating point number of the form $u \cdot x \cdot 2^y$, where $u \in \{1,-1\}$, $x 2^{-t} \in [2^t,2^{t+1}) \cap \mathbb N$, and $y \in [-n^{O(1)} \log^{O(1)} \Phi, n^{O(1)} \log^{O(1)} \Phi] \cap \mathbb Z$. The model allows arithmetic, floor, logarithm, exponent, and table lookup operations in unit time.
\end{definition}

Similar to~\cite{DBLP:journals/siamcomp/Har-PeledM06}, we avoid rounding errors in the floating point model by assuming that our algorithm's approximation error is large relative to the model's accuracy parameter. In particular, we will assume that~$\varepsilon^{O(1)} > 2^{-t}$ and ignore subsequent numerical issues.

\section{Technical overview}
\label{section:technical_overview}

Our high-level algorithm is relatively simple. It consists of two steps.

The first step is to consider the vertices of the graph as points in $\mathbb R^2$, and to construct their WSPD under the $\mathbb R^2$ metric. We observe that well-separated pairs under the $\mathbb R^2$ metric may not be well-separated under the graph metric because the graph diameter of a set may be much larger than its Euclidean diameter, causing the distances within a set to be large relative to the distance between the sets. See Figure~\ref{figure:no_longer_well_separated} (left). 

The second step is to consider a pair~$(A_i,B_i)$ in the Euclidean WSPD, and to partition the sets~$A_i$ and~$B_i$ into clusters. In particular, we partition~$A_i$ into sets~$\{C_{ij}\}$ so that the graph diameter of $C_{ij}$ is at most the Euclidean diameter of $A_i$. We partition $B_i$ into sets $\{D_{ik}\}$ analogously. See Figure~\ref{figure:no_longer_well_separated} (right). Then, any pair of clusters $C_{ij}$ and $D_{ik}$ are well-separated with respect to the graph metric, since
$
    (1/\varepsilon) \cdot \max(\diam_G(C_{ij}),\diam_G(D_{ik}))
    \leq (1/\varepsilon) \cdot \max(\diam_2(A_i),\diam_2(B_i))
    \leq d_2(A_i,B_i) 
    \leq d_G(C_{ij},D_{ik})
$.
Here, $d_2$ refers to the Euclidean metric and $\diam_2$ refers to the Euclidean diameter. Collecting all pairs of clusters over all $(A_i, B_i)$ gives us the pairs $\{(C_{ij}, D_{ik})\}$. We return $\{(C_{ij}, D_{ik})\}$ as our Graph WSPD. This completes the overview of our algorithm. 

\begin{figure}[ht]
    \centering
    \includegraphics[width=0.8\textwidth]{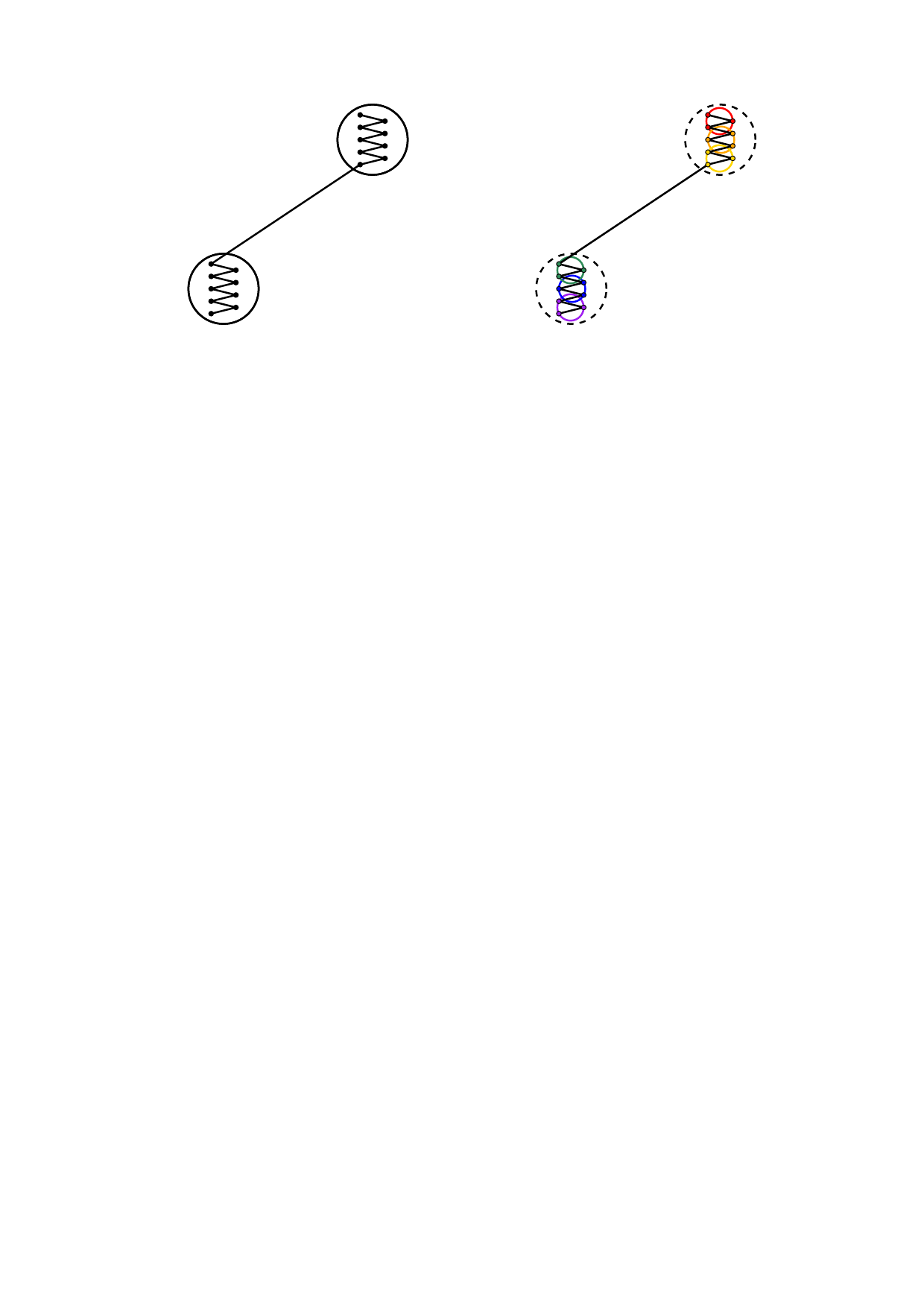}
    \caption{Left: A pair of sets that are well-separated in $\mathbb R^2$ may not be well-separated in the graph metric. Right: Clustering the sets in the Euclidean WSPD to reduce their graph diameter.}
    \label{figure:no_longer_well_separated}
\end{figure}

Surprisingly, this simple algorithm is all that is required to efficiently construct a WSPD of $O(n \lambda^2 \log n)$ size, provided that the graph is $\lambda$-low-density. The difficulty lies in the analysis. Note that so far, we have not yet used the fact that the graph is $\lambda$-low-density. In fact if~$G$ is not low density, then the number of pairs in~$\{(C_{ij}, D_{ik})\}$ may still be~$\Omega(n^2)$.

In Section~\ref{section:wspd_size}, we analyse the size of the Graph WSPD. In our analysis, we require a bound on the number of clusters~$\{C_{ij}\}$ that~$A_i$ partitions into. We obtain such a bound by combining a low density length bound by Driemel, Har-Peled and Wenk~\cite{DBLP:journals/dcg/DriemelHW12} with a clustering bound by Gudmundsson, Seybold and Wong~\cite{gudmundsson2024map}. See Fact~\ref{fact:low_density_implies_packed_weak} and Lemma~\ref{lemma:clustering_weak}, respectively. Summing up the bound on the number of clusters over all Euclidean WSPD pairs~$(A_i,B_i)$ and using the AM-GM inequality~\cite{wiki:amgm}, we get that the size of the Graph WSPD is at most $\lambda^2 \sum_i (|6A_i| + |6B_i|)$. Here, $|6A_i|$ denotes the number of points inside the square~$6S$, where~$S$ is the quadtree cell associated with the set $A_i$, and $6 S$ is the square~$S$ expanded about its centre by a factor of 6.

Then, in Section~\ref{subsection:bounded_spread}, we analyse the size of the Graph WSPD in the case where the spread~$\Phi$ is bounded, see Definition~\ref{definition:spread}. In this case, there exists a Euclidean WSPD~$\{(A_i,B_i)\}_{i=1}^k$ with a weight bound of $\sum_i (|A_i| + |B_i|) = O(n \log \Phi)$, see Har-Peled~\cite[Lemma~3.29]{har2011geometric}. In Theorem~\ref{theorem:bounded_spread_wspd}, we adapt the weight bound to also bound $\sum_i (|6A_i| + |6B_i|) = O(n \log \Phi)$, so that the overall size of the Graph WSPD is~$O(n \lambda^2 \log \Phi)$. 

In Section~\ref{subsection:unbounded_spread}, we remove the dependence on the spread~$\Phi$ when bounding the size of the Graph WSPD. Our approach is to define a new measure for Euclidean WSPD that can be bounded independent of the spread. In Definition~\ref{definition:semi-weight}, we define the semi-weight. Replacing weight with semi-weight in our approach introduces several technical challenges. The first challenge is that we require a new Euclidean WSPD construction and we require a bound on its semi-weight. We resolve these in Lemma~\ref{lemma:unbounded_spread_euclidean_wspd} by constructing a so-called semi-compressed quadtree, and with it, a new Euclidean WSPD with bounded semi-weight. The second challenge is that, although the expression $\lambda^2 \sum (|6A_i| + |6B_i|)$ can be upper bounded by the weight of a WSPD, the same expression can not be bounded by the semi-weight of a WSPD. We resolve this in Lemma~\ref{lemma:low_density_implies_packed_strong} by strengthen the low density length bound of Driemel, Har-Peled and Wenk~\cite{DBLP:journals/dcg/DriemelHW12}, which in turn reduces the number of clusters that $A_i$ can partition into, so as to match the semi-weight of the new Euclidean WSPD. Putting these ideas together, and with some additional work, we obtain the theorem below.

\begin{theorem}
[restate of Theorem~\ref{theorem:unbounded_spread_wspd}]
Let $G = (V,E)$ be a $\lambda$-low-density graph with $n$ vertices. For all $\varepsilon > 0$, there exists a $(1/\varepsilon)$-WSPD for $V$ in the graph metric $G$ with $O(n \lambda^2 \varepsilon^{-4} \log n)$ pairs.
\end{theorem}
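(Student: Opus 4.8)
The plan is to realise the two-step construction of Section~\ref{section:technical_overview} and then to spend essentially all of the effort on the pair count, postponing the dependence on the spread to the last step.

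\textbf{Construction and correctness.}
First I would view $V$ as a planar point set and build a Euclidean $(1/\varepsilon)$-WSPD $\{(A_i,B_i)\}_i$. To obtain a bound free of the spread, I would use the semi-compressed quadtree construction of Lemma~\ref{lemma:unbounded_spread_euclidean_wspd}, which outputs such a WSPD together with a control on $\sum_i\bigl(\semisize(6A_i)+\semisize(6B_i)\bigr)$. For each $i$, apply the clustering routine of Lemma~\ref{lemma:clustering_weak} to split $A_i$ into clusters $\{C_{ij}\}_j$ of graph diameter at most $\diam_2(A_i)$, and symmetrically split $B_i$ into $\{D_{ik}\}_k$. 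Output $\mathcal P=\{(C_{ij},D_{ik})\}_{i,j,k}$. That $\mathcal P$ is a pair decomposition of $V$ is immediate: the $C_{ij}$ partition $A_i$, the $D_{ik}$ partition $B_i$, and $\{(A_i,B_i)\}_i$ is already a pair decomposition, so each unordered pair of distinct vertices lands in exactly one $(C_{ij},D_{ik})$. Each such pair is $(1/\varepsilon)$-well-separated in $G$ by the inequality chain of the overview:
$$\max\bigl(\diam_G(C_{ij}),\diam_G(D_{ik})\bigr)\le\max\bigl(\diam_2(A_i),\diam_2(B_i)\bigr)\le\varepsilon\cdot d_2(A_i,B_i)\le\varepsilon\cdot d_G(C_{ij},D_{ik}),$$
where the first inequality is the clustering guarantee, the second is Euclidean well-separation of $(A_i,B_i)$, and the third holds because a straight-line embedded graph has $d_G\ge d_2$.

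\textbf{Size.}
Fix a pair $(A_i,B_i)$ with associated quadtree cell $S$; it contributes $|\{C_{ij}\}_j|\cdot|\{D_{ik}\}_k|$ pairs to $\mathcal P$. I would bound the number of clusters of $A_i$ by combining the strengthened low-density length bound of Lemma~\ref{lemma:low_density_implies_packed_strong}, which controls the total length of the graph edges relevant to $A_i$ inside the expanded cell $6S$, with the clustering bound of Lemma~\ref{lemma:clustering_weak}, which converts that length bound, at target diameter $\diam_2(A_i)$, into a bound on the cluster count in terms of $\lambda$ and $\semisize(6A_i)$; the same applies to $B_i$. Multiplying the two counts and applying the AM--GM inequality to turn the product into a sum, exactly as in the overview, reduces $|\mathcal P|$ to $O\bigl(\lambda^2\sum_i(\semisize(6A_i)+\semisize(6B_i))\bigr)$. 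Substituting the semi-weight bound of Lemma~\ref{lemma:unbounded_spread_euclidean_wspd} then yields $|\mathcal P|=O(n\lambda^2\varepsilon^{-4}\log n)$, which is the claim.

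\textbf{Where the work is.}
The reason the earlier Theorem~\ref{theorem:bounded_spread_wspd} only handled bounded spread is that the quantity $\lambda^2\sum_i(|6A_i|+|6B_i|)$ obtained from the weak length bound (Fact~\ref{fact:low_density_implies_packed_weak}) is controlled by the \emph{weight} of a Euclidean WSPD, which carries an unavoidable $\log\Phi$ factor, but is \emph{not} controlled by the \emph{semi-weight}, which is the only quantity the semi-compressed quadtree can bound independently of $\Phi$. Overcoming this needs two ingredients that must be made to fit together: a sharper low-density length bound (Lemma~\ref{lemma:low_density_implies_packed_strong}) that shrinks the per-set cluster count enough that summing the products matches the semi-weight rather than the weight, and a Euclidean WSPD whose semi-weight is genuinely near-linear (Lemma~\ref{lemma:unbounded_spread_euclidean_wspd}). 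I expect the main obstacle to be verifying that these compose cleanly — in particular that the clustering bound still applies under the stronger length estimate and that the AM--GM step loses only a constant factor — after which the theorem follows by direct substitution.
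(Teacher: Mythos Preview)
Your proposal is correct and follows essentially the same route as the paper: build the Euclidean WSPD on the semi-compressed quadtree (Lemma~\ref{lemma:unbounded_spread_euclidean_wspd}), cluster each side using the algorithm of Lemma~\ref{lemma:clustering_weak} with the strengthened length bound of Lemma~\ref{lemma:low_density_implies_packed_strong} (this combination is exactly Lemma~\ref{lemma:clustering_strong}), apply AM--GM, and invoke the semi-weight bound of Lemma~\ref{lemma:unbounded_spread_euclidean_wspd}e. Your ``where the work is'' paragraph is an accurate diagnosis of why the two strengthened lemmas are needed and how they compose.
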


In Section~\ref{section:wspd_algorithm}, we analyse the running time of our Graph WSPD algorithm. We start by analysing the running time to construct the Euclidean WSPD of bounded semi-weight in Section~\ref{subsection:semi_compressed_quadtree}. This step is relatively straightforward, and the overall construction time to compute both the semi-compressed quadtree and the Euclidean WSPD is~$O(n \log n)$ time. See Lemmas~\ref{lemma:semi-compressed_construction} and~\ref{lemma:euclidean_wspd_construction}.

In Section~\ref{subsection:net_tree}, we fill in some algorithmic gaps in our clustering algorithm, and then we analyse its running time. Our approach is to compute a global clustering which can later be refined into local clusterings for each of the Euclidean WSPD sets~$(A_i,B_i)$. Our global clustering combines the net-tree construction of Har-Peled and Mendel~\cite{DBLP:journals/siamcomp/Har-PeledM06}, and the approximate greedy graph clustering algorithm of Eppstein, Har-Peled and Sidiropoulos~\cite{DBLP:journals/jocg/EppsteinHS20}. In Lemma~\ref{lemma:net_tree}, we show how to adapt the algorithm of~\cite{DBLP:journals/jocg/EppsteinHS20} to construct a net-tree for a low density graph in $O(\lambda n \log^2 n)$ expected time.

In Section~\ref{subsection:putting_it_together}, we put it all together to construct the Graph WSPD. For each set~$A_i$ in the Euclidean WSPD, we obtain a local clustering for the node~$A_i$ by locating the nodes in the net-tree that, are on the $i(A_i)$-level, and, lie inside the bounding box of~$2A_i$. Here, the~$i(A_i)$-level of the net-tree guarantees that the clusters have radius at most $\diam(A_i)$ and each cluster centre is at least $\diam(A_i)/2$ apart. We can perform the local clustering efficiently by preprocessing the net-tree for three-dimensional orthogonal range searching queries, see Theorem~\ref{theorem:wspd_construction}. This completes the overview of the running time analysis.

\begin{theorem}[restate of Theorem~\ref{theorem:wspd_construction}]
    \label{theorem:wspd_construction_0}
    Let $G = (V,E)$ be a $\lambda$-low-density graph with $n$ vertices. For all $\varepsilon > 0$, there is a $(1/\varepsilon)$-WSPD for $V$ in the graph metric $G$ with $O(n \lambda^2 \varepsilon^{-4} \log n)$ pairs. The construction takes $O(n \lambda^2 \varepsilon^{-4} \log^2 n)$ expected time.
\end{theorem}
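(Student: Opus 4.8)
The existence of a $(1/\varepsilon)$-WSPD for $V$ in the graph metric with $O(n\lambda^2\varepsilon^{-4}\log n)$ pairs is already established in Theorem~\ref{theorem:unbounded_spread_wspd}, so the task here is to implement the two-step scheme of Section~\ref{section:technical_overview} within the claimed time bound. The plan has four steps: (i) build a semi-compressed quadtree and, from it, a Euclidean $(1/\varepsilon)$-WSPD of bounded semi-weight; (ii) compute a single \emph{global} net-tree clustering of $G$ in the graph metric; (iii) for each Euclidean WSPD set, read off its \emph{local} clustering by a range query against the net-tree; and (iv) emit one WSPD pair for every pair of clusters arising from a common Euclidean WSPD pair.

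Steps (i) and (ii) are invoked as black boxes. Lemma~\ref{lemma:semi-compressed_construction} and Lemma~\ref{lemma:euclidean_wspd_construction} build the semi-compressed quadtree and the Euclidean WSPD in $O(n\log n)$ time; for each Euclidean WSPD set $A_i$ this also fixes its quadtree cell and a level $i(A_i)$ with $2^{i(A_i)} \le \diam_2(A_i)$. Lemma~\ref{lemma:net_tree} builds a net-tree for the $\lambda$-low-density graph $G$ in $O(\lambda n\log^2 n)$ expected time; recall that a level-$\ell$ node has graph radius $O(2^\ell)$ around its representative, the level-$\ell$ representatives are pairwise $\Omega(2^\ell)$ apart in the graph metric, and the points assigned to level-$\ell$ nodes partition $V$. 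Because $G$ has a straight-line embedding, graph distance dominates Euclidean distance, so each level-$\ell$ node also has Euclidean radius $O(2^\ell)$ around its representative.

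In step (iii), fix $A_i$ and let $\ell = i(A_i)$. If a level-$\ell$ net-tree node meets $A_i$, its representative lies within Euclidean distance $O(2^\ell) = O(\diam_2(A_i))$ of $A_i$, hence inside the bounding box of $2A_i$. I therefore preprocess the net-tree once for three-dimensional orthogonal range reporting --- two coordinates for the representative's position, one for its level --- in $O(n\log^2 n)$ time, so that each query returns its candidate nodes in $O(\log^2 n + k)$ time. Intersecting each returned node's point set with $A_i$ gives the clusters $\{C_{ij}\}$; they partition $A_i$ (the level-$\ell$ nodes partition $V$) and each has graph diameter at most $\diam_2(A_i)$ (the choice of $\ell$), and symmetrically for $B_i$. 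The overview's inequality chain then certifies that each $(C_{ij}, D_{ik})$ is $(1/\varepsilon)$-well-separated in $G$, and since the clusters partition each Euclidean WSPD set, the collection of all cluster pairs is a genuine pair decomposition, completing step (iv). For the time bound: preprocessing costs $O(n\log n) + O(\lambda n\log^2 n) + O(n\log^2 n)$; the $O(n\varepsilon^{-2})$ range queries contribute $O(n\varepsilon^{-2}\log^2 n)$ overhead plus $O(1)$ per returned node; and both the total number of returned nodes and the total number of emitted pairs are $O(n\lambda^2\varepsilon^{-4}\log n)$ --- the former because the strengthened low-density length bound (Lemma~\ref{lemma:low_density_implies_packed_strong}) together with the clustering bound (Lemma~\ref{lemma:clustering_weak}) limit the level-$\ell$ representatives in any box of side $O(2^\ell)$, the latter because that is the size of the Graph WSPD by Theorem~\ref{theorem:unbounded_spread_wspd}. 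Summing yields $O(n\lambda^2\varepsilon^{-4}\log^2 n)$ expected time.

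I expect the main obstacle to be step (iii): making the locally extracted clusterings simultaneously correct and cheap. Correctness requires that the nodes returned for $A_i$ cover all of $A_i$ without the algorithm ever enumerating $A_i$ itself, which is why everything is routed through the semi-compressed quadtree and charged to its semi-weight rather than its (spread-dependent) weight. Cheapness requires that few returned nodes have a point set disjoint from $A_i$; since graph distance only \emph{dominates} Euclidean distance, the net-tree gives no lower bound on the Euclidean separation of level-$\ell$ representatives, so this bound is not available from the net-tree alone and must instead be extracted from $\lambda$-low-density via the strengthened length bound of Lemma~\ref{lemma:low_density_implies_packed_strong}. Making that bound match the semi-weight of the new Euclidean WSPD is the technically delicate part, and it is also essentially the only place in the running-time argument where low density is genuinely used.
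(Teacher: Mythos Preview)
Your proposal is correct and follows essentially the same approach as the paper's proof: build the Euclidean WSPD from the semi-compressed quadtree, build the compressed net-tree, preprocess the net-tree nodes for three-dimensional orthogonal range reporting on $(x,y,\text{level})$, and for each Euclidean pair $(A,B)$ query the box $2A$ at the appropriate level to extract cluster centres, then output all cross pairs. Your running-time accounting and your identification of the delicate point---that bounding the number of returned centres requires the strengthened low-density length bound rather than any packing property of the net-tree itself---both match the paper.
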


In Section~6, we construct a membership oracle that determines which Graph WSPD pair a pair of vertices belongs to, and an approximate distance oracle that determines the approximate graph distance between a pair of vertices. For our oracles, we require the word RAM model of computation, see Definition~\ref{definition:word_ram}.

In Section~\ref{subsection:membership_oracle}, we construct a membership oracle for the Graph WSPD. Given vertices~$a$ and~$b$, we first compute the Euclidean WSPD pair~$(A,B)$ where~$a \in A$ and~$b \in B$. We can query~$(A,B)$ from our semi-compressed quadtree in~$O(1)$ time, see Corollary~\ref{corollary:congruent_and_maximal}. Next, we compute the cluster~$C_j \subseteq A$ that~$a$ belongs to. The cluster centre $c_j \in C_j$ is the $i(A)$-level ancestor of~$a$ in the net-tree. We perform a weighted $i(A)$-level ancestor query of~$a$ by solving a new special case of the weighted ancestor query problem, see Lemma~\ref{lemma:weighted_ancestor_special_case}. Note that general weighted level ancestor queries are too slow, whereas other special cases~\cite{DBLP:conf/icalp/AlstrupH00,DBLP:journals/tcs/BadkobehCKP22,DBLP:conf/swat/BilleNP24,DBLP:journals/talg/KociumakaKRRW20,DBLP:conf/focs/LeW21} can not be applied. Putting it together, we can query the clusters~$a \in C_j$ and~$b \in D_k$ in~$O(1)$ query time.

Finally, in Section~\ref{subsection:approximate_distance_oracle}, we construct an approximate distance oracle for the low density graph. We follow the approach of Gao and Zhang~\cite{DBLP:conf/stoc/GaoZ03}, which is to construct an exact distance oracle and a lookup table. We use the separator theorem of Le and Than~\cite{DBLP:conf/soda/LeT22} to construct the exact distance oracle, and we use our membership oracle to construct the lookup table. The size of the data structure is equal to the size of the lookup table, which is~$O(n \lambda^2 \log n)$. All query operations take~$O(1)$ time. We state our final result below.

\begin{theorem}[restate of Theorem~\ref{theorem:ado_construciton}]
    \label{theorem:ado_construciton_0}
    Let $G=(V,E)$ be a $\lambda$-low-density graph with~$n$ vertices. For all $\varepsilon > 0$, there is an approximate distance oracle of $O(n \lambda^2 \varepsilon^{-4} \log n)$ size, that, given vertices $u,v \in V$, returns in $O(1)$ time a $(1+\varepsilon)$-approximation of $d_G(u,v)$. The preprocessing time is $O(n \sqrt n \lambda ^3 \varepsilon^{-4} \log n)$ expected.
\end{theorem}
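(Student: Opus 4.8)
The plan is to follow the template of Gao and Zhang~\cite{DBLP:conf/stoc/GaoZ03}: precompute the Graph WSPD of Theorem~\ref{theorem:wspd_construction_0}, store in a lookup table one exact inter-cluster distance per WSPD pair, and answer a query by using the membership oracle of Section~\ref{subsection:membership_oracle} to identify the pair containing the two query vertices and reading off the table entry. We may assume $G$ is connected; otherwise we apply the construction to each connected component and answer cross-component queries with $\infty$.

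\textbf{Construction.} First I would invoke Theorem~\ref{theorem:wspd_construction_0} with separation parameter $1/\varepsilon'$ for a sufficiently small $\varepsilon' = \Theta(\varepsilon)$, obtaining a $(1/\varepsilon')$-WSPD $\{(C_{ij},D_{ik})\}$ in the graph metric with $O(n\lambda^2\varepsilon^{-4}\log n)$ pairs, built in $O(n\lambda^2\varepsilon^{-4}\log^2 n)$ expected time, together with the associated $O(1)$-time membership oracle of Section~\ref{subsection:membership_oracle}, whose construction time is subsumed. For every pair I would fix representatives $c_{ij}\in C_{ij}$, $d_{ik}\in D_{ik}$ (the cluster centres are natural choices), and the lookup table would store $d_G(c_{ij},d_{ik})$, keyed by the identifier the membership oracle produces. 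To populate the table I need these $O(n\lambda^2\varepsilon^{-4}\log n)$ exact graph distances, which I would obtain from an exact distance oracle for $G$ built by recursively applying the Le--Than balanced separator (Fact~\ref{fact:balanced_separator_lanky}): a subgraph of a $\lambda$-low-density graph is again $\lambda$-low-density, hence $\lambda$-lanky, so each piece of size $m$ admits a balanced separator of $O(\lambda\sqrt m)$ vertices, the recursion has depth $O(\log n)$, and --- using that a $\lambda$-low-density graph has $O(\lambda n)$ edges (apply the lanky property to an infinitesimal ball around each vertex) --- running Dijkstra within each piece from each of its separator vertices yields an exact oracle of $O(\lambda n\sqrt n)$ size and $O(\lambda\sqrt n)$ query time, constructible in $O(\lambda^2 n\sqrt n\log n)$ time by the usual geometric-series accounting over the levels. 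Querying this oracle once per WSPD pair costs $O(n\lambda^2\varepsilon^{-4}\log n)\cdot O(\lambda\sqrt n)=O(n\sqrt n\,\lambda^3\varepsilon^{-4}\log n)$, which dominates the WSPD construction time, the exact-oracle construction time, and the membership-oracle construction time, giving the claimed expected preprocessing bound. The final data structure retains only the membership oracle and the lookup table --- the exact oracle is discarded --- so its size is $O(n\lambda^2\varepsilon^{-4}\log n)$.

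\textbf{Query and correctness.} Given $u\neq v$, the membership oracle returns in $O(1)$ time the unique pair, say $(C,D)$ with representatives $c,d$ and $u\in C$, $v\in D$; I would return the stored value $d_G(c,d)$, rescaled by $1/(1-2\varepsilon')$ if a one-sided $(1+\varepsilon)$-estimate is wanted. Since $(C,D)$ is $(1/\varepsilon')$-well-separated in the graph metric, $\max(\diam_G(C),\diam_G(D))\le\varepsilon'\,d_G(C,D)\le\varepsilon'\,d_G(u,v)$, so two applications of the triangle inequality give $(1-2\varepsilon')\,d_G(u,v)\le d_G(c,d)\le(1+2\varepsilon')\,d_G(u,v)$; choosing $\varepsilon'$ a small enough constant fraction of $\varepsilon$ turns this into a $(1+\varepsilon)$-approximation. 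The case $u=v$ returns $0$. In the word RAM model of Definition~\ref{definition:word_ram}, both the membership query and the table lookup run in $O(1)$ time, so queries take $O(1)$ time overall.

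\textbf{Main obstacle.} The delicate part is the preprocessing-time analysis, and specifically the separator-based exact distance oracle: one must verify that recursively deleting the $O(\lambda\sqrt n)$-vertex Le--Than separators keeps every piece $\lambda$-low-density; that the within-piece Dijkstra distances, combined with the $\min_{s\in S}\bigl(d(\cdot,s)+d(s,\cdot)\bigr)$ rule over the separators on each root-to-leaf path, reconstruct $d_G$ exactly; and that the level-$i$ separator sizes $O(\lambda\sqrt{n/2^i})$ sum geometrically so that the $O(n\lambda^2\varepsilon^{-4}\log n)$ oracle queries, one per WSPD pair, total $O(n\sqrt n\,\lambda^3\varepsilon^{-4}\log n)$ rather than something larger. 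Everything else is a routine composition of Theorem~\ref{theorem:wspd_construction_0}, the membership oracle of Section~\ref{subsection:membership_oracle}, and the well-separation inequality.
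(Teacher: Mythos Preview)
Your proposal is correct and matches the paper's approach almost exactly: build the Graph WSPD with separation $\Theta(1/\varepsilon)$ via Theorem~\ref{theorem:wspd_construction}, build the membership oracle of Section~\ref{subsection:membership_oracle}, build a separator-hierarchy exact oracle from the Le--Than separator (the paper's Lemma~\ref{lemma:exact_distance_oracle}), fill a per-pair lookup table, and discard the exact oracle after preprocessing.

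One small caveat worth flagging: in the paper's construction the cluster centres $s_j$ are retrieved by the range query over $2A$, not $A$, so $s_j$ need not lie in $C_j=S_j\cap A$; hence your parenthetical ``the cluster centres are natural choices'' and your correctness argument via $\diam_G(C_j)$ do not quite fit together. The paper stores $d_G(s_j,t_k)$ regardless and proves the $(1+\varepsilon)$-approximation using the Euclidean $(4/\varepsilon)$-separation of the quadtree cells $A,B$ together with the cluster-radius bound $d_G(u,s_j)\le\ell(A)/2$ for $u\in C_j$. Your graph-metric well-separation argument is perfectly valid once the representatives genuinely lie in $C_j,D_k$, but extracting such an element for every (non-empty) $C_j$ from the implicit representation $(s_j,A)$ within the stated time bound is not entirely obvious, whereas $s_j$ itself is already in hand.
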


This completes the overview of the main results of our paper.

\section{WSPD of near-linear size}
\label{section:wspd_size}

In this section, we will prove that for a low density graph metric, there exists a WSPD of near-linear size. First, we prove a bound on the WSPD size that has a logarithmic dependence on the spread, and then we improve the bound by removing the dependence on the spread.

\subsection{Bounded spread case}
\label{subsection:bounded_spread}

In this subsection, we assume the graph vertices have bounded spread.

\begin{definition}[Spread]
    \label{definition:spread}
    Let $V$ be a set of $n$ points in $\mathbb R^2$, and let $d_2(\cdot,\cdot)$ denote the Euclidean metric. The spread $\Phi$ of $V$ is defined to be
    \[
        \Phi = \frac {\max_{u,v \in V} d_2(u,v)} {\min_{u,v \in V} d_2(u,v)}.
    \]
\end{definition}

Recall that the first step of our algorithm is to construct a WSPD of the graph vertices with respect to the metric~$\mathbb R^2$. For any point set in~$\mathbb R^2$, there is a Euclidean WSPD of linear size~\cite{DBLP:journals/jacm/CallahanK95}. Moreover, if the point set has bounded spread, then there is a Euclidean WSPD with low weight~\cite{har2011geometric}. Next, we state the properties of the Euclidean WSPD with low weight.

\begin{definition}[Weight of a WSPD]
    \label{definition:weight}
    Let $\mathcal P = \{(A_i, B_i)\}_{i=1}^k$ be a pair decomposition. The weight of $\mathcal P$ is defined as $\sum_{i=1}^k (|A_i| + |B_i|)$, where $|X|$ denotes the number of elements in the set $X$.
\end{definition}

\begin{restatable}[Lemma~3.29 in Har-Peled's book~\cite{har2011geometric}]{fact}{bswspd}
\label{fact:bounded_spread_euclidean_wspd}
Let $V$ be a set of $n$ points in $\mathbb R^2$ with spread $\Phi$. For all $\varepsilon > 0$, there is a \mbox{$(1/\varepsilon)$-WSPD} for $V$ in the metric $\mathbb R^2$ with $O(n \varepsilon^{-2})$ pairs. Moreover,
\begin{enumerate}[noitemsep,label=(\alph*)]
    \item An uncompressed quadtree on~$V$ has $O(n)$ nodes and $O(\log \Phi)$ levels.
    \item The WSPD consists of pairs of well-separated quadtree nodes.
    \item Each quadtree node appears at most $O(\varepsilon^{-2})$ times in the WSPD.
    \item Each point in $V$ participates in at most one quadtree node per level, and in at most $O(\log \Phi)$ quadtree nodes in total.
    \item The WSPD has $O(n \varepsilon^{-2} \log \Phi)$ weight.
\end{enumerate}
\end{restatable}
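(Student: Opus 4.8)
The plan is to follow the classical quadtree-based construction of the well-separated pair decomposition, as in Callahan and Kosaraju~\cite{DBLP:journals/jacm/CallahanK95} and Chapter~3 of Har-Peled's book~\cite{har2011geometric}, while tracking the extra bookkeeping needed for parts (a)--(e). First I would rescale $V$ so that its bounding box is the unit square and the minimum interpoint distance is $\Theta(1/\Phi)$, then build a quadtree on $V$, stopping subdivision once a cell contains at most one point and contracting every maximal chain of cells that has a single non-empty child into one compressed edge. A cell of side length below the minimum interpoint distance holds at most one point, and each uncompressed subdivision step halves the side length, so the quadtree has $O(\log \Phi)$ distinct scales, that is, levels; the standard bound that a compressed quadtree on $n$ points has $O(n)$ nodes then gives part (a). For part (d), the cells at a fixed level partition the plane, so each point lies in exactly one node per level, and summing over the $O(\log \Phi)$ levels bounds the number of nodes containing a given point by $O(\log \Phi)$.

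Next I would run the usual recursive pairing procedure on the quadtree to produce the WSPD. Maintain a set of node-pairs, starting from the pair (root, root): given a pair $(u,v)$, if the two cells are $(1/\varepsilon)$-well-separated (tested using the cell diameters, equivalently the side lengths up to a constant, adjusting the separation constant as needed) then output the pair consisting of the points in $u$ and the points in $v$; otherwise split the cell of larger side length and recurse on the resulting $O(1)$ subpairs. That this yields a pair decomposition of $V$ whose pairs are $(1/\varepsilon)$-well-separated pairs of quadtree cells is the standard induction and gives part (b). The heart of the argument is the counting bound (c): fix a quadtree node $u$; whenever $(u,v)$ is output, the recursion reached $(u,v)$ from a pair $(u,\parent(v))$ or $(\parent(u),v)$ that was \emph{not} well-separated, which forces the side lengths of $u$ and $v$ to be within a constant factor of each other and forces $d_2(u,v)$ to be $O(\varepsilon^{-1})$ times the side length of $u$; a packing estimate in the plane then shows there are only $O(\varepsilon^{-2})$ such cells $v$, proving (c). Multiplying the $O(n)$-node bound from (a) by the $O(\varepsilon^{-2})$-appearances bound from (c) yields the claimed $O(n\varepsilon^{-2})$ pairs.

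Finally, part (e) follows by a charging argument: the weight $\sum_i (|A_i| + |B_i|)$ counts, for each point $p \in V$, the number of output pairs in which $p$ appears on one side, and since $p$ lies in $O(\log \Phi)$ quadtree nodes by (d) and each node appears in $O(\varepsilon^{-2})$ pairs by (c), point $p$ is charged $O(\varepsilon^{-2}\log \Phi)$ times; summing over all $n$ points gives weight $O(n\varepsilon^{-2}\log\Phi)$. I expect the main obstacle to be the packing estimate underlying (c): arguing that the splitting rule forces partner cells to have comparable side length and bounded separation, so that only $O(\varepsilon^{-2})$ of them can surround a fixed cell. A secondary point requiring care is the quadtree model itself --- the node bound in (a) needs the contraction of long single-child chains, while the level bound in (a) and (d) uses only the spread --- so the construction must be stated to deliver both simultaneously.
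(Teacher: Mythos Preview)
The paper does not supply its own proof of this statement: it is recorded as a Fact attributed to Lemma~3.29 of Har-Peled's book and is used as a black box. Your proposal correctly reconstructs the standard quadtree-based argument from that source, and the charging arguments you give for (c), (d), and (e) are exactly the intended ones.

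One small point worth flagging: you have already noticed the tension in (a), where the literal phrase ``uncompressed quadtree'' with $O(n)$ nodes is not quite right without some convention, since an uncompressed quadtree on a bounded-spread set can have $\Theta(n\log\Phi)$ nodes due to single-child chains. Your resolution (contract those chains to get the $O(n)$ node bound, while the $O(\log\Phi)$ level bound comes purely from the spread) is the correct reading, and it is all the paper actually needs downstream: the $O(n\varepsilon^{-2})$ pair count uses only the $O(n)$ node bound together with (c), and the weight bound (e) uses only (c) and (d). So your sketch is complete.
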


Recall that the second step is to partition the sets~$A_i$ and~$B_i$ into clusters, for all pairs $(A_i,B_i)$ in the Euclidean WSPD. In particular, we partition~$A_i$ into sets~$\{C_{ij}\}$ so that the graph diameter of $C_{ij}$ is at most the Euclidean diameter of $A_i$. Similarly, we partition~$B_i$ into sets~$\{D_{ik}\}$ so that the graph diameter of~$D_{ik}$ is at most the Euclidean diameter of~$B_i$. By the end of this subsection, we will prove that~$C_{ij}$ and~$D_{ik}$ are~$(1/\varepsilon)$-well-separated. Collecting all pairs of clusters over all $(A_i, B_i)$ gives us the pairs $\{(C_{ij}, D_{ik})\}$. We return $\{(C_{ij}, D_{ik})\}$ as our Graph WSPD. 

Next, we analyse the size of our Graph WSPD. We require a bound on the number of clusters~$\{C_{ij}\}$ that~$A_i$ partitions into. Our bound on the number of clusters consists of two ingredients. The first ingredient is Fact~\ref{fact:low_density_implies_packed_weak}, which is a length bound for low density graphs. Informally, the length bound relates the sum of the lengths of the edges in~$A_i$ to the number of nearby vertices. See Fact~\ref{fact:low_density_implies_packed_weak} and Figure~\ref{figure:length_bound}. Note that in Lemma~\ref{lemma:low_density_implies_packed_strong} of this paper, we will prove a stronger version of Fact~\ref{fact:low_density_implies_packed_weak}.

\begin{fact}[Lemma~4.9 in Driemel, Har-Peled and Wenk~\cite{DBLP:journals/dcg/DriemelHW12}]
\label{fact:low_density_implies_packed_weak}
Let $G = (V,E)$ be a $\lambda$-low-density graph in $\mathbb R^2$. Let $S$ be a square with side length $\ell$ and let $3S$ be the scaling of $S$ by a factor of 3 about its centre. Let $L$ be the total length of $E \cap S$ and let $m$ be the number of vertices in $V \cap 3S$. Then $(\frac L {\lambda \ell})^2 = O(m)$. 
\end{fact}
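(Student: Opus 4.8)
The plan is to bound the total edge length directly as $L = O(\lambda \ell \sqrt m)$, which is equivalent to the claimed inequality $(L/(\lambda\ell))^2 = O(m)$. (When $m=0$ the square $S$ can still be crossed by up to $\lambda$ edges of length $\geq \ell$, contributing $O(\lambda\ell)$; this is a harmless additive term, so I will assume $m \geq 1$ below.) The key parameter is the threshold $\tau = \ell/\sqrt m \le \ell$, and I split the edges meeting $S$ into the \emph{long} edges, of length $\geq \tau$, and the \emph{short} edges, of length $< \tau$, bounding the length each contributes to $L$ separately.

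For the long edges I would bucket by length scale: for $j \geq 0$ let $E_j$ be the edges meeting $S$ whose length lies in $[\tau 2^j, \tau 2^{j+1})$. Cover $S$ by $O\!\left(1 + (\ell/(\tau 2^j))^2\right)$ balls of radius $\tau 2^j$. Every edge of $E_j$ meets $S$, hence meets one of these balls, and has length at least $\tau 2^j$, i.e.\ at least that ball's radius, so the low-density hypothesis yields at most $\lambda$ edges of $E_j$ per ball and therefore $|E_j| = O\!\left(\lambda + \lambda \ell^2/(\tau^2 4^j)\right)$. Each edge of $E_j$ contributes at most $\min(\sqrt 2\,\ell,\ \tau 2^{j+1})$ to $L$, so scale $j$ contributes $O(\lambda\ell + \lambda\ell^2/(\tau 2^j))$. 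Summing the geometric series over $j \geq 0$ gives $O(\lambda \ell^2/\tau + \lambda\ell) = O(\lambda\ell^2/\tau)$ since $\tau \le \ell$, which equals $O(\lambda\ell\sqrt m)$ for the chosen $\tau$.

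For the short edges, note that an edge of length $< \tau \le \ell$ that meets $S$ has an endpoint within distance $\ell$ of $S$, hence inside $3S$; charge the portion of that edge lying in $S$ to such an endpoint $v$. I then bound the total length charged to a fixed vertex $v$ using the same scaling idea, zoomed in: for $j \geq 0$, the edges incident to $v$ of length in $[\tau 2^{-j-1}, \tau 2^{-j})$ all meet the ball of radius $\tau 2^{-j-1}$ about $v$ and have length at least that radius, so at most $\lambda$ of them exist and they contribute at most $\lambda \tau 2^{-j}$; summing over $j$ gives $O(\lambda \tau)$ per vertex. Since only the $m$ vertices of $V \cap 3S$ receive charge, the short edges contribute $O(m\lambda\tau) = O(\lambda\ell\sqrt m)$. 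Adding the two estimates yields $L = O(\lambda\ell\sqrt m)$, as claimed.

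The main obstacle is conceptual rather than technical: one must pick the split point $\tau$ so the two halves balance, since the long-edge estimate behaves like $\lambda\ell^2/\tau$ (decreasing in $\tau$) while the short-edge estimate behaves like $m\lambda\tau$ (increasing in $\tau$), and $\tau = \ell/\sqrt m$ is exactly where they meet. The only calculational care needed is to ensure that in both the covering argument (for long edges) and the charging argument (for short edges) every ball used has radius not exceeding the length of the edges it is meant to capture, so that the low-density property genuinely applies; rounding all scale boundaries to powers of two handles this uniformly.
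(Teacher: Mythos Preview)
The paper does not prove this statement; it is quoted as a fact from Driemel, Har\nobreakdash-Peled and Wenk. The closest in\nobreakdash-paper proof is that of Lemma~\ref{lemma:low_density_implies_packed_strong}, the semi\nobreakdash-size strengthening. Your argument is correct in substance and matches that proof on the ``long'' side: the dyadic covering of $S$ by balls of radius $\tau 2^j$ is exactly the paper's treatment of $E_3$ (and of $E''$ once $\tau 2^j\ge\ell$). Where you diverge is on the ``short'' side. The paper charges short edges globally, using that every vertex has degree at most $\lambda$ together with a covering of the endpoints; you instead run a second dyadic argument locally around each vertex $v\in 3S$, bounding the total short length incident to $v$ by $\sum_{j\ge0}\lambda\,\tau 2^{-j}=O(\lambda\tau)$. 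Your per\nobreakdash-vertex argument is cleaner for the present (non\nobreakdash-semi\nobreakdash-size) statement and avoids any reference to the global vertex count~$n$; the paper's version, on the other hand, is what actually extends to the semi\nobreakdash-size setting of Lemma~\ref{lemma:low_density_implies_packed_strong}, where one must cover endpoints by $\semisize(3S)$ discs rather than count them.

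One small gap in your write\nobreakdash-up: you say scale $j$ contributes $O(\lambda\ell+\lambda\ell^2/(\tau 2^j))$ and then sum over all $j\ge 0$ to get $O(\lambda\ell^2/\tau+\lambda\ell)$, but the additive $\lambda\ell$ term does not sum to $O(\lambda\ell)$ over an unbounded range of $j$. The fix is the obvious one (and is what the paper does): cap the dyadic bucketing at $j$ with $\tau 2^j\le\ell$, so the first term telescopes as $\sum_{j\le\log(\ell/\tau)}\lambda\tau 2^{j}=O(\lambda\ell)$, and handle all edges of length $\ge\ell$ in one shot with a single radius\nobreakdash-$\ell$ cover of $S$, giving $O(\lambda)$ such edges and $O(\lambda\ell)$ total contribution.
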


\begin{figure}[ht]
    \centering
    \includegraphics[width=0.7\textwidth]{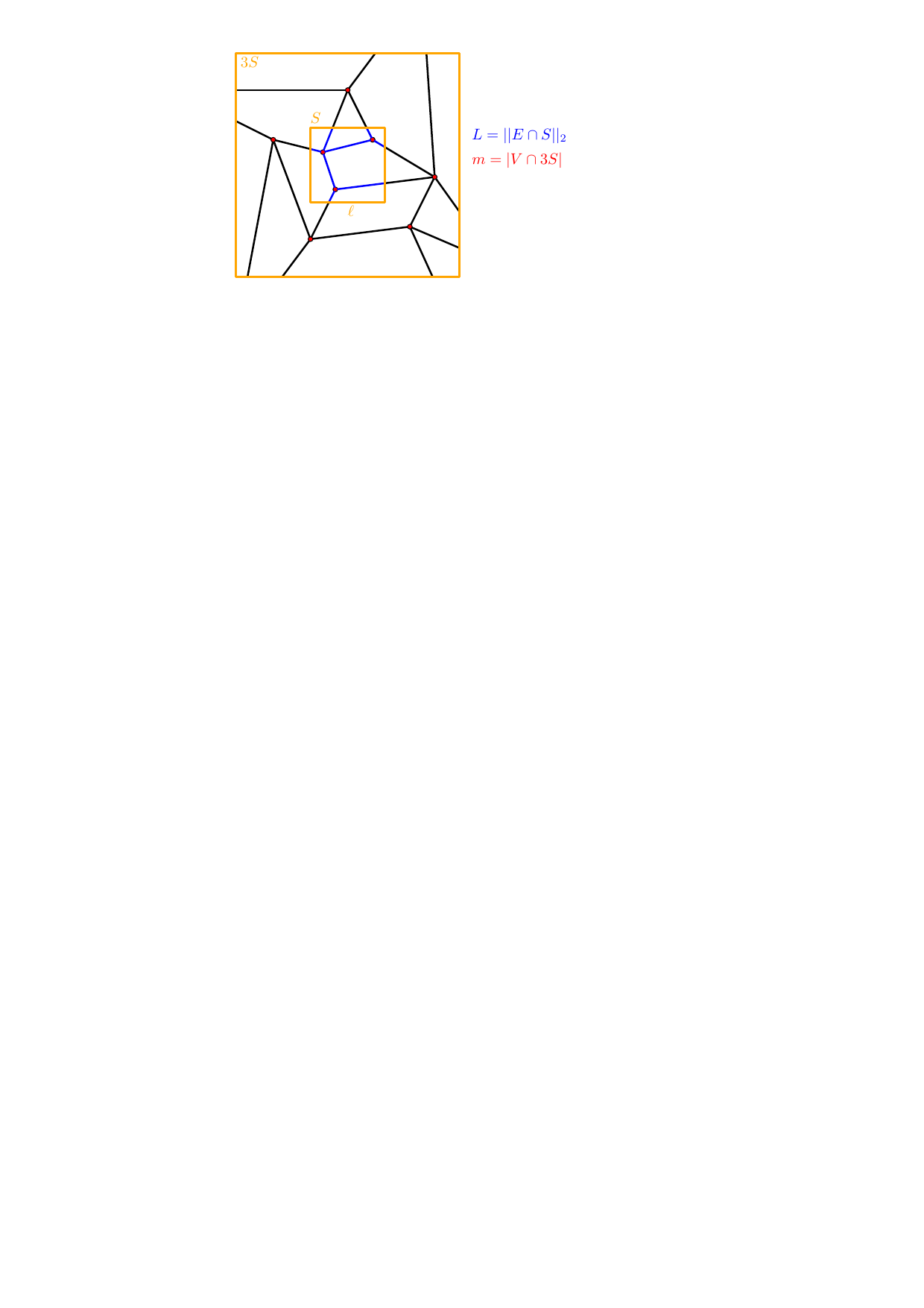}
    \caption{The length bound relates the sum of the lengths of edges in~$S$ with the number of points in~$3S$.}
    \label{figure:length_bound}
\end{figure}

The second ingredient is a clustering bound which we will prove in Lemma~\ref{lemma:clustering_weak}. This clustering bound is similar to the bound of Gudmundsson, Seybold and Wong~\cite[Lemma~17]{gudmundsson2024map}, however, their analysis applies to~$c$-packed graphs whereas our analysis applies to the broader class of~$\lambda$-low-density graphs. We can show that the number of clusters~$\{C_{ij}\}$ that~$A_i$ partitions into is at most the square root of the number of points nearby to~$A_i$. Our proof relies on the fact that the diameters of the clusters are at most the side length of~$A_i$, the distance between cluster centres are at least half the side length of~$A_i$, and on the length bound given in Fact~\ref{fact:low_density_implies_packed_weak}. We state our clustering bound in Lemma~\ref{lemma:clustering_weak}, see also Figure~\ref{figure:clustering_bound}.

\begin{lemma}
\label{lemma:clustering_weak}
Let $G = (V,E)$ be a connected $\lambda$-low-density graph. Let $S$ be a square with side length~$\ell$ and let~$6S$ be the scaling of $S$ by a factor of 6 about its centre. There exists a partition of the vertices $V\cap S$ into $K = O(\lambda \sqrt{|V \cap 6S|})$ clusters $\{C_1, C_2, \ldots, C_K\}$ so that $\diam_G(C_i) \leq \ell$ for all $C_i$. Here, $\diam_G(C_i)$ denotes the diameter of $C_i$ under the graph metric~$G$.
\end{lemma}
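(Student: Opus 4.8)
The plan is to build the clusters by a greedy ball‑growing procedure in the graph metric, and then to bound the number of clusters by charging to each cluster centre a curve of length $\Omega(\ell)$ that is contained in $2S$; the low‑density length bound (Fact~\ref{fact:low_density_implies_packed_weak}) then finishes the count. For the construction, I repeatedly pick an arbitrary not‑yet‑clustered vertex $c_i \in V\cap S$ and let $C_i$ be the set of not‑yet‑clustered vertices of $V\cap S$ at graph distance at most $\ell/2$ from $c_i$; I call $c_i$ the centre of $C_i$. Then $\{C_i\}$ partitions $V\cap S$, each $C_i$ satisfies $\diam_G(C_i)\le \ell$ since $d_G(u,v)\le d_G(u,c_i)+d_G(c_i,v)\le \ell$ for $u,v\in C_i$, and any two centres satisfy $d_G(c_i,c_j) > \ell/2$, because whichever of the two was created later was still unclustered when the earlier cluster was formed. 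Let $K$ be the number of clusters. If $K\le 1$ the claimed bound is immediate (when $K=1$ we have $c_1\in V\cap S\subseteq V\cap 6S$), so assume $K\ge 2$.

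Next I would charge to each centre $c_i$ a curve $\gamma_i$ of length at least $\ell/16$ consisting of portions of edges of $G$, with every point of $\gamma_i$ at graph distance at most $\ell/8$ from $c_i$. Since $K\ge 2$ there is another centre $w$, and since $G$ is connected there is a shortest path $\pi$ in $G$ from $c_i$ to $w$, of length $d_G(c_i,w) > \ell/2$. Let $c_i=v_0,v_1,v_2,\dots$ be the vertices along $\pi$ and let $v_p$ be the last one with $d_G(c_i,v_p)\le \ell/8$; then $v_{p+1}$ exists with $d_G(c_i,v_{p+1}) > \ell/8$. If $d_G(c_i,v_p)\ge \ell/16$, take $\gamma_i$ to be the sub‑path of $\pi$ from $c_i$ to $v_p$, which has length $d_G(c_i,v_p)\in[\ell/16,\ell/8]$. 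Otherwise $d_G(c_i,v_p)<\ell/16$, the edge $v_pv_{p+1}$ has length $|v_pv_{p+1}| = d_G(c_i,v_{p+1})-d_G(c_i,v_p) > \ell/8-\ell/16 = \ell/16$, and I take $\gamma_i$ to be the sub‑path of $\pi$ from $c_i$ to $v_p$ followed by the length‑$(\ell/16)$ initial segment of the edge $v_pv_{p+1}$. In either case the length is at least $\ell/16$, and every point $x$ on $\gamma_i$ has $d_G(c_i,x)\le \ell/8$: along a shortest path the distance from $c_i$ increases monotonically, and in the second case the appended segment adds at most $\ell/16$ on top of $d_G(c_i,v_p)<\ell/16$.

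To assemble the count, note that because edges are straight segments, $d_2(c_i,x)\le d_G(c_i,x)$ for every point $x$ of the geometric realisation of $G$; hence each $\gamma_i$ lies in the Euclidean disc of radius $\ell/8$ about $c_i$, which lies inside $2S$ because $c_i\in S$. If some point $x$ lay on both $\gamma_i$ and $\gamma_j$ with $i\ne j$, then $d_G(c_i,c_j)\le d_G(c_i,x)+d_G(x,c_j)\le \ell/4 < \ell/2$, contradicting the separation of the centres; so $\gamma_1,\dots,\gamma_K$ are pairwise disjoint curves in $2S$, each made of edge portions. Writing $L$ for the total length of $E\cap 2S$, disjointness gives $L \ge \sum_{i=1}^K \operatorname{length}(\gamma_i) \ge K\ell/16$. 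Applying Fact~\ref{fact:low_density_implies_packed_weak} to the square $2S$, whose side length is $2\ell$ and whose threefold scaling about its centre is $6S$, yields $L = O(\lambda\ell\sqrt{|V\cap 6S|})$; combining the two inequalities gives $K = O(\lambda\sqrt{|V\cap 6S|})$, as required.

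I expect the main obstacle to be long edges. The naive hope — that the graph neighbourhood of each centre already contains $\Omega(\ell)$ of edge length that stays near $S$ — fails, because a single long edge incident to a centre can leave $2S$ entirely, and many short edges need not be present. The two‑case construction of $\gamma_i$ is designed precisely to avoid this: it either harvests enough length from a shortest‑path prefix that provably stays within a small graph ball around $c_i$, or, when that prefix is short, follows one long edge only for the distance guaranteed to remain in that ball; choosing the ball radius $\ell/8 < \ell/4$ simultaneously keeps the charged curves of distinct centres disjoint, so that their lengths add up to a lower bound on $L$.
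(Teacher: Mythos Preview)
Your proof is correct and follows essentially the same approach as the paper's: build the clusters greedily so that the centres are pairwise at graph distance $>\ell/2$, charge each centre a disjoint curve of length $\Omega(\ell)$ lying in $2S$, and finish with Fact~\ref{fact:low_density_implies_packed_weak} applied to $2S$. The paper streamlines your two-case construction of $\gamma_i$: it fixes a single global anchor vertex $v\in V$, takes the shortest path $\pi_i$ from $v$ to each $c_i$, and lets $\sigma_i$ be the portion of $\pi_i$ within distance $\ell/4$ of $c_i$, explicitly allowing $\sigma_i$ to end mid-edge; this avoids any case split, and at most one $\sigma_i$ (the one with $d_G(v,c_i)\le\ell/4$) can fail to have length $\ell/4$.
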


\begin{figure}[ht]
    \centering
    \includegraphics[width=0.7\textwidth]{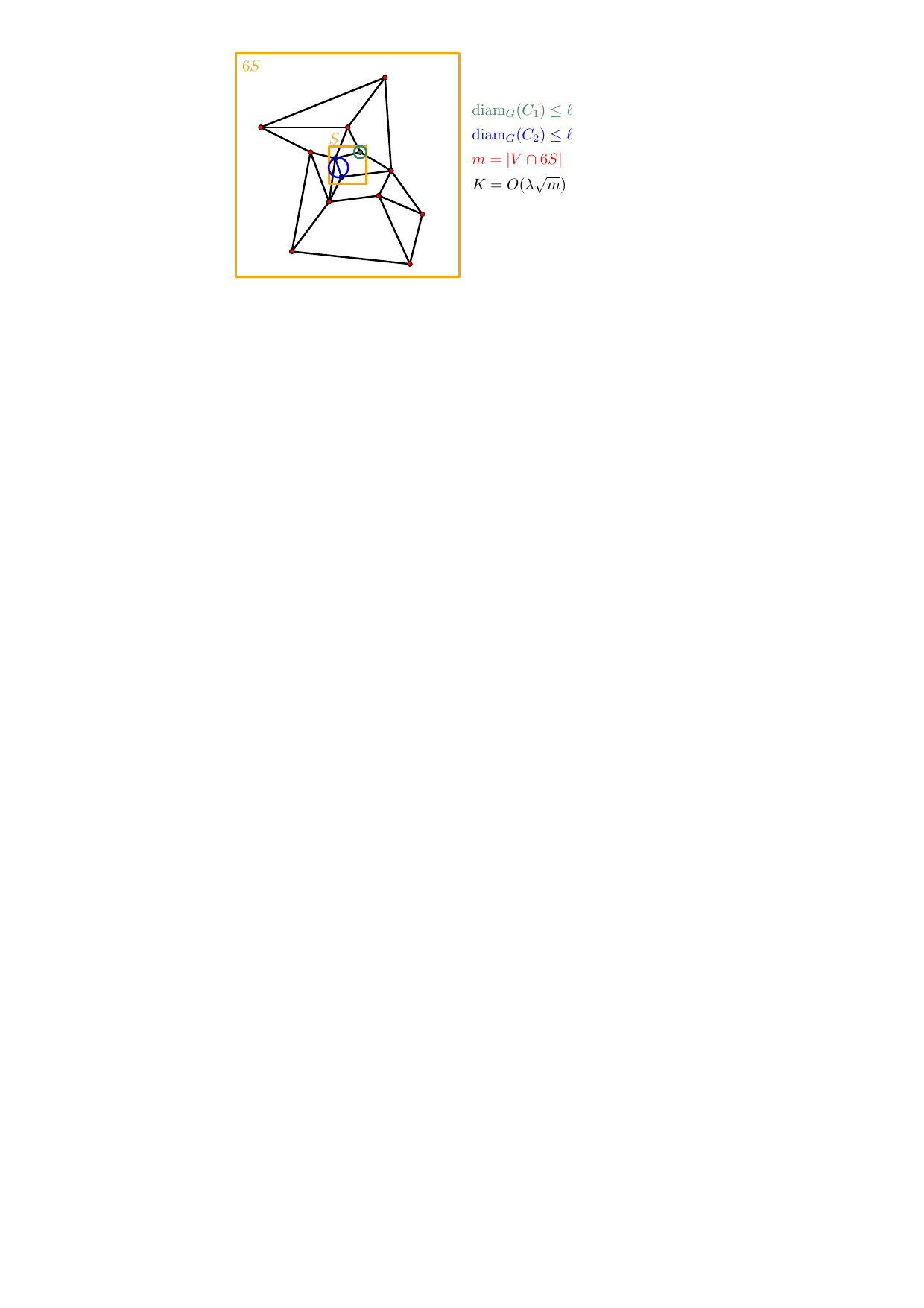}
    \caption{The vertices of $V \cap S$ can be divided into $O(\lambda \sqrt m)$ clusters, where $m = |V \cap 6S|$ and each cluster has diameter at most~$\ell$.}
    \label{figure:clustering_bound}
\end{figure}

\begin{proof}
Iteratively construct the clusters $C_i$. Pick an arbitrary unassigned vertex in $V \cap S$ and assign it as a cluster centre $c_i$. Assign $c_i$ to the cluster $C_i$, plus all unassigned vertices $u \in V \cap S$ satisfying $d_G(u,c_i) \leq \ell/2$. Continue until all vertices of~$V \cap S$ are assigned. Let the clusters be $\{C_1, \ldots, C_K\}$. Next, we bound~$K$.

Fix an arbitrary vertex~$v \in V$. For $1 \leq i \leq K$, define~$\pi_i$ to be the shortest path in $G$ from~$v$ to~$c_i$. Note that~$\pi_i$ is guaranteed to exist since~$G$ is connected. Define~$\sigma_i$ to be the portion of $\pi_i$ within distance~$\ell/4$ to~$c_i$. Note that~$\sigma_i$ is a subpath and does not necessarily end at vertices. We have~$\sigma_i \subset 2S$ because $c_i \in V \cap S$ and the length of~$\sigma_i$ is at most~$\ell/4$. For~$i < j$, we have~$\sigma_i$ and~$\sigma_j$ disjoint, otherwise we would have $d_G(c_i,c_j) \leq \ell/2$ and $c_j \in C_i$, which is impossible. The length of~$\sigma_i$ is~$\ell/4$ unless $d_G(v,c_i) \leq \ell/4$, which can happen at most once. Therefore, the sum of the lengths of edges in $2S$ is at least the sum of the lengths of $\sigma_i$ for $1 \leq i \leq K$, which is at least $(K-1) \ell /4$. By Fact~\ref{fact:low_density_implies_packed_weak}, $(\frac{(K-1)\ell}{4 \lambda \ell})^2 = O(|V \cap 6S|)$. Rearranging, we get $K = O(\lambda \sqrt{|V \cap 6S|})$, as required.
\end{proof}

Equipped with a bound on the number of clusters that each Euclidean WSPD set partitions into, we are now ready to bound the size of the Graph WSPD. We formally state the WSPD algorithm and its size bound in Theorem~\ref{theorem:bounded_spread_wspd}.

\begin{theorem}
\label{theorem:bounded_spread_wspd}
Let $G=(V,E)$ be a $\lambda$-low-density graph with $n$ vertices, and let $\Phi$ be the spread of~$V$. For all $\varepsilon > 0$, there exists a $(1/\varepsilon)$-WSPD for $V$ in the graph metric $G$ with $O(n \lambda^2 \varepsilon^{-2} \log \Phi)$ pairs. 
\end{theorem}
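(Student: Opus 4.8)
The plan is to realise the two-step construction from the technical overview and then charge the size of the resulting Graph WSPD to an inflated weight of the Euclidean WSPD. First, apply Fact~\ref{fact:bounded_spread_euclidean_wspd} to obtain a $(1/\varepsilon)$-WSPD $\{(A_i,B_i)\}_{i=1}^{k}$ for $V$ in the metric $\mathbb{R}^2$ with $k=O(n\varepsilon^{-2})$, where each $A_i=V\cap S_i$ for a cell $S_i$ of an uncompressed quadtree on $V$ (property (b)), the cells $S_i,S_i'$ are themselves $(1/\varepsilon)$-well-separated, and properties (a)--(e) hold; write $\ell_i$ for the side length of $S_i$. Then, for each $i$, apply Lemma~\ref{lemma:clustering_weak} with the square $S:=S_i$ to partition $A_i=V\cap S_i$ into $K_i=O(\lambda\sqrt{|V\cap 6S_i|})$ clusters $\{C_{ij}\}_j$, each of graph diameter at most $\ell_i$, and likewise partition $B_i$ into $K_i'=O(\lambda\sqrt{|V\cap 6S_i'|})$ clusters $\{D_{ik}\}_k$. (Lemma~\ref{lemma:clustering_weak} is stated for connected $G$; if $G$ is disconnected one runs the construction on each connected component.) Output $\mathcal{P}_G:=\{(C_{ij},D_{ik})\}_{i,j,k}$.

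I would then verify the two defining properties. Because $\{C_{ij}\}_j$ partitions $A_i$, $\{D_{ik}\}_k$ partitions $B_i$, and $\{(A_i,B_i)\}_i$ covers every unordered pair $\{u,v\}\subseteq V$ exactly once, there is exactly one triple $(i,j,k)$ with $u\in C_{ij}$ and $v\in D_{ik}$, so $\mathcal{P}_G$ is a pair decomposition. For well-separation, I would chain
\[
\max\!\bigl(\diam_G(C_{ij}),\diam_G(D_{ik})\bigr)\ \le\ \max(\ell_i,\ell_i')\ \le\ \varepsilon\,d_2(A_i,B_i)\ \le\ \varepsilon\,d_2(C_{ij},D_{ik})\ \le\ \varepsilon\,d_G(C_{ij},D_{ik}),
\]
where the first inequality is Lemma~\ref{lemma:clustering_weak}; the second holds because $S_i,S_i'$ are $(1/\varepsilon)$-separated cells, so $\ell_i\le\diam(S_i)\le\varepsilon\,d_2(S_i,S_i')\le\varepsilon\,d_2(A_i,B_i)$ and symmetrically for $\ell_i'$; the third uses $C_{ij}\subseteq A_i$ and $D_{ik}\subseteq B_i$; and the last uses that the straight-line embedding gives $d_G\ge d_2$ pointwise. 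Hence every pair of $\mathcal{P}_G$ is $(1/\varepsilon)$-well-separated in the graph metric, so $\mathcal{P}_G$ is a $(1/\varepsilon)$-WSPD.

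It remains to bound $|\mathcal{P}_G|=\sum_i K_iK_i'$. The cluster-count bounds together with the AM-GM inequality $\sqrt{ab}\le(a+b)/2$ give $|\mathcal{P}_G|=O(\lambda^2)\sum_i\bigl(|V\cap 6S_i|+|V\cap 6S_i'|\bigr)$, so it suffices to establish the ``$6$-expanded weight bound'' $\sum_i\bigl(|V\cap 6S_i|+|V\cap 6S_i'|\bigr)=O(n\varepsilon^{-2}\log\Phi)$. I would prove this by adapting the argument behind Fact~\ref{fact:bounded_spread_euclidean_wspd}(e): rewrite the sum as $\sum_u(\text{number of appearances of } u \text{ in the Euclidean WSPD})\cdot|V\cap 6u|$ over quadtree cells $u$, which by property (c) is $O(\varepsilon^{-2})\sum_u|V\cap 6u|=O(\varepsilon^{-2})\sum_{v\in V}\#\{u: v\in 6u\}$. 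The cells of a fixed quadtree level form a uniform grid of some side length $\ell$, and $v\in 6u$ forces the centre of $u$ to lie within $L_\infty$-distance $3\ell$ of $v$, so $v$ lies in $6u$ for at most $49$ cells $u$ of that level; since the quadtree has $O(\log\Phi)$ levels (property (a)), $\#\{u: v\in 6u\}=O(\log\Phi)$, and the expanded weight is $O(n\varepsilon^{-2}\log\Phi)$. Substituting back gives $|\mathcal{P}_G|=O(n\lambda^2\varepsilon^{-2}\log\Phi)$.

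I expect the $6$-expanded weight bound to be the crux. Fact~\ref{fact:bounded_spread_euclidean_wspd}(e) controls only $\sum_i(|A_i|+|B_i|)$, i.e.\ the sum over \emph{un}inflated cells; the key realisation is that inflating every cell by a fixed constant factor multiplies the per-vertex, per-level multiplicity by only a constant \emph{independent of the cell size}, so the ``each point participates in $O(\log\Phi)$ nodes'' accounting survives essentially verbatim. A lesser subtlety is in the well-separation step: one needs the quadtree cells $S_i,S_i'$ (not merely the point sets $A_i,B_i$) to be $(1/\varepsilon)$-separated so that $\ell_i\le\varepsilon\,d_2(A_i,B_i)$; this is what a quadtree-based construction delivers, but if only point-set separation were available one would instead cluster using the smallest axis-parallel square enclosing $A_i$, whose side length is at most $\diam_2(A_i)\le\varepsilon\,d_2(A_i,B_i)$, at the cost of replacing the factor $6$ by a slightly larger constant throughout.
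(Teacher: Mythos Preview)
Your proposal is correct and follows essentially the same approach as the paper: build the Euclidean quadtree WSPD from Fact~\ref{fact:bounded_spread_euclidean_wspd}, cluster each side via Lemma~\ref{lemma:clustering_weak}, verify well-separation by the chain $\diam_G(C_{ij})\le\ell_i\le\varepsilon\,d_2(A_i,B_i)\le\varepsilon\,d_G(C_{ij},D_{ik})$, and bound the number of pairs via AM--GM together with the $6$-expanded weight bound obtained by the per-level, per-point counting argument. The paper's proof is the same in all essentials; your constant $49$ versus the paper's $36$ for the per-level multiplicity, and your explicit handling of the disconnected case and of cell-versus-point-set separation, are cosmetic differences only.
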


\begin{proof}
Using Fact~\ref{fact:bounded_spread_euclidean_wspd}, construct a Euclidean WSPD with separation constant~$(1/\varepsilon)$ to obtain a pair decomposition $\{(A_i, B_i)\}$, where $A_i$ and $B_i$ are point sets with associated quadtree nodes.  Lemma~\ref{lemma:clustering_weak} states that $A_i$ can be partitioned into $O(\lambda \sqrt{|6A_i|})$ clusters $\{C_{ij}\}$, and that $B_i$ can be partitioned into $O(\lambda \sqrt{|6B_i|})$ clusters $\{D_{ik}\}$. Collecting all pairs of clusters over all $(A_i, B_i)$ gives us the pairs $\{(C_{ij}, D_{ik})\}$. Since $\{C_{ij}\}$ and $\{D_{ik}\}$ are partitions of~$A_i$ and $B_i$, we get that $\{(C_{ij}, D_{ik})\}$ is a pair decomposition of $V \times V$. It remains to prove that the pair decomposition $\{(C_{ij}, D_{ik})\}$ is well-separated and has linear size.

We will prove that $(C_{ij}, D_{ik})$ is well-separated. Let $\ell(A_i)$ denote the side length of the quadtree node containing $A_i$. Then,
\begin{align*}
    (1/\varepsilon) \cdot \max(\diam_G(C_{ij}),\diam_G(D_{ik}))
    &\leq (1/\varepsilon) \cdot \max(\ell(A_i),\ell(B_i)) \\ 
    &\leq d_2(A_i,B_i) \\
    &\leq d_G(C_{ij},D_{ik})
\end{align*}
where the first inequality comes from Lemma~\ref{lemma:clustering_weak}, the second inequality comes from the quadtree nodes for~$A_i$ and~$B_i$ being well-separated, and the third inequality comes from $d_2(u,v) \leq d_G(u,v)$, $C_{ij} \subseteq A_i$ and $D_{ik} \subseteq B_i$.

We will prove that the pair decomposition $\{(C_{ij}, D_{ik})\}$ has linear size. Since $A_i$ is partitioned into $O(\lambda \sqrt{|6A_i|})$ clusters and $B_i$ is partitioned into $O(\lambda \sqrt{|6B_i|})$ clusters, the total number of pairs is bounded above by $\sum_i O(\lambda^2 \sqrt{|6A_i|\cdot |6B_i|})$. But $\sqrt{|6A_i|\cdot |6B_i|} \leq  \frac 1 2 (|6A_i|+ |6B_i|)$ by the AM-GM~\cite{wiki:amgm} inequality, so the total number of pairs is at most $\sum_i O(\lambda^2 (|6A_i| + |6B_i|))$. We will bound the expression $\sum_i (|6A_i| + |6B_i|)$ in the paragraph below. Our bound is, up to a constant factor, the same as the weight bound for the Euclidean WSPD in Fact~\ref{fact:bounded_spread_euclidean_wspd}, which is $\sum_i (|A_i| + |B_i|) = O(n \varepsilon^{-2} \log \Phi)$. 

It remains to bound $\sum_i (|6A_i| + |6B_i|)$. We rely heavily on Fact~\ref{fact:bounded_spread_euclidean_wspd} and its proof, see Lemma~3.29 in~\cite{har2011geometric}. Recall from Fact~\ref{fact:bounded_spread_euclidean_wspd}b that $A_i$ and $B_i$ are quadtree nodes, and from Fact~\ref{fact:bounded_spread_euclidean_wspd}c that each quadtree node appears at most~$O(\varepsilon^{-2})$ times in the WSPD. Therefore, $6A_i$ can appear at most~$O(\varepsilon^{-2})$ times in the summation $\sum_i(|6A_i| + |6B_i|)$. Recall from Fact~\ref{fact:bounded_spread_euclidean_wspd}d that each point in~$V$ participates in at most one quadtree node per level. Expanding each quadtree node by a factor of~6, each point may now participate in at most~36 expanded quadtree nodes per level. Recall from Fact~\ref{fact:bounded_spread_euclidean_wspd}a that there are $O(\log \Phi)$ quadtree levels. So each point can participate in $O(\log \Phi)$ expanded quadtree nodes, each of which can appear at most $O(\varepsilon^{-2})$ times in the expression~$\sum_i(|6A_i| + |6B_i|)$. Therefore, each point contributes at most $O(\varepsilon^{-2} \log \Phi)$ times to the summation $\sum_i(|6A_i| + |6B_i|)$. As there are $n$ points, we get $\sum_i(|6A_i| + |6B_i|) = O(n \varepsilon^{-2} \log \Phi)$. 

As a result, we obtain a final bound of $\sum_i O(\lambda^2 \sqrt{|6A_i|\cdot |6B_i|}) = O( n \lambda^2 \varepsilon^{-2} \log \Phi)$ on the total number of pairs in the WSPD, as required.
\end{proof}

\subsection{Unbounded spread case}

\label{subsection:unbounded_spread}

In this subsection, we bound the size of the WSPD independently of the spread~$\Phi$. In particular, we replace the $O(\log \Phi)$ dependence in Theorem~\ref{theorem:bounded_spread_wspd} with an $O(\varepsilon^{-2} \log n)$ dependence. See Theorem~\ref{theorem:unbounded_spread_wspd} for the final result of this section. Note that under our input model, the spread~$\Phi$ may be unbounded with respect to the size of the input measured in words, see Definition~\ref{definition:word_ram}.

The main challenge in the unbounded spread case is that the Euclidean WSPD's weight in Fact~\ref{fact:bounded_spread_euclidean_wspd} can be~$\Omega(n^2)$, so it can no longer be used. In particular, the Euclidean WSPD's weight bound is critical to the step $\sum_i \lambda^2  (|6A_i| + |6B_i|) = O(n \lambda^2 \varepsilon^{-2} \log \Phi)$ when bounding the size of the Graph WSPD. If we want to follow a similar approach to Section~\ref{subsection:bounded_spread}, we will need an alternative to the weight bound in Fact~\ref{fact:bounded_spread_euclidean_wspd}.
 
To this end, we introduce a new measure: the semi-weight of a Euclidean WSPD. Unlike the weight, we can always construct a Euclidean WSPD with low semi-weight, for any point set in the Euclidean plane. Next, we define the semi-size of a set and semi-weight of a WSPD.

\begin{definition}[Semi-size]
\label{definition:semi-size}
Let $A$ be a set of points in $\mathbb R^2$ and let $n \geq |A|$. We say $\semisize(A) \leq k$ if there exist $k$ discs that cover all points in $A$, and each disc has a diameter of $\frac 1 n \diam_2(A)$.
\end{definition}

\begin{definition}[Semi-weight]
\label{definition:semi-weight}
Let $V$ be a set of $n$ points in $\mathbb R^2$ and let $\mathcal P = \{(A_i,B_i)\}_{i=1}^k$ be a pair decomposition of $V$. The semi-weight of $\mathcal P$ is defined as $\sum_{i=1}^k (\semisize(A_i) + \semisize(B_i))$. 
\end{definition}

\begin{figure}[ht]
    \centering
    \includegraphics[width=0.28\textwidth]{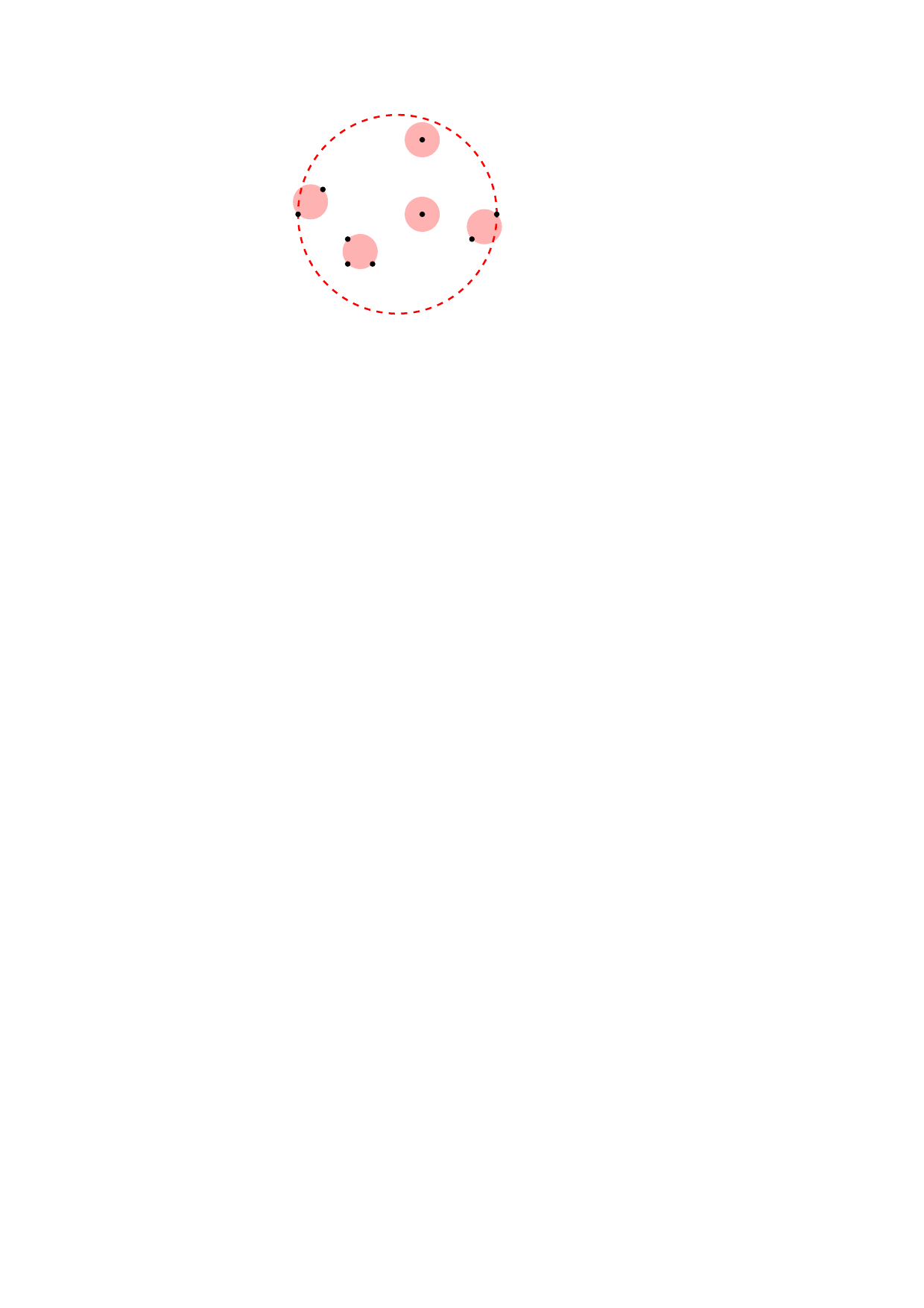}
    \caption{We have $\semisize(A) \leq k$ if~$A$ can be covered by~$k$ discs, all with diameter at most $\frac 1 n \diam_2(A)$.}
    \label{figure:semisize}
\end{figure}

Note that the semi-size of a set is at most its size, since each point can be covered by an individual disc. See Figure~\ref{figure:semisize}. Similarly, the semi-weight of a WSPD is at most its weight. For our purposes, we do not need to compute the semi-weight of a set exactly, as it suffices to upper bound the semi-size of a set. In particular, by computing only an upper bound, we can avoid computing a minimum geometric set cover which is known to be NP-hard.

Our goal is to adapt the approach in Section~\ref{subsection:bounded_spread} to use the semi-weight of a Euclidean WSPD instead of its weight. We summarise our approach. We start by constructing a new Euclidean WSPD with bounded \mbox{semi-weight} in Lemma~\ref{lemma:unbounded_spread_euclidean_wspd} to replace Fact~\ref{fact:bounded_spread_euclidean_wspd}. Then we prove a stronger length bound for low density graphs, and we replace Fact~\ref{fact:low_density_implies_packed_weak} with Lemma~\ref{lemma:low_density_implies_packed_strong}. The main technical contributions of this section are Lemmas~\ref{lemma:unbounded_spread_euclidean_wspd} and~\ref{lemma:low_density_implies_packed_strong}. In Lemma~\ref{lemma:clustering_strong} we improve the clustering bound of Lemma~\ref{lemma:clustering_weak} to depend on the semi-size of the set instead of its size, and finally in Theorem~\ref{theorem:unbounded_spread_wspd} we prove the bound on the WSPD size. Refer to Table~\ref{table:adapting}.

\begin{table}[ht]
    \centering
    \renewcommand{\arraystretch}{1.5}
    \setlength{\tabcolsep}{10pt}
    \begin{tabular}{|c|c|c|c|}
        \hline
        & Bounded spread & Unbounded spread & New proof\\
        \hline
        \hline
        Euclidean WSPD & Fact~\ref{fact:bounded_spread_euclidean_wspd} & Lemma~\ref{lemma:unbounded_spread_euclidean_wspd} & Yes
        \\
        \hline
        Length bound for low density graphs & Fact~\ref{fact:low_density_implies_packed_weak} & Lemma~\ref{lemma:low_density_implies_packed_strong} & Yes
        \\
        \hline
        Clustering of quadtree nodes& Lemma~\ref{lemma:clustering_weak} & Lemma~\ref{lemma:clustering_strong} & -
        \\
        \hline
        Graph WSPD & Theorem~\ref{theorem:bounded_spread_wspd} & Theorem~\ref{theorem:unbounded_spread_wspd} & -
        \\
        \hline
    \end{tabular}
    \caption{Adapting the approach from the bounded spread case to the unbounded spread case.}
    \label{table:adapting}
\end{table}

As stated previously, we construct a new Euclidean WSPD with bounded semi-weight. For comparison, we restate the Euclidean WSPD in Fact~\ref{fact:bounded_spread_euclidean_wspd}, and then discuss our approach. 

\bswspd*

A natural approach would be to replace Fact~\ref{fact:bounded_spread_euclidean_wspd}a with a compressed quadtree. However, doing so creates several issues for Facts~\ref{fact:bounded_spread_euclidean_wspd}b-d. The first issue is that, instead of having $O(\log \Phi)$ levels, the compressed quadtree may have $\Omega(n)$ levels. Therefore, the bound of $O(\log \Phi)$ in Fact~\ref{fact:bounded_spread_euclidean_wspd}d no longer applies. The second issue is that in a compressed quadtree, a child node may be much smaller than its parent. As a result, the well-separated pairs in Fact~\ref{fact:bounded_spread_euclidean_wspd}b may consist of pairs of quadtree nodes of vastly different sizes. The packing argument to prove Fact~\ref{fact:bounded_spread_euclidean_wspd}c requires that paired quadtree nodes have similar sizes. Therefore, the previous proof of Fact~\ref{fact:bounded_spread_euclidean_wspd}c no longer holds on compressed quadtrees, and a quadtree node may appear $\Omega(n)$ times in the WSPD.

We resolve these issues in Lemma~\ref{lemma:unbounded_spread_euclidean_wspd}. To resolve the first issue, we leverage the semi-size and semi-weight definitions. We cover each quadtree node with a disc. Rather than count the number of times a point contributes to the weight $\sum_i (|A_i| + |B_i|)$, we instead count the number of times its covering disc contributes to the semi-weight $\sum_i(\semisize(A_i) + \semisize(B_i))$. To resolve the second issue, we augment the compressed quadtree to force the WSPD pairs to be of the same size. We call this new quadtree a semi-compressed quadtree. The packing argument can now be applied to the semi-compressed quadtree to prove that each quadtree node appears at most $O(\varepsilon^{-2})$ times in the WSPD. 

We construct the new Euclidean WSPD in Lemma~\ref{lemma:unbounded_spread_euclidean_wspd}. Note that Lemmas~\ref{lemma:unbounded_spread_euclidean_wspd}a, \ref{lemma:unbounded_spread_euclidean_wspd}b, \ref{lemma:unbounded_spread_euclidean_wspd}c, \ref{lemma:unbounded_spread_euclidean_wspd}d, \ref{lemma:unbounded_spread_euclidean_wspd}e are analogous to Facts~\ref{fact:bounded_spread_euclidean_wspd}a, \ref{fact:bounded_spread_euclidean_wspd}b, \ref{fact:bounded_spread_euclidean_wspd}c, \ref{fact:bounded_spread_euclidean_wspd}d, \ref{fact:bounded_spread_euclidean_wspd}e. 

\begin{lemma}
\label{lemma:unbounded_spread_euclidean_wspd}
Let $V$ be a set of $n$ points in $\mathbb R^2$. For all $\varepsilon > 0$, there is a $(1/\varepsilon)$-WSPD for $V$ in the metric $\mathbb R^2$ with $O(n\varepsilon^{-2})$ pairs. Moreover,
\begin{enumerate}[noitemsep,label=(\alph*)]
    \item There is a semi-compressed quadtree with $O(n \varepsilon^{-2})$ nodes.
    \item The WSPD consists of pairs of well-separated quadtree nodes.
    \item Each quadtree node appears at most $O(\varepsilon^{-2})$ times in the WSPD.
    \item Each quadtree node has an associated covering disc. Each covering disc contributes to the semi-size of at most one quadtree node per level, and to at most $O(\log n)$ levels in total.
    \item The WSPD has $O(n\varepsilon^{-4}\log n)$ semi-weight.
\end{enumerate}
\end{lemma}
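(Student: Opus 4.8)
The plan is to obtain the semi-compressed quadtree in two stages, and then read off properties (a)--(e) from its structure, mirroring the bounded-spread argument behind Fact~\ref{fact:bounded_spread_euclidean_wspd} but with ``point of $V$'' replaced by ``covering disc of a node''. In the first stage I would build an ordinary compressed quadtree $T_0$ on $V$ in $O(n\log n)$ time (e.g.\ as in Har-Peled's book): it has $O(n)$ nodes, every node is an axis-aligned square, every internal node is either a standard four-way subdivision or the top endpoint of a single \emph{compressed edge} to a strictly smaller square carrying the same point set, and a $(1/\varepsilon)$-WSPD is extracted by the usual split-the-larger recursion, which emits $O(\varepsilon^{-2})$ pairs per branching node of $T_0$ and hence $O(n\varepsilon^{-2})$ pairs in total. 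As flagged before the statement, $T_0$ has two defects: a compressed edge can skip $\Omega(\log\Phi)$ scales, so no $O(\log n)$-level bound is available; and the recursion can emit a pair in which one node sits at the small end of a compressed edge and is vastly smaller than its partner, which destroys the annulus-packing argument behind property~(c).

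In the second stage I would run the WSPD recursion on $T_0$ but repair unbalanced pairs on the fly. Whenever the recursion is about to emit a $(1/\varepsilon)$-well-separated pair $(u,v)$ with, say, $\ell(u) \gg \ell(v)$ because $v$ lies at the small end of a compressed edge, I insert a \emph{virtual node} $v'$ on that compressed path --- a square of side length $\Theta(\ell(u))$ that contains exactly the points of $v$'s subtree --- and emit $(u,v')$ instead; since $\diam_2$ of $v$'s point set is unchanged and is already at most $\diam_2(u)$, the new pair is still well-separated, and now its two nodes have side lengths within a constant factor. The \emph{semi-compressed quadtree} is $T_0$ together with all virtual nodes created this way. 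Property~(b) is then immediate, since every emitted pair is a pair of its nodes. For the node count in~(a), each virtual node is charged to a single ``unbalanced-pair repair'' event, and a routine adaptation of the classical analysis bounds the number of such events at any fixed node of $T_0$ by $O(\varepsilon^{-2})$; as $T_0$ has $O(n)$ nodes, the semi-compressed quadtree has $O(n\varepsilon^{-2})$ nodes. Property~(c) now follows from the packing argument applied to the repaired construction: every pair is size-balanced, so a node of side length $\ell$ can only be paired with nodes at scale $\Theta(\ell)$ lying in an annulus of radii $\Theta(\ell/\varepsilon)$ about it, of which there are $O(\varepsilon^{-2})$.

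For (d) and (e) I would assign to each node $w$ its \emph{covering disc} $\delta(w)$, the smallest disc enclosing the point set of $w$ (so $\delta(w)$ has diameter $\diam_2(w)$), and I would witness the semi-size of a WSPD node $A$ by the discs $\delta(w)$ over the highest descendants $w$ of $A$ with $\diam_2(w) \le \tfrac1n\diam_2(A)$: these discs cover $A$ and each has diameter at most $\tfrac1n\diam_2(A)$, so by Definition~\ref{definition:semi-size} they certify $\semisize(A)\le$ (their number). A fixed disc $\delta(w)$ is selected in this way only for ancestors $A$ of $w$ with $\diam_2(A)$ in a bounded multiplicative range above $n\,\diam_2(w)$: the semi-compressed structure is designed precisely so that the diameter ratio across each tree edge is capped at $\poly(n)$ (coarser ancestors being served instead by covering discs of $w$'s own ancestors), so this range spans $O(\log n)$ levels, and since $w$ has a unique ancestor per level, $\delta(w)$ is charged to at most one node per level and to $O(\log n)$ levels in all --- this is~(d). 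Property~(e) is then a double count: $\sum_i(\semisize(A_i)+\semisize(B_i))$ is at most the number of incidences between covering discs and WSPD-appearances of their ancestors; a disc $\delta(w)$ meets $O(\log n)$ ancestor levels (by~(d)) and each such ancestor appears $O(\varepsilon^{-2})$ times in the WSPD (by~(c)), so each of the $O(n\varepsilon^{-2})$ covering discs contributes $O(\varepsilon^{-2}\log n)$, giving semi-weight $O(n\varepsilon^{-4}\log n)$.

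The step I expect to be the real obstacle is pinning down the semi-compressed quadtree precisely enough for~(a),~(c) and~(d) simultaneously: one must specify exactly which scales along each compressed edge receive (virtual) nodes, show that inserting such a node does not spawn new unbalanced pairs so that the repair process terminates with only $O(n\varepsilon^{-2})$ additions, and reconcile the $\poly(n)$ cap on edge-wise diameter ratios used in~(d) with that $O(n\varepsilon^{-2})$ node bound --- the naive decompression to enforce the cap would add $\Theta(\log_n\Phi)$ nodes per compressed edge, so the construction must instead add nodes only at the scales the WSPD recursion actually touches. Once the construction is fixed, (b), (d) and (e) are a fairly mechanical port of the argument underlying Fact~\ref{fact:bounded_spread_euclidean_wspd}, with the $O(\log\Phi)$-levels bound replaced by the $O(\log n)$-levels bound that Definition~\ref{definition:semi-size} buys us.
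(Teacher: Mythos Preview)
Your high-level plan matches the paper's: compressed quadtree, WSPD, insert extra nodes along compressed edges so every emitted pair has cells of equal side length, then derive (c)--(e) from this size balance. The paper's realisation is more concrete than your on-the-fly repair and sidesteps the termination worry you flag: it first computes a $(2/\varepsilon)$-WSPD on the compressed quadtree, and for each pair $(A_i,B_i)$ proves there exist \emph{congruent} ancestors $A',B'$ on the paths $\parent(A_i)\to A_i$ and $\parent(B_i)\to B_i$ that are $(1/\varepsilon)$-well-separated while $(\double(A'),\double(B'))$ is not. The new WSPD is the set of these $(A',B')$; the semi-compressed quadtree is the compressed quadtree with exactly these $A',B'$ inserted. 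This gives two new nodes per $(2/\varepsilon)$-pair, hence~(a), and congruence together with $(1/\varepsilon)$-maximality directly drives the packing argument for~(c).

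There is a real gap in your argument for~(d). You take the covering disc of $w$ to be the smallest disc enclosing the \emph{point set} of $w$, and you justify the $O(\log n)$ level bound by asserting that ``the diameter ratio across each tree edge is capped at $\poly(n)$''. The semi-compressed quadtree --- yours or the paper's --- does \emph{not} have this cap: compressed edges with arbitrarily large ratio survive the construction, and you yourself note that enforcing the cap by decompression would cost $\Theta(\log_n\Phi)$ nodes per edge. The paper avoids the issue by a different choice. The covering disc of a node $Q$ is the circumscribed disc of its \emph{cell}, and $Cover(A')$ consists of the highest descendants $Q$ of $A'$ in the semi-compressed tree with cell diameter at most $\tfrac1n\diam_2(A')$. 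Then $Q\in Cover(A')$ forces $\diam(\parent(Q))>\tfrac1n\diam_2(A')$, i.e.\ $\diam_2(A')<n\cdot\diam(\parent(Q))$. Now $A'$ ranges over ancestors of $Q$ in the semi-compressed tree; the smallest such ancestor is $\parent(Q)$, and ancestor cell side lengths are distinct powers of two, so at most $O(\log n)$ of them have side length in $[\ell(\parent(Q)),\,n\cdot\ell(\parent(Q))]$. No edge-ratio cap is needed --- the $O(\log n)$ bound comes from cell side lengths doubling, not point-set diameters. With this correction, your double count for~(e) goes through verbatim.
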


\begin{proof}
First, we will construct the semi-compressed quadtree in Lemma~\ref{lemma:unbounded_spread_euclidean_wspd}a as follows. Construct a regular compressed quadtree on~$V$. Construct a $(2/\varepsilon)$-WSPD using this quadtree. For the remainder of this proof, when we say ``$(2/\varepsilon)$-WSPD'' we refer to this WSPD. The textbook~\cite{har2011geometric} provides an algorithm for constructing a WSPD from a compressed quadtree. Next, we will use the \mbox{$(2/\varepsilon)$-WSPD} to uncompress and recompress the quadtree and obtain a semi-compressed quadtree. Let~$(A_i,B_i)$ be a well-separated pair in the \mbox{$(2/\varepsilon)$-WSPD}. Let~$\parent(A_i)$, $\parent(B_i)$ be the parents of $A_i$ and $B_i$, respectively, in the compressed quadtree. For now, uncompress the paths $\{\parent(A_i) \to A_i\}$ and $\{\parent(B_i) \to B_i\}$. Let $A'$ and $B'$ be any uncompressed quadtree nodes along the paths $\{\parent(A_i) \to A_i\}$ and $\{\parent(B_i) \to B_i\}$. See Figure~\ref{figure:recompress}. Define $\double(A')$ to be the uncompressed quadtree node that contains $A'$ and has two times its side length. Define $\double(B')$ similarly. We call $A'$ and $B'$ are congruent if they have the same side length. We call $A'$ and $B'$ $(1/\varepsilon)$-maximal if~$(A',B')$ is $(1/\varepsilon)$-well-separated, but $(\double(A'),\double(B'))$ is not $(1/\varepsilon)$-well-separated. We claim that along the paths $\{\parent(A_i) \to A_i\}$ and $\{\parent(B_i) \to B_i\}$, there always exists some~$A'$ and~$B'$ that are congruent and $(1/\varepsilon)$-maximal. We will prove the claim in the next paragraph. We recompress the path to $\{\parent(A_i) \to A' \to A_i\}$ and $\{\parent(B_i) \to B' \to B_i\}$. Note that the only difference to the compressed quadtree is that $A'$ and $B'$ are now included as quadtree nodes. Repeat this process for all pairs $(A_i,B_i)$ in the $(2/\varepsilon)$-WSPD. The total number of nodes in the semi-compressed quadtree is now $O(n \varepsilon^{-2})$, as there are~$O(n \varepsilon^{-2})$ pairs in the $(2/\varepsilon)$-WSPD. This completes the construction of the semi-compressed quadtree in Lemma~\ref{lemma:unbounded_spread_euclidean_wspd}a.  

\begin{figure}[ht]
    \centering
    \includegraphics[width=0.55\textwidth]{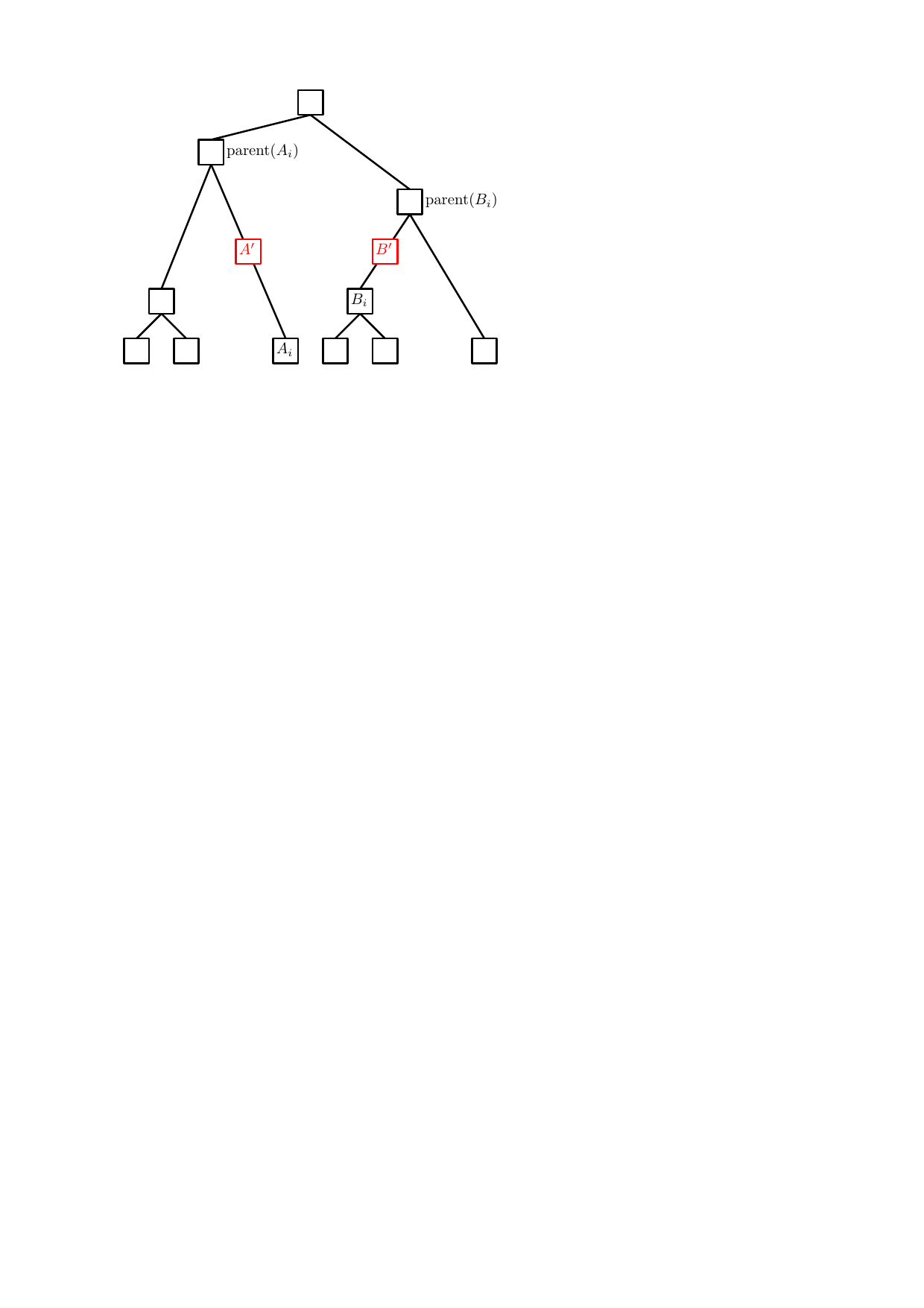}
    \caption{For all pairs~$(A_i,B_i)$ in the $(2/\varepsilon)$-WSPD, we reintroduce node~$A'$ on the path $\{A_i \to \parent(A_i)\}$ and node~$B'$ on the path $\{B_i \to \parent(B_i)\}$, where $(A',B')$ are congruent and $(1/\varepsilon)$-maximal.}
    \label{figure:recompress}
\end{figure}

To prove the correctness of Lemma~\ref{lemma:unbounded_spread_euclidean_wspd}a, it remains to show that there always exists~$A'$ and $B'$ that are congruent and $(1/\varepsilon)$-maximal. For now, uncompress the paths $\{\parent(A_i) \to A_i\}$ and $\{\parent(B_i) \to B_i\}$. Define $A_{\min}$ and $B_{\min}$ to be quadtree nodes that are on the paths $\{\parent(A_i) \to A_i\}$ and $\{\parent(B_i) \to B_i\}$ respectively, that are congruent to the largest of~$A_i$ and~$B_i$. Define $A_{max}$ and $B_{max}$ to be quadtree nodes on the paths $\{\parent(A_i) \to A_i\}$ and $\{\parent(B_i) \to B_i\}$ respectively, that are congruent to the smallest of~$\parent(A_i)$ and~$\parent(B_i)$. Suppose for the sake of contradiction that the claim is false. The negation of the claim implies that for any pair of congruent quadtree nodes~$A'$ and~$B'$ on the paths~$\{A_{max} \to A_{min}\}$ and~$\{B_{max} \to B_{min}\}$ respectively, if $A'$ and $B'$ are $(1/\varepsilon)$-well-separated, then $\double(A_i)$ and $\double(B_i)$ are also $(1/\varepsilon)$-well-separated. In the first bullet point below, we will prove that $A_{min}$ and~$B_{min}$ are~$(1/\varepsilon)$-well-separated. By the negation of the claim, $\double(A_{min})$ and~$\double(B_{min})$ are also~$(1/\varepsilon)$-well-separated. This continues inductively up the uncompressed paths~$\{A_{max} \to A_{min}\}$ and~$\{B_{max} \to B_{min}\}$ until $\double(A_{max})$ and $\double(B_{max})$ are~$(1/\varepsilon)$-well-separated. In the second bullet point below, we will prove that $\double(A_{max})$ and $\double(B_{max})$ are not~$(1/\varepsilon)$-well-separated, which is a contradiction, and completing the proof. It remains only to show that $A_{min}$ and~$B_{min}$ are~$(1/\varepsilon)$-well-separated (first bullet point) and that $\double(A_{max})$ and $\double(B_{max})$ are not~$(1/\varepsilon)$-well-separated (second bullet point). 

\begin{itemize}
    \item Proof that $A_{min}$ and~$B_{min}$ are~$(1/\varepsilon)$-well-separated. Recall that $(A_i,B_i)$ are $(2/\varepsilon)$-well-separated. Recall that $A_{\min}$ and $B_{\min}$ are defined to be congruent to the largest of~$A_i$ and~$B_i$, and are quadtree nodes on the paths $\{\parent(A_i) \to A_i\}$ and $\{\parent(B_i) \to B_i\}$ repectively. Therefore,
\begin{align*}
    (1/\varepsilon) \cdot &\max(\diam_2(A_{min}),\diam_2(B_{min})) \\
    &< ((2/\varepsilon) - 1) \cdot \max(\diam_2(A_{min}),\diam_2(B_{min}))  \\
    &\leq (2/\varepsilon) \cdot \max(\diam_2(A_{i}),\diam_2(B_{i})) - \max(\diam_2(A_{min}),\diam_2(B_{min}))  \\
    & \leq d_2(A_i,B_i) - \max(\diam_2(A_{min}),\diam_2(B_{min})) \\
    & \leq d_2(A_{min},B_{min}),
\end{align*}
    where the first inequality comes from~$0 < \varepsilon < 1$, the second inequality comes from $A_i \subseteq A_{min}$ and $B_i \subseteq B_{min}$, and the third inequality comes from $A_i$ and $B_i$ being $(2/\varepsilon)$-well-separated. The fourth inequality comes from the observation that at most one quadtree node $\{A_i,B_i\}$ needs to expand to obtain $\{A_{min},B_{min}\}$. The other quadtree node stays the same. As one of the quadtree nodes expands, the distance between the two sets decreases by at most the diameter of the expansion, which is at most $\max(\diam_2(A_{min}),\diam_2(B_{min}))$. Putting this together, we get that $A_{min}$ and $B_{min}$ are $(1/\varepsilon)$-well-separated, as required.
    
    \item Proof that $\double(A_{max})$ and $\double(B_{max})$ are not~$(1/\varepsilon)$-well-separated. We will first show that $A_{max}$ and $B_{max}$ are not~$(2/\varepsilon)$-well-separated. Prior to $(A_i, B_i)$ being considered as a candidate pair for the $(2/\varepsilon)$-WSPD, the WSPD construction algorithm must have either considered $(\parent(A_i),B_i)$ or $(A_i,\parent(B_i))$. Without loss of generality, suppose that $(\parent(A_i),B_i)$ was considered. It follows that~$\parent(A_i)$ cannot be larger than~$\parent(B_i)$, so $\parent(A_i) = A_{max}$. Moreover, since $(\parent(A_i),B_i)$ was considered but not added to the~$(2/\varepsilon)$-WSPD, we must have that~$\parent(A_i)$ and~$B_i$ are not $(2/\varepsilon)$-well-separated. Therefore,
\begin{align*}
    d_2(A_{max},B_{max}) &= d_2(parent(A_i),B_{max}) \\
    &\leq d_2(parent(A_i),B_i) \\
    &\leq (2/\varepsilon) \cdot \max(\diam_2(parent(A_i)), \diam_2(B_i)) \\
    &= (2/\varepsilon) \cdot \diam_2(parent(A_i)) \\
    &= (2/\varepsilon) \cdot \max(\diam_2(A_{max}), \diam_2(B_{max})),
\end{align*}
    where the first equality comes from $parent(A_i) = A_{max}$, the second inequality comes from $B_i \subseteq B_{max}$, the third inequality comes from $\parent(A_i)$ and $B_i$ not being $(2/\varepsilon)$-well-separated, the fourth equality comes from $\parent(A_i)$ being no smaller than $B_i$, and the fifth equality comes from $A_{max}$ and $B_{max}$ being congruent to $\parent(A_i)$. Therefore, $A_{max}$ and $B_{max}$ are not~$(2/\varepsilon)$-well-separated. But now,
\begin{align*}
    d_2(\double(A_{max}),\double(B_{max}))  
    &\leq d_2(A_{max}, B_{max}) \\
    &\leq (2/\varepsilon) \cdot \max(\diam_2(A_{max}), \diam_2(B_{max})) \\
    &= (1/\varepsilon) \cdot \max(\diam_2(\double(A_{max})), \diam_2(\double(B_{max}))),
\end{align*}
    where the first inequality comes from $A_{max} \subset \double(A_{max})$ and $B_{max} \subset \double(B_{max})$, the second inequality comes from $A_{max}$ and $B_{max}$ not being~$(2/\varepsilon)$-well-separated, and the third equality comes from the definition of $\double(\cdot)$. Therefore, $\double(A_{max})$ and $\double(B_{max})$ are not~$(1/\varepsilon)$-well-separated, as required.
\end{itemize}
This completes the proof of the claim that there always exists~$A'$ and $B'$ that are congruent and $(1/\varepsilon)$-maximal, thereby completing the proof of correctness of Lemma~\ref{lemma:unbounded_spread_euclidean_wspd}a. 

Second, while we construct the semi-compressed quadtree, we will obtain the new WSPD in Lemma~\ref{lemma:unbounded_spread_euclidean_wspd}b. The new WSPD is essentially the same as the $(2/\varepsilon)$-WSPD, except that we force WSPD pairs to be of the same size. In particular, the new WSPD consists of the congruent and $(1/\varepsilon)$-maximal pairs $(A',B')$ along the paths $\{\parent(A_i) \to A_i\}$ and $\{\parent(B_i) \to B_i\}$, for every well-separated pair $(A_i,B_i)$ in the $(2/\varepsilon)$-WSPD. Even though~$A'$ may be significantly larger than~$A_i$, we retain only the points in~$A_i$. In other words, the points associated with~$A'$ are exactly the points that were associated with~$A_i$. In particular, this implies that the pairs $(A',B')$ will remain a pair decomposition of $V \times V$. By the definition of~$A'$ and~$B'$, we have that $(A',B')$ is a $(1/\varepsilon)$-well-separated pair. This completes the construction of the new WSPD in Lemma~\ref{lemma:unbounded_spread_euclidean_wspd}b.

Third, we will apply a packing argument to prove Lemma~\ref{lemma:unbounded_spread_euclidean_wspd}c. Fix a quadtree node $A'$ in the semi-compressed quadtree, and suppose that $B'$ is some other quadtree node so that $(A',B')$ is a $(1/\varepsilon)$-well-separated pair in the WSPD in Lemma~\ref{lemma:unbounded_spread_euclidean_wspd}b. Then by our construction, $A'$ and $B'$ are the same size, and $\double(A')$ and $\double(B')$ are not $(1/\varepsilon)$-well-separated. Define $\ell(A') = \ell(B')$ to be the side length of $A'$ and $B'$. Then $d_2(A',B') \leq d_2(\double(A'),\double(B')) + 2 \ell(A') \leq (2/\varepsilon + 2) \cdot \ell(A')$. Therefore, $B'$ must be of side length $\ell(A')$ and within a distance of $O(\varepsilon^{-1} \cdot \ell(A'))$ from $A'$, so there are $O(\varepsilon^{-2})$ possible choices for~$B'$. Therefore, the quadtree node~$A'$ appears at most $O(\varepsilon^{-2})$ times in the new WSPD, completing the proof of Lemma~\ref{lemma:unbounded_spread_euclidean_wspd}c. Note that Lemma~\ref{lemma:unbounded_spread_euclidean_wspd}c immediately implies that the number of pairs in the new WSPD is~$O(n \varepsilon^{-2})$.

Fourth, we will perform a counting argument to prove Lemma~\ref{lemma:unbounded_spread_euclidean_wspd}d. For a semi-compressed quadtree node~$A'$ we can bound its semi-size by traversing down the semi-compressed tree. Create a priority queue which initially consists only of $A'$. Pop the first element~$Q$ from the priority queue and construct its circumscribed disc. If the circumscribed disc has diameter at most $\frac 1 n \diam_2(A')$, add~$Q$ to the set $Cover(A')$. If the circumscribed disc has diameter greater than $\frac 1 n \diam_2(A')$, add the children of~$Q$ to the priority queue. Continue until the priority queue is empty. This completes the construction of $Cover(A')$. By Definition~\ref{definition:semi-size}, we have $\semisize(A') \leq |Cover(A')|$. Consider a quadtree node $Q$. If $Q \in Cover(A')$, then~$Q$ must be a descendent of~$A'$. Therefore,~$Q$ contributes to at most one quadtree node~$A'$ per quadtree level. Moreover, if $Q \in Cover(A')$, then $\parent(Q)$ was considered for $Cover(A')$ but not included. Therefore, $\diam(\parent(Q)) \geq \frac 1 n \diam(A')$. Therefore, $A'$ is an ancestor of $Q$ satisfying $\diam(Q) \leq \diam(A') \leq n \cdot \diam(\parent(Q))$. However, there are only $O(\log n)$ ancestors of $Q$ satisfying this size requirement, since the size of an ancestor at least doubles per level. Therefore, every quadtree node~$Q$ contributes to the semi-size of at most one quadtree node per level, and to at most $O(\log n)$ levels in total, as required by Lemma~\ref{lemma:unbounded_spread_euclidean_wspd}d. 

Fifth, we prove Lemma~\ref{lemma:unbounded_spread_euclidean_wspd}e by bounding the semi-weight of the new WSPD. By Definition~\ref{definition:semi-weight}, the semi-weight of the WSPD is $\sum_i (\semisize(A_i) + \semisize(B_i))$, where $A_i$ and $B_i$ ranges over all WSPD pairs $(A_i, B_i)$. By Lemma~\ref{lemma:unbounded_spread_euclidean_wspd}c, each quadtree node~$A'$ appears at most $O(\varepsilon^{-2})$ times in the above expression, so the semi-weight is upper bounded by $O(\varepsilon^{-2} \cdot \sum \semisize(A'))$, where $A'$ ranges over all quadtree nodes in the semi-compressed quadtree. We upper bound $\semisize(A')$ by charging its cost to the elements of $Cover(A')$. Each quadtree node~$Q$ is charged at most $O(\log n)$ times by Lemma~\ref{lemma:unbounded_spread_euclidean_wspd}d, and there are~$O(n \varepsilon^{-2})$ quadtree nodes by Lemma~\ref{lemma:unbounded_spread_euclidean_wspd}a. Therefore, $\sum \semisize(A') = O(n \varepsilon^{-2} \log n)$, so the semi-weight of the new WSPD is $O(n \varepsilon^{-4} \log n)$, completing the proof of Lemma~\ref{lemma:unbounded_spread_euclidean_wspd}e.
\end{proof}

As stated previously, the next step is replace Fact~\ref{fact:low_density_implies_packed_weak} with Lemma~\ref{lemma:low_density_implies_packed_strong}. Note that Lemma~\ref{lemma:low_density_implies_packed_strong} generalises Fact~\ref{fact:low_density_implies_packed_weak}, and it provides length bound that depends only on the semi-size of the set of nearby points, rather than the number of nearby points. See Figure~\ref{figure:length_the_second}.

\begin{lemma}
\label{lemma:low_density_implies_packed_strong}
Let $G = (V,E)$ be a $\lambda$-low-density graph in $\mathbb R^2$ with $n$ vertices. Let $S$ be a square with side length $\ell$ and let $3S$ be the scaling of $S$ by a factor of 3 about its centre. Let $L$ be the total length of $E \cap S$ and let $m = \semisize(V \cap 3S)$. Then $(\frac L {\lambda \ell})^2 = O(m)$. 
\end{lemma}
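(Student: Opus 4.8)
The plan is to bound $L$ scale by scale. Writing $N(r)$ for the number of edges that intersect $S$ and have length more than $r$, and using that the part of any edge lying inside $S$ has length at most $\min(\text{edge length},\diam_2(S))$, we get $L\le\int_0^{\ell\sqrt2}N(r)\,dr$, so it suffices to show this integral is $O(\lambda\ell\sqrt m)$. I would establish three upper bounds on $N(r)$. (i) A \emph{packing bound} $N(r)=O(\lambda\ell^2/r^2)$ valid for all $r\le\ell$: cover $S$ by $O(\ell^2/r^2)$ squares of circumradius $r$, note each is met by at most $\lambda$ edges of length $\ge r$ by the low-density property, and every counted edge meets at least one square. (ii) A \emph{global bound} $N(r)=O(\lambda n)$ valid for all $r$: since $\lambda$-low-density implies $\lambda$-lanky, each vertex $v$ has degree at most $\lambda$ (all edges incident to $v$ cross the boundary of the ball centred at $v$ whose radius is just below the length of the shortest edge at $v$), hence $|E|=O(\lambda n)$. (iii) The new \emph{semi-size bound} $N(r)=O(\lambda m)$ valid for all $r\ge r_0$, where $r_0:=\Theta(\ell/n)$ and $m=\semisize(V\cap 3S)$.

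For (iii): by Definition~\ref{definition:semi-size}, $V\cap 3S$ is covered by $m$ discs $D_1,\dots,D_m$ each of radius $\rho\le\tfrac{3\sqrt2}{2}\cdot\tfrac{\ell}{n}$; set $r_0=8\rho$ and fix $r\ge r_0$. An edge of length $\ge r$ meeting $S$ with an endpoint outside $3S$ has length $\ge\ell$, and there are at most $\lambda$ such edges since they all meet the radius-$\ell$ ball centred at the centre of $S$. Every other edge of length $\ge r$ meeting $S$ has both endpoints in $V\cap 3S$, hence in some $D_i$. Fixing a vertex $v\in D_i$, every edge of length $\ge r$ with an endpoint $u\in D_i$ has $u$ inside $B(v,r/2)$ (as $d(u,v)\le 2\rho\le r/4$) and its far endpoint outside $B(v,r/2)$ (distance $\ge r-2\rho\ge 3r/4$), so by the lanky property applied to $B(v,r/2)$ at most $\lambda$ such edges exist. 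Charging each edge to the disc of one of its endpoints gives $N(r)\le\lambda m+\lambda=O(\lambda m)$.

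It then remains to split the integral. On $(0,r_0)$ use (ii): the contribution is $O(\lambda n)\cdot r_0=O(\lambda\ell)$. On $(r_0,r^\ast)$ with $r^\ast:=\Theta(\ell/\sqrt m)$ use (iii): the contribution is $O(\lambda m)\cdot r^\ast=O(\lambda\ell\sqrt m)$. On $(r^\ast,\ell\sqrt2)$ use (i): the contribution is $\int_{r^\ast}^{\infty}O(\lambda\ell^2/r^2)\,dr=O(\lambda\ell^2/r^\ast)=O(\lambda\ell\sqrt m)$. Summing gives $L=O(\lambda\ell\sqrt m)$, i.e. $(L/(\lambda\ell))^2=O(m)$; the degenerate cases in which $r^\ast\notin(r_0,\ell\sqrt2)$ (which forces $m=0$ or $n=O(1)$) or $V\cap 3S=\emptyset$ each give $L=O(\lambda\ell)$ directly. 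I expect the main obstacle to be step (iii), and within it the calibration of the threshold $r_0$: below $r_0$ the semi-size gives no control whatsoever on $N(r)$, since an arbitrarily tight cluster of vertices can carry $\Omega(\lambda n)\gg\lambda m$ edges, so one must instead argue that all those edges are short enough that their combined length is only $O(\lambda\ell)$ — which is exactly why both the global bound (ii) and the choice $r_0=\Theta(\ell/n)=\Theta(\tfrac1n\diam_2(V\cap 3S))$ are needed. Everything else closely mirrors the proof of Fact~\ref{fact:low_density_implies_packed_weak}.
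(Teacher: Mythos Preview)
Your proof is correct and follows essentially the same approach as the paper's: the three regimes you integrate over correspond exactly to the paper's partition $E_1$ (global degree bound, your (ii)), $E_2$ (semi-size bound, your (iii)), and $E_3$ (dyadic packing bound, your (i)), with the edges having an endpoint outside $3S$ handled identically; your layer-cake integral $L\le\int_0^{\ell\sqrt2}N(r)\,dr$ is simply the continuous version of the paper's discrete decomposition by $w_S(e)$. The only cosmetic difference is that for (iii) you invoke the lanky property on a vertex-centred ball $B(v,r/2)$, whereas the paper applies low density directly to the semi-size discs themselves --- both give the same $O(\lambda m)$ bound.
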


\begin{figure}[ht]
    \centering
    \includegraphics[width=0.78\textwidth]{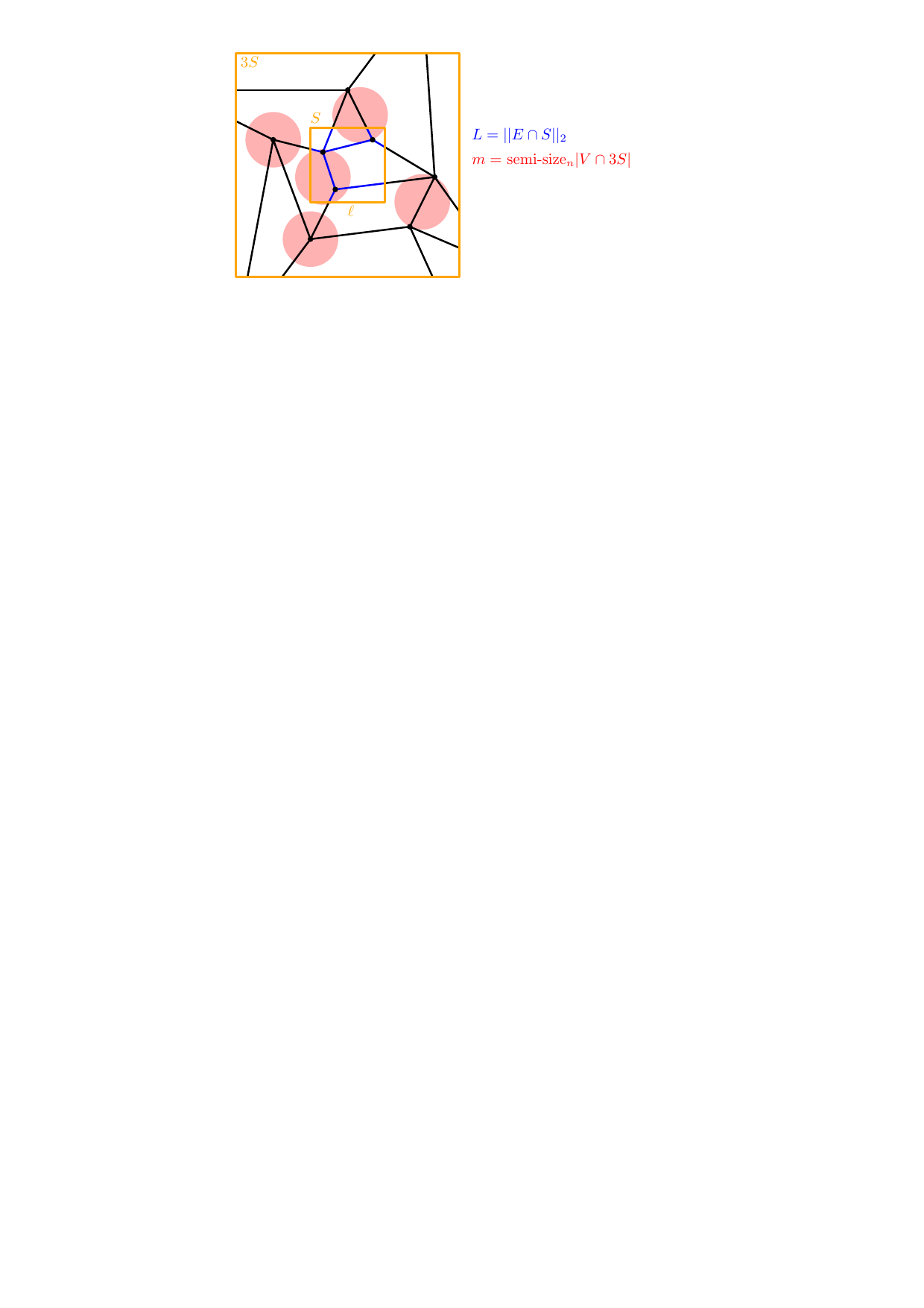}
    \caption{We prove a stronger version of Fact~\ref{fact:low_density_implies_packed_weak}, where the~$L$ of edges in $E \cap S$ is bounded relative to the semi-size of $V \cap 3S$, instead of its size.}
    \label{figure:length_the_second}
\end{figure}

\begin{proof}
For $e \in E$, define $w_S(e)$ to be the length of the subsegment $e \cap S$. Our goal is to bound $\sum_{e \in E} w_S(e) = L$. Note that for all $e$, we have $w_S(e) \leq \sqrt 2 \ell = \diam(S)$. We divide~$E$ into two subset, i.e.~$E = E' \cup E''$.
\begin{align*}
    E' &= \{e \in E: \mbox{both endpoints of $e$ lies in $V \cap 3S$}\} \\
    E'' &= \{e \in E: \mbox{at least one endpoint of $e$ lies outside $3S$}\}
\end{align*}

We bound the sum of the lengths of edges in $E''$ first. If an edge~$e$ does not intersect $S$, then $w_S(e) = 0$, so the only relevant edges are those that intersect $S$. Since edges in $E''$ have an endpoint outside $3S$, all relevant edges have length at least~$\ell$. Cover the square $S$ with four discs, each with radius~$\ell$. Each relevant edge intersects a covering disc. Each covering disc can only intersect $\lambda$ relevant edges, by Definition~\ref{definition:low-density}. Therefore, there are at most $4 \lambda$ relevant edges. Each relevant edge satisfies $w_S(e) \leq \sqrt 2 \ell$. Putting this together, $\sum_{e \in E''} w_S(e) \leq (4 \lambda) \cdot (\sqrt 2 \ell) = O(\lambda \ell)$. 

Next, we bound the sum of the edges in $E'$. To this end, we partition $E'$ into three subsets, i.e.~$E' = E_1 \cup E_2 \cup E_3$. 
\begin{align*}
    E_1 &= \{e \in E': 0 \leq w_S(e) \leq  5\ell / n \} 
        \\
    E_2 &= \{e \in E': 5\ell / n \leq w_S(e) \leq  \ell / \sqrt{\semisize(3S)} \} 
        \\
    E_3 &= \{e \in E': \ell / \sqrt{\semisize(3S)} \leq w_S(e) \leq \sqrt 2 \ell \}
\end{align*}
We will bound $\sum_{e \in E_i} w_S(e)$ separately for $i=1,2,3$.

\begin{itemize}
    \item \emph{Bound for $E_1$.} There are $n$ vertices in $V$, and the degree of each vertex is at most $\lambda$. Therefore, $|E_1| \leq |E| \leq \lambda n$. But $w_S(e) \leq 5 \ell /n$ for all $e \in E_1$. So $\sum_{e \in E_1} w_S(e) \leq (\lambda n) \cdot (5 \ell / n) = O(\lambda \ell)$. 
    
    \item \emph{Bound for $E_2$.} We will use the fact that $E_2 \in E'$, so both endpoints of $e$ lie in~$V \cap 3S$. 
    
    By Definition~\ref{definition:semi-size}, all points in $V \cap 3S$ can be covered by $\semisize(V \cap 3S)$ number of discs, where each disc has diameter $\frac 1 n \diam(3S) = 3 \sqrt 2 \ell / n < 5 \ell / n$. Note that these discs cover all endpoints of edges in $E_2$. 
    
    Each edge $e \in E_2$ intersects a covering disc. Each semi-size disc can intersect at most $\lambda$ edges in $E_2$, by Definition~\ref{definition:low-density}. Therefore, there are at most $\semisize(V \cap 3S) \lambda$ edges in $E_2$. But each edge $e \in E_2$ satisfies $w_S(e) \leq \ell / \sqrt{\semisize(3S)}$. Therefore, 
    \[
        \sum_{e \in E_2} w_S(e) \leq (\semisize(3S) \lambda) \cdot (\ell / \sqrt{\semisize(3S)}) = O(\sqrt{\semisize(3S)} \lambda \ell).
    \]
    \item \emph{Bound for $E_3$.} We further divide $E_3$ into subsets $E_{3,j} = \{e \in E_3: \ell/2^j \leq w_S(e) \leq 2\ell/2^j\}$. Here, $0 \leq j \leq k = \lceil \log_2(\sqrt{\semisize(3S)}) \rceil$, so that $E_3 \subseteq \cup_{j=0}^k E_{3,j}$.
    
    Cover the square $S$ with $2^{2j+2}$ discs, each with radius $\ell / 2^j$. Each edge in $E_{3,j}$ intersects a covering disc. Each covering disc can only intersect $\lambda$ edges in $E_{3,j}$, by Definition~\ref{definition:low-density}. Therefore, there are at most $2^{2j+2} \lambda$ edges in $E_{3,j}$. But $w_S(e) \leq 2 \ell / 2^j$ for $e \in E_{3,j}$. Therefore, we have $\sum_{e \in E_{3,j}} w_S(e) \leq (2^{2j+2} \lambda) \cdot (2\ell / 2^j) = 2^{j+3} \lambda \ell$. 

    Let $L_j = \sum_{e \in E_{3,j}} w_S(e)$. Then we are required to bound $\sum_{e \in E_3} w_S(e) = \sum_{j=0}^k L_j$, where $k = \lceil \log_2(\sqrt{\semisize(3S)}) \rceil$. 
    
    But,
    \[
        \sum_{j=0}^k L_j \leq (2^4 + 2^5 + \ldots + 2^{k+3}) \lambda \ell \leq 2^{k+4} \lambda \ell = O(\sqrt{\semisize(3S)} \lambda \ell).
    \]
    So $\sum_{e \in E_{3}} w_S(e) = O(\sqrt{\semisize(3S)} \lambda \ell)$. 
\end{itemize}
Combining the bounds for $E''$, $E_1$, $E_2$ and $E_3$, we get $\sum_{e \in E} w_S(e) = O(\sqrt{\semisize(3S)} \lambda \ell)$. Substituting in the variables $L = \sum_{e \in E} w_S(e)$ and $m = \semisize(3S)$, we get $L = O(\sqrt m \lambda \ell)$. Therefore, $\frac L {\lambda \ell} = O(\sqrt m)$ and $(\frac L {\lambda \ell})^2 = O(m)$, as required.
\end{proof}

This completes the proofs of Lemmas~\ref{lemma:unbounded_spread_euclidean_wspd} and~\ref{lemma:low_density_implies_packed_strong}, which recall from Table~\ref{table:adapting} are the new proofs in this section. We complete the proof for the unbounded spread case by following the same approach as the bounded spread case. We will adapt the approach by replacing Fact~\ref{fact:bounded_spread_euclidean_wspd} with Lemma~\ref{lemma:unbounded_spread_euclidean_wspd}, and Fact~\ref{fact:low_density_implies_packed_weak} with Lemma~\ref{lemma:low_density_implies_packed_strong}. 

Given the new length bound, we can cluster any quadtree node in an analogous way to Lemma~\ref{lemma:clustering_weak}. 

% \begin{figure}[ht]
%     \centering
%     \includegraphics[width=0.75\textwidth]{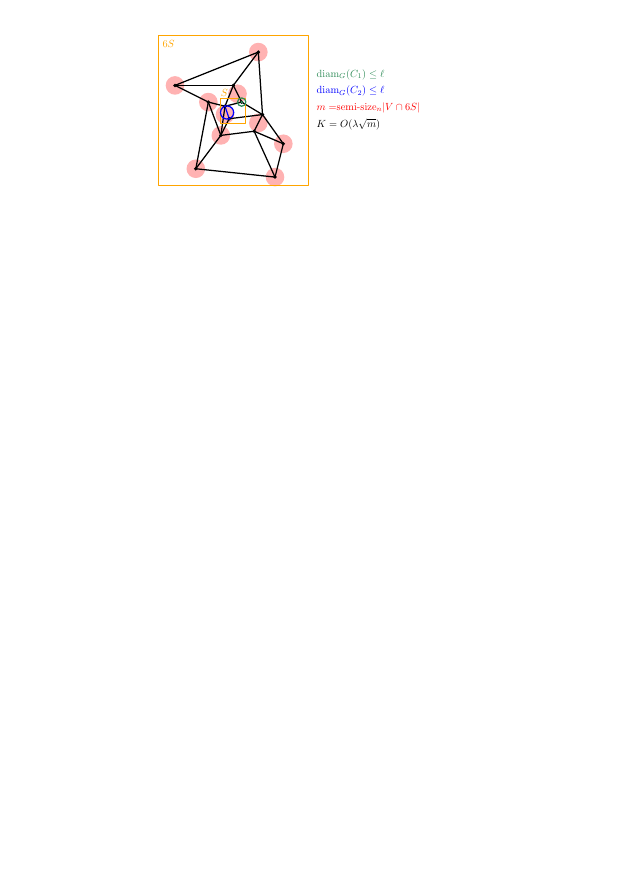}
%     \caption{Caption}
%     \label{fig:enter-label}
% \end{figure}

\begin{lemma}
\label{lemma:clustering_strong}
Let $G = (V,E)$ be a connected $\lambda$-low-density graph. Let $S$ be a square with side length~$\ell$ and let~$6S$ be the scaling of $S$ by a factor of 6 about its centre. There exists a partition of the vertices $V\cap S$ into $O(\lambda \sqrt{\semisize(V \cap 6S)})$ clusters $\{C_1, C_2, \ldots\}$ so that $\diam_G(C_i) \leq \ell$ for all $C_i$. Here, $\diam_G(C_i)$ denotes the diameter of $C_i$ under the graph metric~$G$.
\end{lemma}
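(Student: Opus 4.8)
The plan is to replay, essentially verbatim, the greedy clustering argument from the proof of Lemma~\ref{lemma:clustering_weak}, with the single change of invoking the stronger length bound of Lemma~\ref{lemma:low_density_implies_packed_strong} in place of Fact~\ref{fact:low_density_implies_packed_weak} at the final step. Concretely, I would construct the clusters greedily: repeatedly pick an arbitrary unassigned vertex of $V \cap S$, declare it a cluster centre $c_i$, and place into $C_i$ every as-yet-unassigned $u \in V \cap S$ with $d_G(u, c_i) \leq \ell/2$, continuing until all of $V \cap S$ is assigned. Since each $C_i$ is a set of vertices within graph-distance $\ell/2$ of a common centre, the triangle inequality gives $\diam_G(C_i) \leq \ell$ for every $i$, so the only real work is to bound the number $K$ of clusters produced.

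To bound $K$, fix any vertex $v \in V$ and, for each $i$, let $\pi_i$ be a shortest $v$-to-$c_i$ path in $G$ (which exists since $G$ is connected) and let $\sigma_i$ be the terminal subpath of $\pi_i$ consisting of all points within graph-distance $\ell/4$ of $c_i$. Exactly as in Lemma~\ref{lemma:clustering_weak}: each $\sigma_i$ lies in $2S$, because $c_i \in S$ and $\sigma_i$ has length at most $\ell/4$; the subpaths $\sigma_i$ are pairwise disjoint, since any point common to $\sigma_i$ and $\sigma_j$ (with $i<j$) would force $d_G(c_i,c_j) \leq \ell/2$, contradicting that $c_j$ was chosen as a fresh centre; and each $\sigma_i$ has length exactly $\ell/4$ with at most one exception (the index $i$, if any, for which $d_G(v,c_i) \leq \ell/4$). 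Hence the total length of $E \cap 2S$ is at least $(K-1)\ell/4$.

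Finally, apply Lemma~\ref{lemma:low_density_implies_packed_strong} to the square $2S$, whose factor-$3$ expansion is $6S$: with $L \geq (K-1)\ell/4$ the total edge length inside $2S$ and $m = \semisize(V \cap 6S)$, the lemma gives $\bigl(\tfrac{L}{\lambda\,(2\ell)}\bigr)^2 = O(m)$, and rearranging yields $K = O\bigl(\lambda\sqrt{\semisize(V \cap 6S)}\bigr)$, as claimed.

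I do not anticipate a genuine obstacle: this is a drop-in substitution of one packing inequality by a strictly stronger one, and the structural argument (greedy balls, disjoint shortest-path stubs, charging their length into the ambient square) is unchanged. The only points requiring care are bookkeeping ones — tracking the scaling so that the square $2S$ fed into Lemma~\ref{lemma:low_density_implies_packed_strong} produces $6S$ through its factor-$3$ expansion, and ensuring the index $n$ in $\semisize$ is taken to be the global vertex count $|V|$ so that the hypotheses of Lemma~\ref{lemma:low_density_implies_packed_strong} are literally met. The conceptual difficulty has already been absorbed upstream into Lemma~\ref{lemma:low_density_implies_packed_strong} itself.
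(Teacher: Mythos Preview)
Your proposal is correct and matches the paper's own proof essentially verbatim: the paper simply states that the argument is identical to Lemma~\ref{lemma:clustering_weak} with Lemma~\ref{lemma:low_density_implies_packed_strong} substituted for Fact~\ref{fact:low_density_implies_packed_weak}, and your write-up is a faithful expansion of exactly that. The bookkeeping points you flag (applying Lemma~\ref{lemma:low_density_implies_packed_strong} to $2S$ so that its factor-$3$ expansion is $6S$, and taking $n=|V|$ in $\semisize$) are precisely the ones implicit in the paper's one-line proof.
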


\begin{proof}
The proof is the same as Lemma~\ref{lemma:clustering_weak}, but we replace Fact~\ref{fact:low_density_implies_packed_weak} with Lemma~\ref{lemma:low_density_implies_packed_strong}. Due to the better length bound in Lemma~\ref{lemma:low_density_implies_packed_strong}, we now only have $O(\lambda \sqrt{\semisize(V \cap 6S)})$ clusters instead of $O(\lambda \sqrt{|V \cap 6S|})$ clusters.
\end{proof}

Finally, we bound the total size of the Graph WSPD in an analogous way to Theorem~\ref{theorem:bounded_spread_wspd}.

\begin{theorem}
\label{theorem:unbounded_spread_wspd}
Let $G = (V,E)$ be a $\lambda$-low-density graph with $n$ vertices. For all $\varepsilon > 0$, there exists a $(1/\varepsilon)$-WSPD for $V$ in the graph metric $G$ with $O(n \lambda^2 \varepsilon^{-4} \log n)$ pairs.
\end{theorem}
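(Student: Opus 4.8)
The plan is to rerun the two-step scheme of Theorem~\ref{theorem:bounded_spread_wspd} verbatim, but feeding it the spread-independent tools developed in this subsection: Lemma~\ref{lemma:unbounded_spread_euclidean_wspd} in place of Fact~\ref{fact:bounded_spread_euclidean_wspd}, Lemma~\ref{lemma:low_density_implies_packed_strong} in place of Fact~\ref{fact:low_density_implies_packed_weak}, and hence Lemma~\ref{lemma:clustering_strong} in place of Lemma~\ref{lemma:clustering_weak}. We may assume $G$ is connected (otherwise the construction is applied within each connected component, and vertices in distinct components are at infinite graph distance and hence trivially well-separated). First I would invoke Lemma~\ref{lemma:unbounded_spread_euclidean_wspd} with separation constant $1/\varepsilon$ to obtain a pair decomposition $\{(A_i,B_i)\}$ of $V$ in the metric $\mathbb R^2$, where each $A_i$ is the point set of a semi-compressed quadtree node with an associated cell of side length $\ell(A_i)$. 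For each $i$ I would partition $A_i$ using Lemma~\ref{lemma:clustering_strong}, with $S$ the cell of $A_i$, into $O(\lambda\sqrt{\semisize(6A_i)})$ clusters $\{C_{ij}\}$ each of graph diameter at most $\ell(A_i)$, and partition $B_i$ analogously into $O(\lambda\sqrt{\semisize(6B_i)})$ clusters $\{D_{ik}\}$ of graph diameter at most $\ell(B_i)$. Since $\{C_{ij}\}_j$ partitions $A_i$ and $\{D_{ik}\}_k$ partitions $B_i$ and $\{(A_i,B_i)\}$ is a pair decomposition of $V$, the collection $\{(C_{ij},D_{ik})\}$ over all $i$ is a pair decomposition of $V\times V$.

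Next I would verify that each $(C_{ij},D_{ik})$ is $(1/\varepsilon)$-well-separated in the graph metric, by reusing the chain of inequalities from Theorem~\ref{theorem:bounded_spread_wspd}:
\begin{align*}
(1/\varepsilon)\cdot\max(\diam_G(C_{ij}),\diam_G(D_{ik})) &\le (1/\varepsilon)\cdot\max(\ell(A_i),\ell(B_i)) \\
&\le d_2(A_i,B_i) \\
&\le d_G(C_{ij},D_{ik}),
\end{align*}
where the first inequality is Lemma~\ref{lemma:clustering_strong}, the second is the well-separation of the quadtree nodes from Lemma~\ref{lemma:unbounded_spread_euclidean_wspd}b, and the third uses $d_2\le d_G$ together with $C_{ij}\subseteq A_i$ and $D_{ik}\subseteq B_i$. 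For the size bound, the number of pairs is $\sum_i O(\lambda\sqrt{\semisize(6A_i)})\cdot O(\lambda\sqrt{\semisize(6B_i)})$, which by AM--GM is at most $\sum_i O(\lambda^2(\semisize(6A_i)+\semisize(6B_i)))$, so everything reduces to proving $\sum_i(\semisize(6A_i)+\semisize(6B_i)) = O(n\varepsilon^{-4}\log n)$.

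This last bound is the semi-size analogue of the step $\sum_i(|6A_i|+|6B_i|)=O(n\varepsilon^{-2}\log\Phi)$ in the bounded-spread proof, and I would establish it by mirroring parts (c)--(e) of Lemma~\ref{lemma:unbounded_spread_euclidean_wspd}. By Lemma~\ref{lemma:unbounded_spread_euclidean_wspd}c each semi-compressed quadtree node appears $O(\varepsilon^{-2})$ times among the WSPD pairs, hence (since expansion by a factor of $6$ changes the count by only a constant) it suffices to show $\sum_{A'}\semisize(6A')=O(n\varepsilon^{-2}\log n)$, summed over all $O(n\varepsilon^{-2})$ quadtree nodes $A'$. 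Since $6A'$ is covered by $O(1)$ grid cells at the level of $A'$, I would run the $\mathrm{Cover}(\cdot)$ charging scheme of Lemma~\ref{lemma:unbounded_spread_euclidean_wspd}d on each of these cells: each quadtree node is charged at most once per level and, by Lemma~\ref{lemma:unbounded_spread_euclidean_wspd}d, across only $O(\log n)$ levels, so with $O(n\varepsilon^{-2})$ nodes in total we get $\sum_{A'}\semisize(6A')=O(n\varepsilon^{-2}\log n)$, and therefore $\sum_i(\semisize(6A_i)+\semisize(6B_i))=O(n\varepsilon^{-4}\log n)$. Combining with the AM--GM step, the Graph WSPD has $O(n\lambda^2\varepsilon^{-4}\log n)$ pairs.

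I expect the main obstacle to be precisely the bound $\sum_i(\semisize(6A_i)+\semisize(6B_i))=O(n\varepsilon^{-4}\log n)$, i.e.\ carrying the factor-$6$ expansion of quadtree cells through the semi-size bookkeeping. In the bounded-spread proof this step was routine because there was a uniform $O(\log\Phi)$ bound on the number of levels; here there is no such bound, so one must argue directly that the covering discs witnessing $\semisize(6A')$ charge only $O(\log n)$ levels of the semi-compressed quadtree (which is exactly what Lemma~\ref{lemma:unbounded_spread_euclidean_wspd}d provides) and that the region $6A'$, which is not itself a quadtree cell, only inflates the charges by a constant factor. Once this bookkeeping is pinned down, the remainder is a mechanical substitution of Lemmas~\ref{lemma:unbounded_spread_euclidean_wspd}, \ref{lemma:low_density_implies_packed_strong} and~\ref{lemma:clustering_strong} into the argument of Theorem~\ref{theorem:bounded_spread_wspd}.
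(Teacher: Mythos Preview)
Your proposal is correct and follows exactly the paper's approach: substitute Lemma~\ref{lemma:unbounded_spread_euclidean_wspd} for Fact~\ref{fact:bounded_spread_euclidean_wspd} and Lemma~\ref{lemma:clustering_strong} for Lemma~\ref{lemma:clustering_weak}, apply AM--GM, and bound $\sum_i(\semisize(6A_i)+\semisize(6B_i))$ via the semi-weight bound of Lemma~\ref{lemma:unbounded_spread_euclidean_wspd}e. In fact you are more explicit than the paper about carrying the factor-$6$ expansion through the $\mathrm{Cover}(\cdot)$ charging of Lemma~\ref{lemma:unbounded_spread_euclidean_wspd}d --- the paper simply asserts this step goes through ``in the same way as Theorem~\ref{theorem:bounded_spread_wspd}'' --- and your sketch of covering $6A'$ by $O(1)$ congruent grid cells and charging each node across $O(\log n)$ levels is the right way to fill that in.
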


\begin{proof}
The proof is the same as Theorem~\ref{theorem:bounded_spread_wspd}, but we~$(i)$ replace $|6A_i|$ with $\semisize(6A_i)$, $(ii)$ replace Fact~\ref{fact:bounded_spread_euclidean_wspd} with Lemma~\ref{lemma:unbounded_spread_euclidean_wspd}, and $(iii)$ replace Lemma~\ref{lemma:clustering_weak} with Lemma~\ref{lemma:clustering_strong}. This yields a total number of pairs that is $\sum_i O(\lambda^2 \sqrt{\semisize(6A_i) \cdot \semisize(6B_i)})$. Next, we bound the square root term in the same way as Theorem~\ref{theorem:bounded_spread_wspd}. The sum of the square root terms is at most the semi-weight of the Euclidean WSPD, which by Lemma~\ref{lemma:unbounded_spread_euclidean_wspd}e is~$O(n \varepsilon^{-4} \log n)$. Putting this together, the total number of pairs in the Graph WSPD is~$O(n \lambda^2 \varepsilon^{-4} \log n)$, as required. 
\end{proof}

\section{Constructing the WSPD}
\label{section:wspd_algorithm}

In the previous section, we showed the existence of a Graph WSPD and bounded its size. In this section, we will provide an algorithm for constructing the Graph WSPD. We will follow the same outline as Section~\ref{subsection:unbounded_spread}, while filling in the algorithmic gaps. Note that for our analysis, we assume the floating-point word RAM model of computation in Definition~\ref{definition:word_ram}.

In Section~\ref{subsection:semi_compressed_quadtree}, we construct a semi-compressed quadtree. In Section~\ref{subsection:net_tree}, we construct a net-tree. Finally, in Section~\ref{subsection:putting_it_together}, we put these pieces together to construct the Graph WSPD.  Sections~\ref{subsection:semi_compressed_quadtree},~\ref{subsection:net_tree} and~\ref{subsection:putting_it_together} correspond to algorithms for Lemmas~\ref{lemma:unbounded_spread_euclidean_wspd},~\ref{lemma:clustering_strong}, and~\ref{theorem:unbounded_spread_wspd} respectively.

\subsection{Semi-compressed quadtree}
\label{subsection:semi_compressed_quadtree}

We previously stated the construction of the semi-compressed quadtree in the proof of Lemma~\ref{lemma:unbounded_spread_euclidean_wspd}a. We restate the construction for convenience. Construct a compressed quadtree on $V$, and with it, a Euclidean $(2/\varepsilon)$-WSPD. For each well-separated pair $A_i$ and $B_i$ in the Euclidean $(2/\varepsilon)$-WSPD, construct a pair of congruent and $(1/\varepsilon)$-maximal quadtree nodes $(A',B')$ so that $A_i \subseteq A' \subseteq \parent(A_i)$, and $B_i \subseteq B' \subseteq \parent(B_i)$. Recall that $A'$ and $B'$ are congruent if they have the same side length, and are $(1/\varepsilon)$-maximal if they are $(1/\varepsilon)$-well-separated but $\double(A')$ and $\double(B')$ are not $(1/\varepsilon)$-well-separated, where $\double(A')$ is the quadtree node that contains $A'$ and has double its side length. In Lemma~\ref{lemma:unbounded_spread_euclidean_wspd}, we proved that a pair of congruent and $(1/\varepsilon)$-maximal quadtree nodes~$(A',B')$ are guaranteed to exist, but we did not provide an algorithm to compute~$(A',B')$. We will fill this gap in Lemma~\ref{lemma:quadtree_ancestor}. Given~$A'$ and~$B'$, we simply insert the quadtree node $A'$ into the compressed quadtree between $\parent(A_i)$ and $A_i$, and similarly for $B'$. This completes our restatement of the semi-compressed quadtree construction.

To complete the algorithm, it remains only to fill the gap of computing the congruent and $(1/\varepsilon)$-maximal quadtree nodes~$(A',B')$, given a well-separated pair~$(A_i,B_i)$ in the Euclidean $(2/\varepsilon)$-WSPD.

\begin{lemma}
    \label{lemma:quadtree_ancestor}
    Given a quadtree and a pair of $(2/\varepsilon)$-well-separated quadtree nodes $A_i$ and $B_i$, one can compute in $O(1)$ time a pair of congruent and $(1/\varepsilon)$-maximal quadtree nodes $A'$ and $B'$ that are ancestors of $A_i$ and $B_i$ respectively.
\end{lemma}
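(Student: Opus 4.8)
The plan is to avoid walking the (possibly $\Omega(n)$-long) chain of uncompressed cells lying between $A_i$ and $\parent(A_i)$, and instead to determine the common side length of the desired pair $(A',B')$ by a closed-form arithmetic computation. First I would fix notation: let $d_A,p_A$ be the depths of $A_i$ and $\parent(A_i)$, let $d_B,p_B$ be the depths of $B_i$ and $\parent(B_i)$, let $L_0$ be the side length of the root cell, and write $s_i = 2^{-i}L_0$ for the side length at depth $i$. For any depth $i$, let $A'(i)$ denote the ancestor of $A_i$ of side length $s_i$; its cell is computable in $O(1)$ directly from the coordinates of $A_i$ by rounding each coordinate down to a multiple of $s_i$, using the floor operation available in the model of Definition~\ref{definition:word_ram}, and $B'(i)$ likewise. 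For every $i$ in the window $[\max(p_A,p_B),\,\min(d_A,d_B)]$ we have $A_i \subseteq A'(i) \subseteq \parent(A_i)$, $B_i \subseteq B'(i) \subseteq \parent(B_i)$, and $A'(i),B'(i)$ are congruent; so the only remaining task is to select the depth at which the pair is $(1/\varepsilon)$-maximal.

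The key step is a monotonicity-plus-sandwich argument. Since $A'(i{-}1) \supseteq A'(i)$ and $B'(i{-}1) \supseteq B'(i)$, the map $i \mapsto d_2(A'(i),B'(i))$ is non-decreasing in $i$ while $i \mapsto (1/\varepsilon)\diam_2(A'(i)) = (1/\varepsilon)\sqrt2\,s_i$ is strictly decreasing, so the set of depths at which $(A'(i),B'(i))$ is $(1/\varepsilon)$-well-separated has the form $\{i \geq i^\star\}$, and the congruent $(1/\varepsilon)$-maximal pair is precisely $(A'(i^\star),B'(i^\star))$ — at depth $i^\star{-}1$ the pair $(\double(A'(i^\star)),\double(B'(i^\star))) = (A'(i^\star{-}1),B'(i^\star{-}1))$ is not $(1/\varepsilon)$-well-separated. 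To pin down $i^\star$ I would use $\delta := d_2(A_i,B_i)$, computable in $O(1)$. On one hand $A_i \subseteq A'(i)$ and $B_i \subseteq B'(i)$ give $d_2(A'(i),B'(i)) \le \delta$, so $(1/\varepsilon)$-well-separation at depth $i$ forces $(1/\varepsilon)\sqrt2\,s_i \le \delta$, i.e.\ $i \ge \log_2\!\big(\sqrt2\,L_0/(\varepsilon\delta)\big)$. On the other hand the triangle inequality gives $d_2(A'(i),B'(i)) \ge \delta - \diam_2(A'(i)) - \diam_2(B'(i)) = \delta - 2\sqrt2\,s_i$, so $i \ge \log_2\!\big(\sqrt2\,L_0(1/\varepsilon+2)/\delta\big)$ already guarantees well-separation. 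These two thresholds differ by $\log_2(1+2\varepsilon) = O(1)$, so $i^\star$ is one of $O(1)$ explicit integers starting at the ceiling of the first threshold. I would enumerate these $O(1)$ candidate depths, compute $A'(i),B'(i)$ for each, test $(1/\varepsilon)$-well-separation directly, and return the pair for the smallest candidate that passes; by the above this candidate equals $i^\star$, and by the existence argument in the proof of Lemma~\ref{lemma:unbounded_spread_euclidean_wspd} it lies in the window $[\max(p_A,p_B),\min(d_A,d_B)]$, so $A'(i^\star),B'(i^\star)$ are genuine congruent, $(1/\varepsilon)$-maximal ancestors of $A_i,B_i$. Every operation is $O(1)$ arithmetic, floor, logarithm, and exponentiation, so the total is $O(1)$ time.

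I expect the main obstacle to be exactly the pathology the compressed quadtree introduces: a node and its compressed parent can be $\Omega(n)$ uncompressed levels apart, so any scheme that inspects candidate depths one by one is too slow; the sandwich bound above is what collapses the search for the right side length to $O(1)$ candidates given in closed form. A secondary, routine point is that the ancestor cell at a prescribed depth must be produced without pointer-chasing, which the floor-based coordinate rounding handles in the model of Definition~\ref{definition:word_ram}; and one must observe that although $A'(i^\star),B'(i^\star)$ may sit far below $\parent(A_i),\parent(B_i)$ in the uncompressed tree, within the depth window they are nested between $A_i$ and $\parent(A_i)$ (resp.\ between $B_i$ and $\parent(B_i)$), this window being nonempty by the property of the $(2/\varepsilon)$-WSPD construction recalled in Lemma~\ref{lemma:unbounded_spread_euclidean_wspd}.
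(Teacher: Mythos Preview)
Your proposal is correct and follows essentially the same approach as the paper: both argue that the side length of the desired congruent pair is pinned down to within a constant factor by $\varepsilon\cdot d_2(A_i,B_i)$ (equivalently, the depth is determined up to an additive $O(1)$), then enumerate the $O(1)$ candidate ancestors via floor/logarithm operations, test each for $(1/\varepsilon)$-well-separation, and return the largest well-separated pair. Your explicit monotonicity argument and the sandwich $\log_2(1+2\varepsilon)$ on the depth window are a slightly more detailed justification of the same idea, but the structure and constants match the paper's proof (which uses $c_1=1/(6\sqrt2)$, $c_2=1/\sqrt2$ for the side-length range).
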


\begin{proof}
    We already know that $A'$ and $B'$ are guaranteed to exist, by Lemma~\ref{lemma:unbounded_spread_euclidean_wspd}a. Next, we claim that the side lengths of $A'$ and $B'$ must be in the range $[c_1 \varepsilon \cdot d_2(A_i,B_i), c_2 \varepsilon \cdot d_2(A_i,B_i)]$, for constants~$c_1$ and~$c_2$. As there are only a constant number of quadtree sizes in this range, there are only $O(1)$ ancestors of $A_i$ and $B_i$ to check. Constructing these ancestors takes $O(1)$ time, as we can perform logarithm and floor operations in unit time. Checking if a candidate pair of ancestors is $(1/\varepsilon)$-well-separated takes $O(1)$ time. Finally, we return the largest of these pairs of congruent ancestors that are $(1/\varepsilon)$-well-separated.

    It remains only to prove the claim that $\ell(A') \in [c_1\varepsilon \cdot d_2(A_i,B_i), c_2\varepsilon \cdot d_2(A_i,B_i)]$, where $\ell(\cdot)$ denotes the side length of a quadtree, $c_1 = 1/(6\sqrt 2)$ and $c_2 = 1/\sqrt 2$. We have:
    \begin{align*}
        c_1 \varepsilon \cdot d_2(A_i,B_i) 
        &\leq c_1 \varepsilon \cdot (d_2(\double(A'),\double(B')) + 2 \cdot \max(\diam(\double(A')),\diam(\double(B')))\\ 
        &\leq (c_1 + 2 c_1 \varepsilon) \cdot \max(\diam(\double(A')),\diam(\double(B'))) \\
        &= (1/(6 \sqrt 2)) \cdot (1+2\varepsilon) \cdot (2 \sqrt 2) \cdot \ell (A') \\
        &\leq \ell (A') \\
        &= (1/\sqrt 2) \cdot \max(\diam(A'),\diam(B')) \\
        &\leq (1/\sqrt 2) \cdot \varepsilon \cdot d_2(A',B') \\
        &\leq c_2 \varepsilon \cdot d_2(A_i,B_i)
    \end{align*}
    where the first line comes from $A_i \subseteq A'$ and $B_i \subseteq B'$, the second line comes from $\double(A')$ and $\double(B')$ not being $(1/\varepsilon)$-well-separated, the fourth line comes from $0 < \varepsilon < 1$, the sixth line comes from $A'$ and $B'$ being $(1/\varepsilon)$-well-separated, and the seventh line comes from $A_i \subseteq A'$ and $B_i \subseteq B'$.
\end{proof}

Finally, we will analyse the construction time of the semi-compressed quadtree. Constructing the compressed quadtree takes $O(n \log n)$ time and constructing the Euclidean $(2/\varepsilon)$-WSPD takes an additional $O(n \varepsilon^{-2})$ time~\cite{har2011geometric}. For each of the $O(n \varepsilon^{-2})$ well-separated pairs $(A_i,B_i)$, we construct congruent and $(1/\varepsilon)$-maximal quadtree nodes $A'$ and $B'$, which takes $O(1)$ time by Lemma~\ref{lemma:quadtree_ancestor}. In total, all pairs $(A',B')$ can be computed in $O(n \varepsilon^{-2})$ time. The quadtree node~$A'$ can be inserted in between~$A_i$ and~$\parent(A_i)$ in $O(1)$ time. Therefore, all~$A'$ and~$B'$ can be inserted in $O(n \varepsilon^{-2})$ time. The overall running time is $O(n \log n + n \varepsilon^{-2})$. Thus, we obtain Lemma~\ref{lemma:semi-compressed_construction}.

\begin{lemma}
    \label{lemma:semi-compressed_construction}
    The semi-compressed quadtree in Lemma~\ref{lemma:unbounded_spread_euclidean_wspd}a can be constructed in $O(n \log n + n \varepsilon^{-2})$ time.
\end{lemma}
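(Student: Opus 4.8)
The plan is to give a running-time analysis that mirrors the three-part construction already laid out for the semi-compressed quadtree: build a compressed quadtree, build the Euclidean $(2/\varepsilon)$-WSPD from it, and then patch in the congruent $(1/\varepsilon)$-maximal nodes $(A',B')$. Each of these three stages has a known cost, and the goal is simply to add them up. First I would invoke the standard fact (attributed to the textbook~\cite{har2011geometric}) that a compressed quadtree on $n$ points in $\mathbb R^2$ can be built in $O(n \log n)$ time, and that from a compressed quadtree a $(2/\varepsilon)$-WSPD with $O(n\varepsilon^{-2})$ pairs can be extracted in $O(n\varepsilon^{-2})$ additional time. That accounts for the $O(n\log n + n\varepsilon^{-2})$ backbone.

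Next I would handle the patching step. For each of the $O(n\varepsilon^{-2})$ well-separated pairs $(A_i,B_i)$ in the $(2/\varepsilon)$-WSPD, Lemma~\ref{lemma:quadtree_ancestor} computes a congruent, $(1/\varepsilon)$-maximal ancestor pair $(A',B')$ in $O(1)$ time, so all such pairs are found in $O(n\varepsilon^{-2})$ total time. Then each $A'$ is spliced into the compressed quadtree between $A_i$ and $\parent(A_i)$; since $A'$ lies on the (now conceptually uncompressed) path between a node and its parent, inserting it is a constant-time pointer update, and likewise for $B'$. Hence all insertions take $O(n\varepsilon^{-2})$ time, and the resulting structure has $O(n\varepsilon^{-2})$ nodes, consistent with Lemma~\ref{lemma:unbounded_spread_euclidean_wspd}a. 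Summing the three stages gives the claimed $O(n\log n + n\varepsilon^{-2})$ bound.

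\begin{proof}
First, construct a compressed quadtree on $V$; this takes $O(n \log n)$ time~\cite{har2011geometric}. From the compressed quadtree, construct a Euclidean $(2/\varepsilon)$-WSPD with $O(n\varepsilon^{-2})$ pairs in $O(n\varepsilon^{-2})$ additional time~\cite{har2011geometric}. For each of the $O(n\varepsilon^{-2})$ well-separated pairs $(A_i,B_i)$, apply Lemma~\ref{lemma:quadtree_ancestor} to compute a pair of congruent and $(1/\varepsilon)$-maximal ancestors $(A',B')$ with $A_i \subseteq A' \subseteq \parent(A_i)$ and $B_i \subseteq B' \subseteq \parent(B_i)$ in $O(1)$ time each, so all pairs $(A',B')$ are computed in $O(n\varepsilon^{-2})$ time in total. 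Finally, insert each node $A'$ into the compressed quadtree between $A_i$ and $\parent(A_i)$, and similarly for $B'$; each such insertion is a constant-time pointer update, so all insertions take $O(n\varepsilon^{-2})$ time. The resulting structure is the semi-compressed quadtree of Lemma~\ref{lemma:unbounded_spread_euclidean_wspd}a, and the total running time is $O(n\log n) + O(n\varepsilon^{-2}) + O(n\varepsilon^{-2}) + O(n\varepsilon^{-2}) = O(n\log n + n\varepsilon^{-2})$, as required.
\end{proof}

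The analysis itself is essentially bookkeeping, so there is no real obstacle; the only point requiring minor care is verifying that the splice of $A'$ between $A_i$ and $\parent(A_i)$ is genuinely $O(1)$ and does not cascade. This holds because $A'$ sits on a single compressed edge of the quadtree, so redirecting the parent pointer of $A_i$ to $A'$ and the child pointer of $\parent(A_i)$ to $A'$ (updating the compression label accordingly) suffices, and the $O(n\varepsilon^{-2})$ insertions are mutually independent in the sense that each touches a distinct region of the tree only through its own $(A_i,B_i)$ pair.
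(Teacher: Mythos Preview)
Your proof is correct and follows essentially the same approach as the paper: build the compressed quadtree in $O(n\log n)$ time, extract the $(2/\varepsilon)$-WSPD in $O(n\varepsilon^{-2})$ time, use Lemma~\ref{lemma:quadtree_ancestor} to compute each $(A',B')$ in $O(1)$ time, and splice each new node in $O(1)$ time. The paper's analysis is nearly identical, down to citing the same textbook for the quadtree and WSPD construction bounds.
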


The Euclidean WSPD in Lemma~\ref{lemma:unbounded_spread_euclidean_wspd}b can be obtained by simply enumerating the well-separated pairs $(A',B')$. We obtain Lemma~\ref{lemma:euclidean_wspd_construction}.

\begin{lemma}
    \label{lemma:euclidean_wspd_construction}
    The Euclidean WSPD in Lemma~\ref{lemma:unbounded_spread_euclidean_wspd}b can be constructed in $O(n \log n + n \varepsilon^{-2})$ time.
\end{lemma}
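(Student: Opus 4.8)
The plan is to piggyback entirely on the semi-compressed quadtree construction of Lemma~\ref{lemma:semi-compressed_construction}. Recall from the proof of Lemma~\ref{lemma:unbounded_spread_euclidean_wspd}b that the Euclidean WSPD consists precisely of the congruent, $(1/\varepsilon)$-maximal pairs $(A',B')$ obtained from the pairs $(A_i,B_i)$ of the auxiliary $(2/\varepsilon)$-WSPD, where the point set attached to $A'$ is the point set of $A_i$ and the point set attached to $B'$ is the point set of $B_i$. So I would first run the construction of Lemma~\ref{lemma:semi-compressed_construction}: build a compressed quadtree on $V$ in $O(n\log n)$ time, build the $(2/\varepsilon)$-WSPD on it in an additional $O(n\varepsilon^{-2})$ time via the textbook algorithm~\cite{har2011geometric}, and for each of the $O(n\varepsilon^{-2})$ resulting pairs $(A_i,B_i)$ invoke Lemma~\ref{lemma:quadtree_ancestor} to compute the corresponding pair $(A',B')$ in $O(1)$ time.

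Second, while iterating over the pairs, I would emit each pair $(A',B')$ into the output list, annotating it with pointers to the originating nodes $A_i$ and $B_i$ so that the point set of the pair is recorded as that of $A_i$ paired with that of $B_i$, rather than that of $A'$ paired with that of $B'$. This is a single pass of length $O(n\varepsilon^{-2})$ over pairs that were already produced, so it incurs no new asymptotic cost. Summing the phases gives $O(n\log n) + O(n\varepsilon^{-2}) + O(n\varepsilon^{-2})\cdot O(1) = O(n\log n + n\varepsilon^{-2})$, as claimed. Correctness is inherited directly: Lemma~\ref{lemma:unbounded_spread_euclidean_wspd}b already establishes that this collection is a pair decomposition of $V\times V$ and a valid $(1/\varepsilon)$-WSPD in the Euclidean metric, so nothing further needs to be verified here.

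The only point requiring care --- and the closest thing to an obstacle --- is the bookkeeping of which point set belongs to which pair. A single semi-compressed quadtree node $A'$ can, by Lemma~\ref{lemma:unbounded_spread_euclidean_wspd}c, take part in $\Theta(\varepsilon^{-2})$ distinct WSPD pairs, and across those appearances the associated point set (that of the originating $A_i$) may differ, so the point set must be stored on the WSPD pair, via the pointer $A' \to A_i$, and not on the node $A'$ itself. Once this representation is fixed, the running-time bound is immediate, and I do not anticipate any deeper difficulty.
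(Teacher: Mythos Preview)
Your proposal is correct and follows essentially the same approach as the paper: the paper simply notes that the Euclidean WSPD of Lemma~\ref{lemma:unbounded_spread_euclidean_wspd}b is obtained by enumerating the pairs $(A',B')$ already produced during the semi-compressed quadtree construction of Lemma~\ref{lemma:semi-compressed_construction}, giving the same $O(n\log n + n\varepsilon^{-2})$ bound. Your additional remark about storing the point-set pointer $A'\to A_i$ on the pair rather than on the node is a valid implementation detail that the paper leaves implicit.
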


This completes the construction of the semi-compressed quadtree, and the Euclidean WSPD with $O(n \varepsilon^{-2})$ pairs and $O(n \varepsilon^{-4} \log n)$ \mbox{semi-weight}.

\subsection{Clustering via a net-tree}
\label{subsection:net_tree}

The next algorithmic step is to cluster the Euclidean WSPD. In particular, for each pair~$(A,B)$ in the Euclidean WSPD, recall that the approach in Section~\ref{subsection:unbounded_spread} is to partition~$A$ into $O(\lambda \sqrt{\semisize(V \cap 6A)})$ clusters and~$B$ into $O(\lambda \sqrt{\semisize(V \cap 6B)})$ clusters so that each cluster has a small graph diameter relative to its quadtree node size. Lemma~\ref{lemma:clustering_strong} guarantees the existence of such a clustering, and Lemma~\ref{lemma:clustering_weak} provides a na\"ive algorithm to compute the clustering. 

We restate the clustering algorithm in Lemma~\ref{lemma:clustering_weak} for convenience. Let~$S$ be a square in~$\mathbb R^2$ with side length~$\ell$. We iteratively construct the clusters~$C_i$. Pick an arbitrary unassigned vertex in $V \cap S$ and assign it as a cluster centre $c_i$. Assign to the cluster $C_i$ all unassigned vertices $u \in V \cap S$ satisfying $d_G(u,c_i) \leq \ell/2$. This completes the restatement of the clustering algorithm. To prove Lemmas~\ref{lemma:clustering_weak} and~\ref{lemma:clustering_strong}, the only requirements of the clustering algorithm are that each cluster has diameter at most~$\ell$, and the distance between different cluster centres is at least~$\ell/2$.

Applying the na\"ive clustering algorithm in Lemma~\ref{lemma:clustering_weak} for every quadtree node in the Euclidean WSPD results in an algorithm with a running time at least $O(\sum_{i=1}^k (|A_i| + |B_i|))$, where $\{(A_i,B_i)\}_{i=1}^k$ is the Euclidean WSPD. This summation is equal to the weight of the Euclidean WSPD, which may be~$\Omega(n^2)$. To obtain a faster algorithm, we will instead compute a set of global clusterings. Then we can retrieve the local clustering of a quadtree node by clipping the relevant global clustering to the quadtree node's square. We describe how to retrieve the local clusterings from the global clusterings in Section~\ref{subsection:putting_it_together}. In this subsection we focus on computing the global $r$-clustering, where $r$ ranges over all quadtree node sizes.

A net-tree can be thought of as a heirarchy of $r$-clusterings for exponential values of~$r$. Har-Peled and Mendel~\cite{DBLP:journals/siamcomp/Har-PeledM06} introduced the net-tree and used it to obtain a wide range of results in doubling metrics, for example, approximate nearest neighbour data structures, spanners, and approximate distance oracles. The net-tree has since been used to construct coresets for clustering in doubling metrics~\cite{DBLP:conf/focs/HuangJLW18}, fault-tolerant spanners in doubling metrics~\cite{DBLP:conf/stoc/Solomon14}, and approximate distance oracles for planar graphs~\cite{DBLP:conf/focs/LeW21}.

Eppstein, Har-Peled and Sidiropoulos~\cite{DBLP:journals/jocg/EppsteinHS20} use a set of $r$-clusterings to approximate the greedy clustering of a graphs. Their set of $r$-clusterings is inherently hierarchical, so it is straightforward to modify their $r$-clustering into a net-tree. The advantage of using the net-tree structure, instead of directly using the $r$-clusterings, is that we will be able to search the net-tree more efficiently. The remainder of Section~\ref{subsection:net_tree} will be dedicated to modifying the $r$-clustering in~\cite{DBLP:journals/jocg/EppsteinHS20} into a net-tree. 

We divide the remainder of the section into five steps. The first step is to review the clustering in Eppstein, Har-Peled and Sidiropoulos~\cite{DBLP:journals/jocg/EppsteinHS20} if~$\Phi$ is bounded. The second step is to modify this clustering into a net-tree. The third step is to review the clustering in Eppstein, Har-Peled and Sidiropoulos~\cite{DBLP:journals/jocg/EppsteinHS20} if~$\Phi$ is unbounded. The fourth step is to modify this clustering into a net tree. The fifth step is to state the properties of our net-tree.

First, we review the clustering in~\cite{DBLP:journals/jocg/EppsteinHS20} in the case where the spread~$\Phi$ is bounded. Let~$r_i = \Delta / (1+\varepsilon_1)^{i-1}$, where $1 \leq i \leq \lceil \log_{1+\varepsilon_1} \Phi \rceil$, $\Delta$ is the diameter of~$G$, and $\varepsilon_1$ is a constant. Later, we will set $\varepsilon = 1$, as we are interested in~$r_i$'s that halve per quadtree level. For sparse graphs, Eppstein, Har-Peled and Sidiropoulos~\cite{DBLP:journals/jocg/EppsteinHS20} provide a randomised, near-linear time algorithm for computing a sequence of cluster centres~$\{N_i\}$  satisfying the following three properties; see Fact~\ref{fact:net_tree_bounded_spread}.

\begin{itemize}[noitemsep]
    \item (Covering) The distance from any vertex $v \in G$ to a cluster centre in $N_i$ is at most $r_i$.
    \item (Packing) The distance between any two cluster centres in $N_i$ is at least $r_i$.
    \item (Inheritance) $N_i \subseteq N_{i+1}$ for all $1 \leq i \leq \lceil \log_{1+\varepsilon_1} \Phi \rceil - 1$. 
\end{itemize}

\begin{fact}[Lemma~2.5-2.7 in Eppstein, Har-Peled and Sidiropoulos~\cite{DBLP:journals/jocg/EppsteinHS20}] 
\label{fact:net_tree_bounded_spread}
Let $G = (V,E)$ be a graph with $n$ vertices, $m$ edges, diameter $\Delta$ and spread $\Phi$. Let $\varepsilon_1 > 0$ and let $r_i = \Delta / (1+\varepsilon_1)^{i-1}$. One can compute in $O(\varepsilon_1 m \log n \log \Phi)$ expected time, for $1 \leq i \leq \lceil \log_{1+\varepsilon_1} \rceil$, a sequence of $r_i$-cluster centres $\{N_i\}$ that satisfy the covering, packing and inheritance properties.
\end{fact}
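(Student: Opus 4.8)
The statement is precisely Lemmas~2.5--2.7 of Eppstein, Har-Peled and Sidiropoulos~\cite{DBLP:journals/jocg/EppsteinHS20}, so the plan is to reconstruct their argument. The idea is to build the sequence $\{N_i\}$ greedily, from the coarsest scale $r_1=\Delta$ down to the finest, so that the inheritance property $N_i\subseteq N_{i+1}$ holds by construction. We may assume $G$ is connected, since otherwise $\Delta=\infty$ and the claim is vacuous. For the base case I would take $N_1=\{v\}$ for an arbitrary vertex $v$: covering at scale $r_1=\Delta$ holds because $\Delta$ is the graph diameter, and packing is vacuous.

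For the inductive step, suppose $N_i$ is an exact $r_i$-net, i.e.\ it has covering radius $r_i$ and pairwise distances at least $r_i$. I would first initialise $N_{i+1}\leftarrow N_i$; since $r_{i+1}\leq r_i$ the points of $N_i$ remain pairwise $r_{i+1}$-separated, and inheritance is immediate. Then iterate the greedy step: while some vertex $u$ satisfies $d_G(u,N_{i+1})>r_{i+1}$, add $u$ to $N_{i+1}$. When the loop halts, every vertex lies within $r_{i+1}$ of $N_{i+1}$, which gives covering, and packing is preserved throughout, because each newly inserted $u$ was at distance greater than $r_{i+1}$ from every current centre, inherited or new, at the moment it was inserted. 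Hence $N_{i+1}$ is again an exact $r_{i+1}$-net extending $N_i$, which establishes all three properties by induction. Note that the geometric ratio $1+\varepsilon_1$ between consecutive scales plays no role in correctness --- the nets produced are exact --- and is introduced only to keep the refinement cheap.

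The hard part will be the running time: a naive implementation that runs a fresh bounded-radius shortest-path search from every centre at each of the $\Theta(\log_{1+\varepsilon_1}\Phi)$ scales is far too slow, since in a general graph arbitrarily many centres of $N_{i+1}$ can lie within $r_{i+1}$ of a single edge. The fix is to perform the greedy step incrementally and in a uniformly random order. Concretely, at scale $i+1$ I would maintain for each vertex $w$ a truncated estimate $\delta(w)=\min(r_{i+1},d_G(w,N_{i+1}))$, initialised from the scale-$i$ estimates by capping at $r_{i+1}$ --- so no multi-source search is needed for the inherited centres $N_i$ --- and then scan the remaining vertices in random order, making $w$ a centre whenever it is still uncovered (that is, $\delta(w)=r_{i+1}$) and, on each such insertion, running a truncated Dijkstra relaxation from $w$ to repair $\delta$. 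A backwards-analysis argument then bounds in expectation by $O(\log n)$ the number of times any fixed $\delta(w)$ is lowered: every centre that ever updates $\delta(w)$ lies in the ball of radius $r_{i+1}$ about $w$, and when these centres are met in random order only $O(\log n)$ of them are record improvements. Thus each edge is relaxed $O(\log n)$ times in expectation, the work at scale $i+1$ is $O(m\log n)$ expected, and summing over all scales, with the per-centre charging done as in~\cite{DBLP:journals/jocg/EppsteinHS20}, yields the stated $O(\varepsilon_1 m\log n\log\Phi)$ expected bound. I expect the one genuinely delicate point to be making this backwards analysis rigorous despite the fact that the set of vertices that become centres is chosen adaptively; the correctness of the greedy process itself is routine.
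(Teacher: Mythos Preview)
The paper does not prove this statement; it is quoted as a Fact from Eppstein, Har-Peled and Sidiropoulos~\cite{DBLP:journals/jocg/EppsteinHS20} and used as a black box. Your reconstruction --- greedy top-down net refinement with centres inserted in random order, truncated Dijkstra relaxations, and a backwards-analysis bound of $O(\log n)$ expected relaxations per edge per scale --- is indeed the approach of the cited reference, so there is nothing to compare against in the present paper.

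One minor observation: your per-scale cost of $O(m\log n)$ summed over $\lceil\log_{1+\varepsilon_1}\Phi\rceil\approx\varepsilon_1^{-1}\log\Phi$ scales gives $O(\varepsilon_1^{-1}m\log n\log\Phi)$, not the $O(\varepsilon_1 m\log n\log\Phi)$ literally printed in the Fact. This is a typo in the paper (the surrounding text and Fact~\ref{fact:net_tree_unbounded_spread} both use $\varepsilon_1^{-1}$), so your computation is the correct one; you should not claim it matches the stated bound verbatim.
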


Second, we turn the $r_i$-clustering in Fact~\ref{fact:net_tree_bounded_spread} into a net-tree in a straightforward way. We make the element of $N_1$ the root of the net-tree. For $2 \leq i \leq \lceil \log_{1+\varepsilon_1} \rceil$ the nodes on level~$i$ of the net-tree are the elements of~$N_i$. For a node $v \in N_{i}$, define $\parent(v) \in N_{i-1}$ to be the cluster centre of the $r_{i-1}$-cluster containing~$v$. All parents of $N_{i}$ can be computed in a single pass of Dijkstra's algorithm, starting simultaneously from the source vertices~$N_i$. The net-tree has $O(n)$ nodes per level and $O(\log \Phi)$ levels, so its total size is $O(n \log \Phi)$. Computing $\{N_i\}$ takes $O(\varepsilon_1^{-1} m \log n \log \Phi)$ expected time by Fact~\ref{fact:net_tree_bounded_spread}. Computing the parents of $N_i$ takes $O(m \log n)$ time per level, so $O(m \log n \log \Phi)$ in total. Therefore, the overall running time for computing the net-tree is $O(\varepsilon_1 (n + m) \log n \log \Phi)$ expected time.

In Figure~\ref{figure:net-tree1}, we provide an example of an $r_i$-clustering and its corresponding net-tree. We have the vertices~$N_1 = \{v_8\}$ marked as red, the vertices~$N_2 \setminus N_1 = \{v_3,v_4,v_8\}$ marked as orange, the vertices~$N_3 \setminus N_2 = \{v_1,v_9,v_{13}\}$ marked as green, the vertices~$N_4 \setminus N_3 = \{v_5,v_{12}\}$ marked as blue, and $N_5 \setminus N_4 = \{v_2,v_6,v_7,v_{10}\}$ marked as purple. There are $O(\log \Phi)$ levels, and each vertex can appear at most once in each level, so the total size is $O(n \log \Phi)$.

\begin{figure}[ht]
    \centering
    \includegraphics[width=\textwidth]{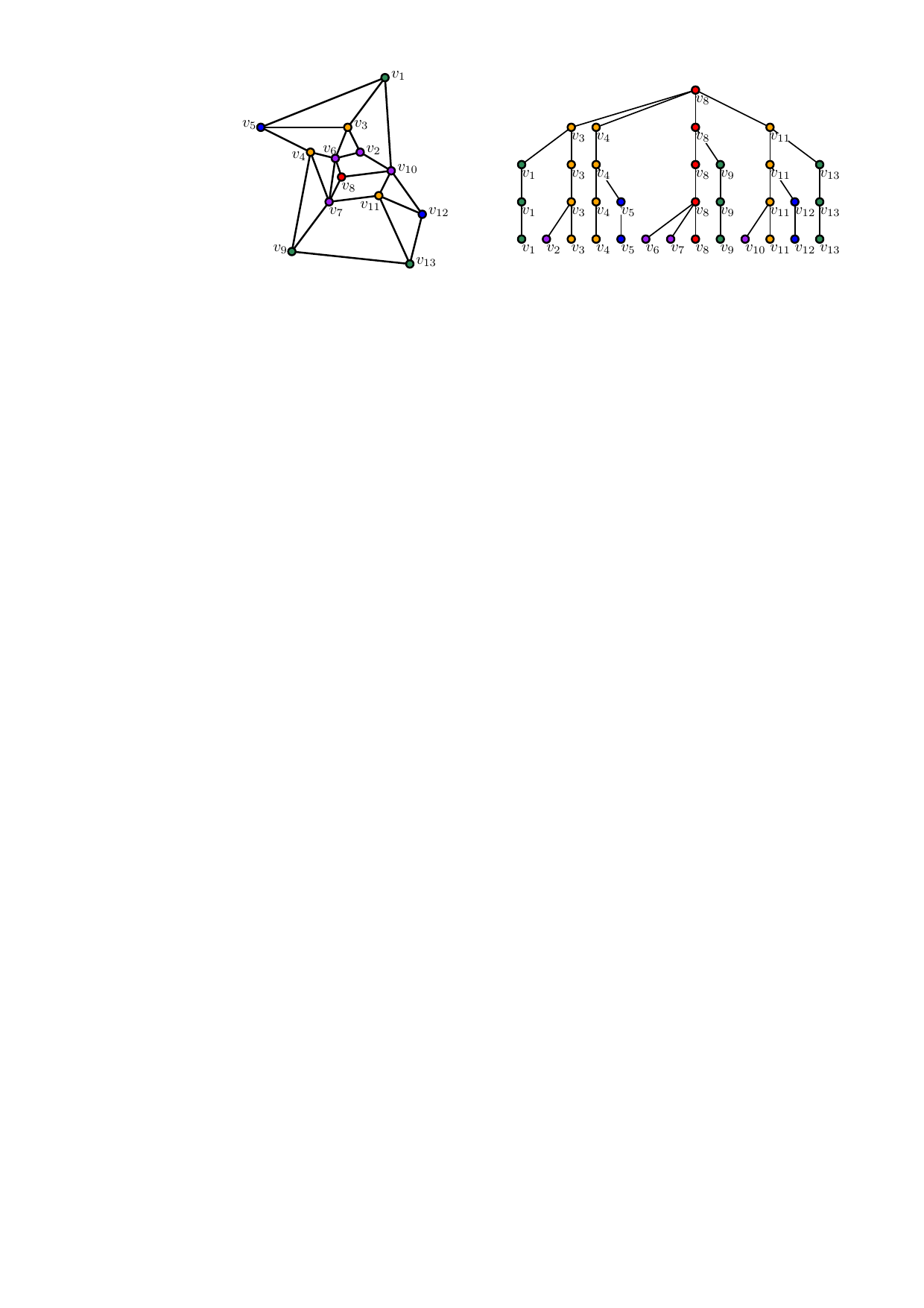}
    \caption{An $r_i$-clustering (left) and its corresponding net-tree (right). The colour of a vertex indicates the smallest~$i$ where it is an element of~$N_i$, and corresponds to the highest level it appears in the net-tree.}
    \label{figure:net-tree1}
\end{figure}

Third, we review the result of~\cite{DBLP:journals/jocg/EppsteinHS20} in the case where the spread~$\Phi$ is unbounded. In particular, they use a standard method~\cite{DBLP:journals/cjtcs/MendelS09} to eliminate the dependence on the spread. For a resolution~$r_i$, edges longer than~$n r_i$ can be ignored, and edges shorter than~$\varepsilon_1 r_i / n^2$ can be collapsed. Therefore, an edge is only active in $O(\varepsilon_1 \log (n / \varepsilon_1)$ resolutions. Let~$m_i$ be the number of active edges at resolution~$r_i$. The running time of computing the~$r_i$-clustering is~$O(m_i \log m_i)$ expected, by Lemma~2.4 in~\cite{DBLP:journals/jocg/EppsteinHS20}. Therefore, the sequence of~$r_i$-cluster centres~$\{N_i\}$ can be computed in $O(\sum_i m_i \log m_i)$ expected time, which is $O(\varepsilon_1^{-1} m \log n \log (n/\varepsilon_1))$ expected time in total.

\begin{fact}[Theorem~2.8 in Eppstein, Har-Peled and Sidiropoulos~\cite{DBLP:journals/jocg/EppsteinHS20}]
\label{fact:net_tree_unbounded_spread}
    Let $G = (V,E)$ be a graph with $n$ vertices, $m$ edges and diameter $\Delta$. Let $\varepsilon_1 > 0$ and let $r_i = \Delta / (1+\varepsilon_1)^{i-1}$. One can compute in $O(\varepsilon_1^{-1} m \log n \log (n/\varepsilon_1))$ expected time a sequence of $r_i$-cluster centres $\{N_i\}$ that satisfy the covering, packing and inheritance properties. Note that the sequence $\{N_i\}$ may skip over resolutions $r_i$ if the $r_i$-clustering does not change.
\end{fact}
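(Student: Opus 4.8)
The plan is to reduce the unbounded-spread problem to a sequence of bounded-spread instances, one per resolution $r_i$, handle each with the machinery behind Fact~\ref{fact:net_tree_bounded_spread}, and then sum the per-resolution costs. I would build the cluster-centre sequence $\{N_i\}$ from coarse to fine. Set $N_1$ to be a single vertex, which covers $V$ since $r_1=\Delta$. Given $N_i$, obtain $N_{i+1}$ as follows: run a multi-source Dijkstra from $N_i$, and then repeatedly promote to a centre any vertex whose current distance to the centre set exceeds $r_{i+1}$, updating the Dijkstra labels from each newly promoted vertex. When the procedure halts, every vertex is within $r_{i+1}$ of a centre (covering); each promoted vertex was chosen at graph distance strictly greater than $r_{i+1}$ from every centre present at the time, and the inherited centres of $N_i$ are pairwise at distance $\ge r_i > r_{i+1}$ (packing); and $N_i\subseteq N_{i+1}$ by construction (inheritance). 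This is exactly the building block of Fact~\ref{fact:net_tree_bounded_spread}; the only nontrivial content there, which I would quote rather than reprove, is that a single resolution with $m_i$ edges is processed in $O(m_i\log m_i)$ expected time (the randomisation is in the Dijkstra bucketing used to shave a logarithmic factor).

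The new ingredient over the bounded-spread case is sparsification. For resolution $r_i$ I would work with the graph $G_i$ obtained from $G$ by deleting every edge of length $>n r_i$ and contracting every edge of length $<\varepsilon_1 r_i/n^2$. Deleting long edges is harmless for the covering and packing tests at scale $r_i$: any path of total length $\le r_i$ uses only edges of length $\le r_i<n r_i$, so $G_i$ preserves exactly the set of pairs at distance $\le r_i$, and in particular no vertex that was covered in $G$ becomes uncovered. Contracting the very short edges changes every pairwise distance by at most an additive $(n-1)\cdot\varepsilon_1 r_i/n^2<\varepsilon_1 r_i/n$, which is absorbed into the $(1+\varepsilon_1)$ slack of the $r_i$ sequence (this is why the $r_i$ are geometric in $1+\varepsilon_1$ rather than exact powers of two, and why the statement phrases the guarantee up to that factor). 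An edge of length $\ell$ belongs to $G_i$ only when $\ell/n\le r_i\le \ell n^2/\varepsilon_1$, a window of resolutions spanning a multiplicative factor of $n^3/\varepsilon_1$, hence only $O(\log_{1+\varepsilon_1}(n^3/\varepsilon_1))=O(\varepsilon_1^{-1}\log(n/\varepsilon_1))$ resolutions. Summing over edges gives $\sum_i m_i=O(\varepsilon_1^{-1}m\log(n/\varepsilon_1))$.

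Combining the two pieces, the total expected running time is $\sum_i O(m_i\log m_i)\le \log n\cdot\sum_i O(m_i)=O(\varepsilon_1^{-1}m\log n\log(n/\varepsilon_1))$, as claimed. Whenever a block of consecutive resolutions shares the same active graph $G_i$ and produces the same centre set, no new level is stored, which is exactly what the closing sentence of the statement records; this also keeps the total output size linear in the number of genuinely distinct nets.

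The step I expect to be the main obstacle is reconciling the sparsification with the \emph{inheritance} property. As $i$ increases both thresholds shrink, so $G_{i+1}$ simultaneously ``uncontracts'' some blobs that $G_i$ had merged and loses some edges that $G_i$ still had; neither change is monotone in a way that makes $N_i\subseteq N_{i+1}$ immediate. The fix is to pin every centre of $N_i$ to a canonical representative vertex in $G_{i+1}$ and verify that this representative stays at distance $\ge r_{i+1}$ from the other centres and is never promoted-then-discarded, so that the greedy extension at scale $r_{i+1}$ can legitimately start from $N_i$. A secondary difficulty is the single-resolution $O(m_i\log m_i)$-expected-time routine itself, which is the genuinely delicate randomised-incremental/Dijkstra argument of Eppstein, Har-Peled and Sidiropoulos (their Lemma~2.4); here I would cite it rather than reconstruct it.
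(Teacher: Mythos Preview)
Your proposal is correct and matches the paper's approach: the paper merely reviews this cited result in the paragraph preceding the Fact, giving exactly the sparsification idea (ignore edges longer than $nr_i$, collapse edges shorter than $\varepsilon_1 r_i/n^2$), the $O(\varepsilon_1^{-1}\log(n/\varepsilon_1))$ bound on the number of active resolutions per edge, the per-resolution $O(m_i\log m_i)$ expected cost from Lemma~2.4 of~\cite{DBLP:journals/jocg/EppsteinHS20}, and the resulting summation. Your write-up is in fact more careful than the paper's brief review, particularly in flagging the inheritance subtlety under sparsification, which the paper (reasonably, since it is citing rather than proving) does not discuss.
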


Fourth, we turn the $r_i$-clustering in Fact~\ref{fact:net_tree_unbounded_spread} into a net-tree. We modify the net-tree construction from step two, so as to eliminate the dependence on the spread. Previously, a vertex could appear in all $O(\log \Phi)$ levels of the net-tree, so the overall size of the net-tree was~$O(n \log \Phi)$. Instead, we now only include one copy of each vertex in the net-tree. We refer to this new net-tree as the compressed net-tree. We make~$N_1$ the root of the compressed net-tree. We place a vertex~$v \in V$ into level~$i$ of the compressed net-tree if~$v \in N_i$ and~$v \not \in N_{i-1}$. For~$v$ on level~$i$, define $\parent(v)$ to be the cluster centre of the $r_{i-1}$-cluster containing~$v$. Note that the parent of a node in level $i$ may no longer be in level~$i-1$, but instead in level~$j$ for some $j < i$. This completes the construction of the compressed net-tree. The overall size of the net-tree is $O(n)$. The overall running time for constructing the net-tree is $O(\varepsilon_1^{-1} m \log n \log (n / \varepsilon_1))$ expected. 

In Figure~\ref{figure:net-tree2}, we provide an example of an $r_i$-clustering and its corresponding compressed net-tree. Similarly to Figure~\ref{figure:net-tree1}, we have ~$N_1 = \{v_8\}$,~$N_2 \setminus N_1 = \{v_3,v_4,v_8\}$,~$N_3 \setminus N_2 = \{v_1,v_9,v_{13}\}$,~$N_4 \setminus N_3 = \{v_5,v_{12}\}$, and $N_5 \setminus N_4 = \{v_6,v_7,v_{10}\}$. However, unlike in Figure~\ref{figure:net-tree1}, each vertex can appear at most once in the compressed net-tree, so its overall size is $O(n)$.

\begin{figure}[ht]
    \centering
    \includegraphics[width=\textwidth]{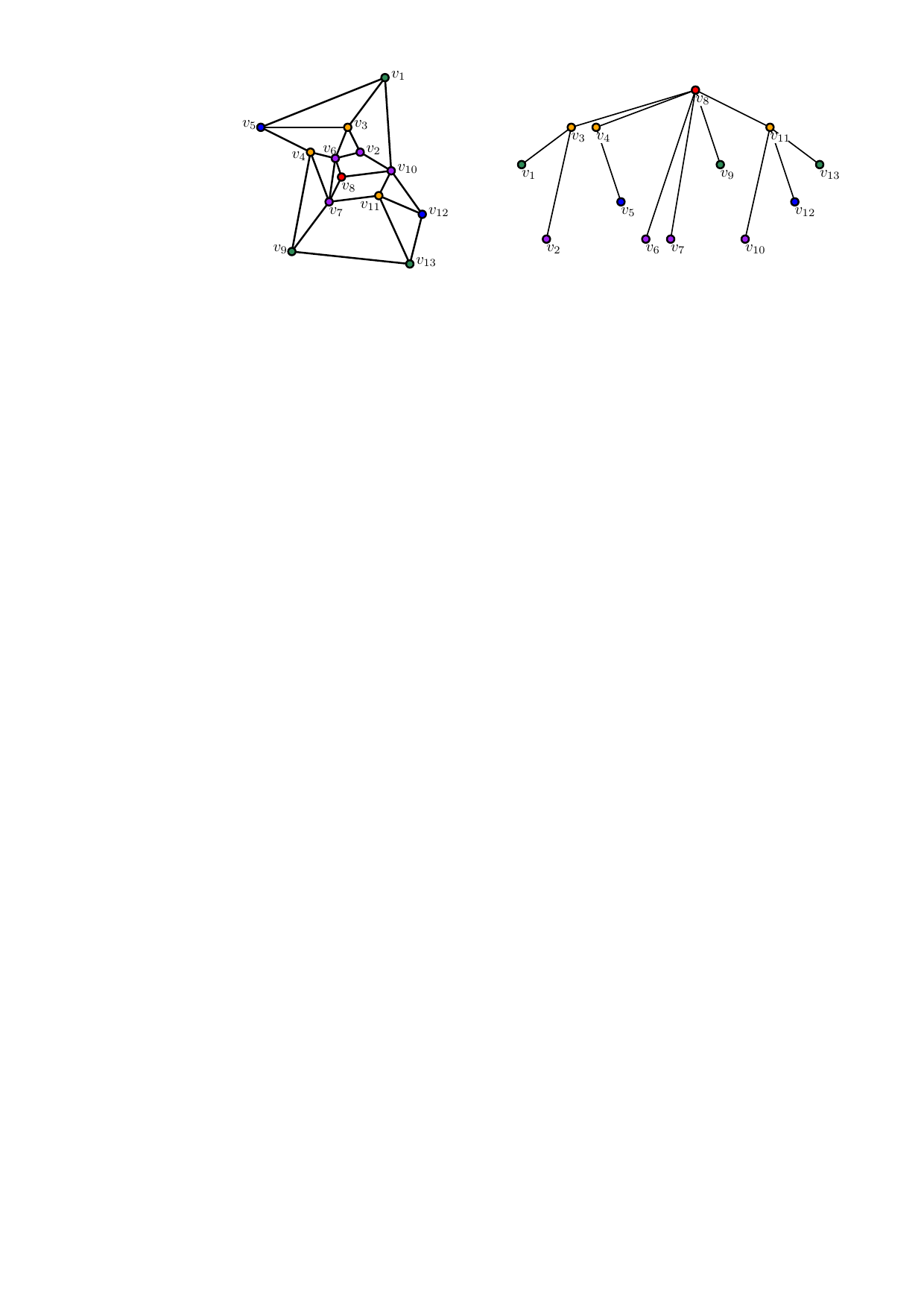}
    \caption{An $r_i$-clustering (left) and its corresponding compressed net-tree (right). The colour of a vertex indicates its level in the net-tree.}
    \label{figure:net-tree2}
\end{figure}

Fifth, we state the properties of our net-tree. For our purposes, it suffices that $\varepsilon_1 = 1$. To this end, we also set $r_i = \Delta/2^{i-1}$ where $\Delta$ is the side length of the root node in the semi-compressed quadtree. Recall that $N_i$ can be reconstructed from the $(\leq i)$-levels. We prove a useful property of our compressed net-tree; that the subtrees rooted at $N_i$ induces a $2r_i$-clustering.

\begin{lemma}
    \label{lemma:net_tree}
    Let $G = (V,E)$ be a graph with $n$ vertices and $m$ edges. Let $G$ be contained in a $\Delta \times \Delta$ square. Let $r_i = \Delta / 2^{i-1}$. Construct a sequence of $r_i$-cluster centres~$\{N_i\}$ using Fact~\ref{fact:net_tree_unbounded_spread}. Construct its compressed net-tree using the algorithm above. Let $N_i = \{s_{ij}\}$. Let $S_{ij}$ be the set of descendants of $s_{ij}$ with $(> i)$-level so that path from $s_{ij}$ to the descendant passes through no other nodes in~$N_i$. Then $S_{ij}$ satisfies:
    \begin{itemize}[noitemsep]
        \item (Covering) The distance from any vertex $v \in S_{ij}$ to  $s_{ij}$ is at most $2r_i$.
        \item (Packing) The distance between $s_{ij}$ and $s_{ik}$ is at least $r_i$, for any $j \neq k$. 
        \item (Partition) The sets $\{s_{ij} \cup S_{ij}\}$ form a partition of $V$. 
    \end{itemize}
    Moreover, the compressed net-tree can be computed in $O(m \log^2 n)$ expected time.
\end{lemma}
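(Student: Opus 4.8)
The plan is to instantiate Fact~\ref{fact:net_tree_unbounded_spread} with $\varepsilon_1 = 1$, so that the resolutions are $r_i = \Delta/2^{i-1}$, compute the sequence $\{N_i\}$, and then form the compressed net-tree exactly as in the fourth step above: one copy of each vertex, placed at its birth level, with $\parent(v)$ the centre of the cluster containing $v$ at the next coarser resolution. The single observation that drives all three structural claims is: by inheritance $N_i \subseteq N_{i+1}$, a vertex lies in $N_i$ if and only if its level in the compressed net-tree is at most $i$. Hence $N_i$ is exactly the set of net-tree nodes of level $\le i$, and for $s_{ij} \in N_i$ the set $S_{ij}$ is exactly the set of strict descendants $v$ of $s_{ij}$ for which $s_{ij}$ is the first ancestor of $v$, going upward, of level $\le i$.

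\textbf{Packing and Partition.}
Packing is immediate: $s_{ij}$ and $s_{ik}$ are distinct elements of $N_i$, so the packing property of $\{N_i\}$ gives $d_G(s_{ij},s_{ik}) \ge r_i$. For Partition, fix $v \in V$ and walk upward in the net-tree; since every root of the net-tree lies in $N_1 \subseteq N_i$, this walk meets a node of level $\le i$. Let $s_{ij}$ be the first such node. If $v$ itself has level $\le i$ then $v = s_{ij}$; otherwise $v$ is a strict descendant of $s_{ij}$ whose upward path to $s_{ij}$ meets no other node of level $\le i$, hence $v \in S_{ij}$. Uniqueness holds because the ancestors of $v$ form a chain: two distinct centres on this chain cannot both be the first node of level $\le i$, and a node $v \in N_i$ cannot belong to any $S_{ik}$ since all members of $S_{ik}$ have level $> i$. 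Thus $\{s_{ij} \cup S_{ij}\}$ partitions $V$.

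\textbf{Covering, the main estimate.}
Let $v \in S_{ij}$, a strict descendant of $s_{ij}$, and let $v = u_0, u_1 = \parent(u_0), \dots, u_t = s_{ij}$ with $t \ge 1$ be its ancestor path. Levels strictly decrease along parents, and $u_{t-1}$ has level $\ge i+1$; writing $h$ for the level of $u_{\ell-1}$ we get $h \ge i + (t-\ell) + 1$ for $1 \le \ell \le t$. By construction $\parent(u_{\ell-1})$ is the centre of the cluster containing $u_{\ell-1}$ at resolution $r_{h-1}$, so the covering property gives $d_G(u_{\ell-1}, u_\ell) \le r_{h-1} \le r_{i+t-\ell} = r_i/2^{t-\ell}$. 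Summing the geometric series, $d_G(v,s_{ij}) \le \sum_{\ell=1}^{t} r_i 2^{\ell-t} < 2 r_i$, as required.

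\textbf{Running time, and the main obstacle.}
Computing $\{N_i\}$ takes $O(\varepsilon_1^{-1} m \log n \log(n/\varepsilon_1)) = O(m\log^2 n)$ expected time by Fact~\ref{fact:net_tree_unbounded_spread} with $\varepsilon_1 = 1$. For the parent pointers, at each change resolution $r_i$ we need, for each vertex born at level $i$, the centre in $N_{i-1}$ of its cluster, which lies within distance $r_{i-1} = 2r_i$; using the active-edge accounting recalled in the third step, at resolution $r_i$ only $m_i$ edges matter and $\sum_i m_i = O(m\log n)$, so a multi-source Dijkstra from $N_{i-1}$ truncated at radius $2r_i$ on the active subgraph costs $O(m_i\log m_i)$, which sums to $O(m\log^2 n)$ expected. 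I expect the main obstacle to be the covering estimate together with the bookkeeping forced by compression: one must verify that levels strictly increase along ancestor paths and that resolutions skipped by the construction of $\{N_i\}$ cannot break the geometric telescoping, the point being that a skipped resolution only shrinks the relevant parent distance. The running-time claim for the parent pointers in the unbounded-spread regime likewise relies on the active-edge bound rather than a naive per-level Dijkstra.
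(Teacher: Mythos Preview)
Your proposal is correct and follows essentially the same approach as the paper: packing from the packing property of $N_i$, partition from the tree structure (the paper phrases this as $\{s_{ij}\}$ partitioning the $(\le i)$-levels and $\{S_{ij}\}$ partitioning the $(>i)$-levels, which is exactly your walk-upward argument), covering via the geometric telescoping along the ancestor path, and running time by instantiating Fact~\ref{fact:net_tree_unbounded_spread} with $\varepsilon_1 = 1$. Your treatment of the parent-pointer cost via the active-edge accounting is slightly more explicit than the paper's, which simply asserts that computing $\{N_i\}$ dominates.
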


\begin{proof}
The packing property of $\{s_{ij}\}$ follows directly from the packing property of $N_i$ in Fact~\ref{fact:net_tree_unbounded_spread}. The partition property follows from $\{s_{ij}\}$ partitioning the $(\leq i)$-levels, and $\{S_{ij}\}$ partitioning the $(> i)$-levels. Next, we prove the covering property. Let the path from $s_{ij}$ to $v$ be $s_{ij} = u_0 \to \ldots \to u_k = v$. Then $u_1$ is on a level~$\geq i+1$, otherwise we would have $u_1 \in N_i$ which is a contradiction. It follows that $u_j$ is on a level $\geq i+j$. Since $\parent(u_j) = u_{j-1}$, we have that $d_G(u_j,u_{j-1}) \leq r_{i+j} = \Delta/2^{i+j-1}$. By the triangle inequality, $d_G(s_{ij},v) \leq \sum_{j=0}^k \Delta/2^{i+j-1} \leq \Delta/2^{i-2} = 2r_i$, as required. 

The running time for computing the net-tree is dominated by computing the sequence of cluster centres~$\{N_i\}$. The overall running time is $O(m \log^2 n)$ by setting $\varepsilon_1 = 1$ in Fact~\ref{fact:net_tree_unbounded_spread}.
\end{proof}

In Figure~\ref{figure:net-tree3}, we provide an example where we apply Lemma~\ref{lemma:net_tree}. In the example, we have~$i=2$ and $N_2 = \{v_3,v_4,v_8,v_{11}\}$. Therefore, $s_{2,1} = v_3$, $s_{2,2} = v_4$, $s_{2,3} = v_8$, and $s_{2,4} = v_{11}$. By considering the subtrees rooted at $s_{2,j}$ for $1 \leq j \leq 4$, we partition the compressed net-tree into four clusters~$S_{2,j}$ for $1 \leq j \leq 4$. In particular, the cluster rooted at $v_3$ is $S_{2,1} = \{v_1,v_2\}$, the cluster rooted at $v_4$ is $S_{2,2} = \{v_5\}$, the cluster rooted at $v_8$ is $S_{2,3} = \{v_6,v_7,v_9\}$, and the cluster rooted at $v_{11}$ is $\{v_{10},v_{12},v_{13}\}$.

\begin{figure}[ht]
    \centering
    \includegraphics[width=\textwidth]{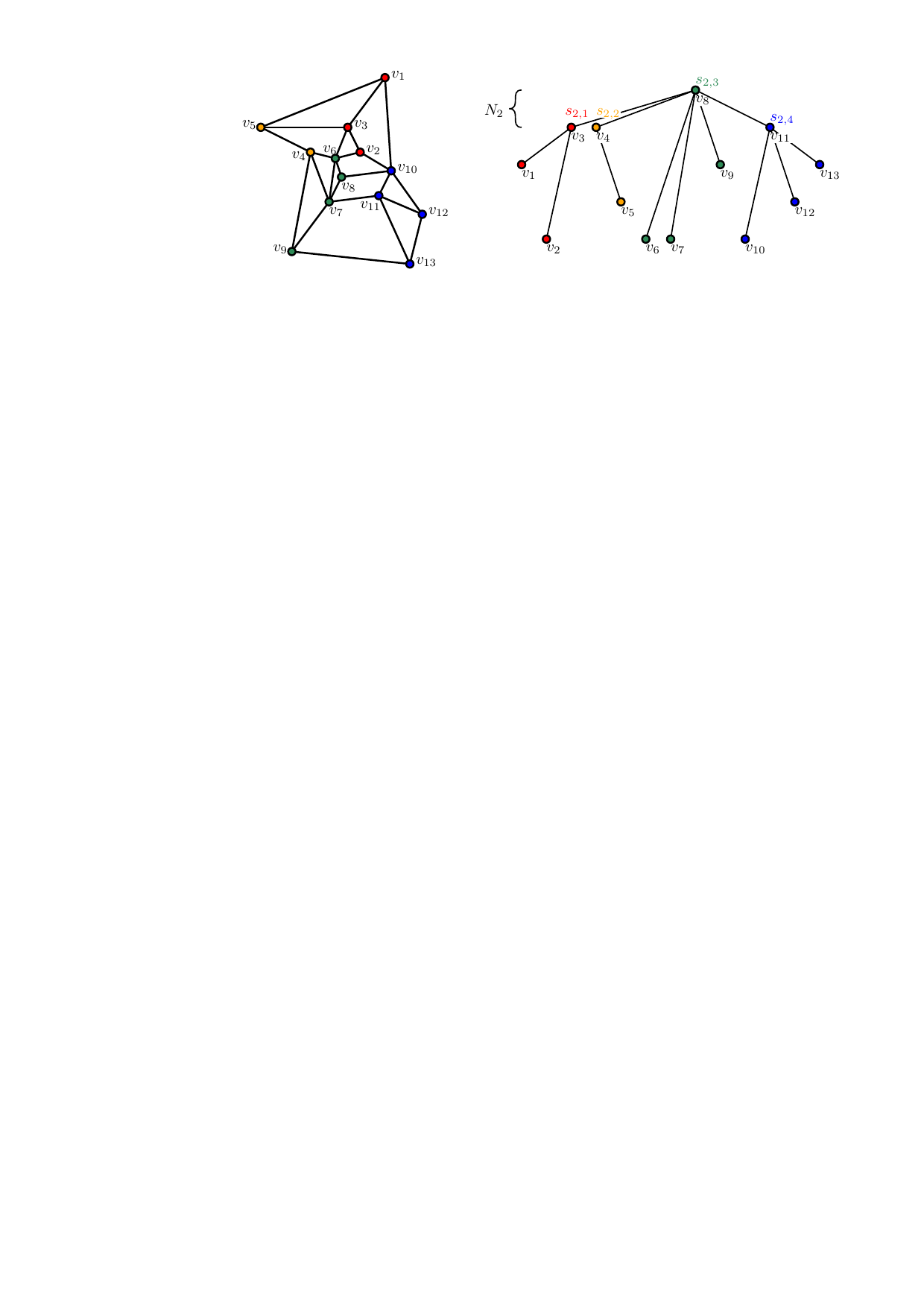}
    \caption{An $r_i$-clustering (left) and its corresponding compressed net-tree (right). The elements of~$N_i$ are on the $(\leq i)$-level of the net-tree. The subtrees rooted at~$N_i$ induce a clustering partition. The colour of a vertex indicates the $r_i$-clustering it belongs to.}
    \label{figure:net-tree3}
\end{figure}

\subsection{Putting it together}
\label{subsection:putting_it_together}

Now that we have the Euclidean WSPD in Lemma~\ref{lemma:euclidean_wspd_construction} and the net-tree in Lemma~\ref{lemma:net_tree}, we will combine them to obtain the Graph WSPD. Recall that, for each~$(A,B)$ in the Euclidean WSPD, our approach is to locally cluster~$A$ into $O(\lambda \sqrt{\semisize(V \cap 6A)})$ clusters and~$B$ into $O(\lambda \sqrt{\semisize(V \cap 6B)})$ clusters so that each cluster has a small graph diameter relative to its quadtree node size. Our approach for computing the local clustering for~$A$ or~$B$ is to retrieve the global $r$-clustering where $r$ corresponds to the side length of~$A$ or~$B$, and to clip the global clustering to the square~$A$ or~$B$. We formalise our approach in Theorem~\ref{theorem:wspd_construction} to obtain the main theorem of this section.

\begin{theorem}
    \label{theorem:wspd_construction}
    Let $G = (V,E)$ be a $\lambda$-low-density graph with $n$ vertices. For all $\varepsilon > 0$, there is a $(1/\varepsilon)$-WSPD for $V$ in the graph metric $G$ with $O(n \lambda^2 \varepsilon^{-4} \log n)$ pairs. The construction takes $O(n \lambda^2 \varepsilon^{-4} \log^2 n)$ expected time.
\end{theorem}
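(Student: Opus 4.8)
The size bound --- $O(n\lambda^2\varepsilon^{-4}\log n)$ pairs --- together with well-separation is already established by Theorem~\ref{theorem:unbounded_spread_wspd}, so the plan is only to turn that existence proof into an algorithm and to bound its expected running time, following the three stages of Section~\ref{subsection:unbounded_spread}. First I would run Lemmas~\ref{lemma:semi-compressed_construction} and~\ref{lemma:euclidean_wspd_construction} to build, in $O(n\log n + n\varepsilon^{-2})$ time, the semi-compressed quadtree and a Euclidean WSPD with $O(n\varepsilon^{-2})$ pairs, each a pair of congruent well-separated quadtree nodes. Next I would note that since $G$ is $\lambda$-low-density, shrinking a ball around any vertex shows its degree is at most $\lambda$, so $m \le \lambda n$; applying Lemma~\ref{lemma:net_tree} with $r_i = \Delta/2^{i-1}$, where $\Delta$ is the side length of the quadtree root, builds the compressed net-tree in $O(m\log^2 n) = O(\lambda n\log^2 n)$ expected time. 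Recall that for each $i$ the nodes on the $(\le i)$-levels form the net $N_i = \{s_{ij}\}$, and $\{s_{ij}\cup S_{ij}\}_j$ is a partition of $V$ into graph-clusters of radius at most $2r_i$ whose centres are pairwise at graph distance at least $r_i$.

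The third stage, the local clustering, is where the real work is. For a quadtree node $A$ of side $\ell(A)$, I would pick the level $i(A)$ so that the net-tree clusters at level $i(A)$ have graph diameter at most $\ell(A)$ --- possible because the $r_i$ halve, and it forces distinct cluster centres at that level to be $\Omega(\ell(A))$ apart. A cluster at level $i(A)$ meets the square $A$ only if its centre lies within Euclidean distance $\ell(A)$ of $A$ (its Euclidean diameter being at most its graph diameter), hence inside the concentric square $3A$. I would therefore preprocess the $O(n)$ net-tree nodes, each stored as the point $(x,y,\mathrm{level}) \in \mathbb R^3$, into a standard three-dimensional orthogonal range-reporting structure ($O(n\log^2 n)$ preprocessing, $O(\log^2 n + k)$ query for $k$ reported points). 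For each Euclidean WSPD node $A$ I would query the box $3A \times (-\infty, i(A)]$ to get the set $\mathcal C_A$ of candidate centres; the clusters at level $i(A)$ with centre in $\mathcal C_A$, restricted to $V \cap A$, then form the partition $\{C_{ij}\}$ of $V \cap A$ (each $C_{ij}$ is a subset of a set of graph diameter at most $\ell(A)$, so $\diam_G(C_{ij}) \le \ell(A)$), and similarly for $B$. These sets are never materialised; each is recorded by its centre, and each emitted pair $(C_{ij}, D_{ik})$ is stored in $O(1)$ space. Some emitted pairs may be empty, which is harmless. Well-separation and the pair-decomposition property then follow exactly as in Theorems~\ref{theorem:bounded_spread_wspd} and~\ref{theorem:unbounded_spread_wspd}.

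For the running time, the crucial claim is that $|\mathcal C_A| = O(\lambda\sqrt{\semisize(cA)})$ for an absolute constant $c$: every reported centre lies in $3A$, distinct reported centres are $\Omega(\ell(A))$ apart in $G$, and each of their clusters lies within a constant-factor expansion of $A$, so repeating the packing argument of Lemma~\ref{lemma:clustering_strong} --- which uses only the cluster diameter bound, the centre separation, and the strengthened length bound of Lemma~\ref{lemma:low_density_implies_packed_strong} --- bounds the number of reported centres. Summing over the $O(n\varepsilon^{-2})$ Euclidean WSPD pairs and invoking the semi-weight bound of Lemma~\ref{lemma:unbounded_spread_euclidean_wspd}e, exactly as in the proof of Theorem~\ref{theorem:unbounded_spread_wspd}, gives $\sum_i (|\mathcal C_{A_i}| + |\mathcal C_{B_i}|) = O(\lambda n\varepsilon^{-4}\log n)$ reported centres in total and $\sum_i |\mathcal C_{A_i}| \cdot |\mathcal C_{B_i}| = O(n\lambda^2\varepsilon^{-4}\log n)$ emitted pairs, while the $O(n\varepsilon^{-2})$ range queries contribute an additional $O(n\varepsilon^{-2}\log^2 n)$ of fixed overhead. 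Adding the costs of all three stages, the total is $O(n\lambda^2\varepsilon^{-4}\log^2 n)$ expected time, which dominates every other term and proves the theorem.

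The main obstacle is the third stage: one must argue, simultaneously, that clipping the single global net-tree clustering to a quadtree node yields a valid local clustering of the right graph diameter, and that the only cluster centres relevant to that node are precisely those returned by one range query whose output size is still controlled by $\semisize$ rather than by the (potentially much larger) Euclidean population of the query box --- this is what lets the per-node costs telescope to near-linear through the semi-weight bound. A secondary technical point is selecting a three-dimensional range-reporting structure whose query overhead is small enough that the logarithmic factors do not compound beyond $\log^2 n$.
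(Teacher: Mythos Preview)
Your proposal is correct and follows essentially the same route as the paper: build the Euclidean WSPD via Lemmas~\ref{lemma:semi-compressed_construction}--\ref{lemma:euclidean_wspd_construction}, build the compressed net-tree via Lemma~\ref{lemma:net_tree} (using $m\le\lambda n$), store the net-tree nodes as $(x,y,\text{level})$ in a three-dimensional orthogonal range-reporting structure, and for each Euclidean WSPD node $A$ retrieve the relevant level-$i(A)$ cluster centres lying in a constant-factor expansion of $A$; the output-size bound comes from the packing argument of Lemma~\ref{lemma:clustering_strong}, and summing via AM--GM and the semi-weight bound of Lemma~\ref{lemma:unbounded_spread_euclidean_wspd}e yields both the pair count and the running time. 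The only cosmetic differences are that the paper queries the box $2A$ (rather than $3A$) because it fixes the cluster radius at $\ell(A)/2$, and the paper writes the level range as $[i(A),\infty]$ rather than $(-\infty,i(A)]$; neither affects the argument.
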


\begin{proof}
The number of pairs in the Graph WSPD follows from Theorem~\ref{theorem:unbounded_spread_wspd}. It remains to state the algorithm for computing the Graph WSPD, prove its correctness, and analyse its running time. 

We state our Graph WSPD algorithm. Construct a Euclidean WSPD using Lemma~\ref{lemma:euclidean_wspd_construction}. Construct a compressed net-tree using Lemma~\ref{lemma:net_tree}. Construct a three-dimensional orthogonal range searching data structure containing $(x(v),y(v),i(v)) \in \mathbb R^3$ for all $v \in V$, where $x(v)$ and $y(v)$ are the $x$- and $y$-coordinates of the vertex~$v \in \mathbb R^2$, and $i(v)$ is the level of $v$ in the compressed net-tree. Note that each vertex~$v \in V$ appears exactly once in the net-tree, so~$i(v)$ is well defined. For each well-separated pair of quadtree nodes~$(A,B)$ in the Euclidean WSPD, define $i(A) = \log_2(\Delta/\ell(A)) - 1$, where~$\ell(A)$ is the side length of $A$. Construct the set of cluster centres $\{s_j\}$ by querying for all points in the three dimensional range $[x_{min}(2A),x_{max}(2A)] \times [y_{min}(2A),y_{max}(2A)] \times [i(A),\infty]$, where $x_{min}(2A)$, $x_{max}(2A)$, $y_{min}(2A)$, $y_{max}(2A)$ denote the minimum and maximum $x$- and $y$-coordinates of the square $2A$. Recall that $2A$ is $A$ expanded by a factor of 2 from its centre. Construct~$\{t_{k}\}$ similarly by querying for all points in the range $[x_{min}(2B),x_{max}(2B)] \times [y_{min}(2B),y_{max}(2B)] \times [i(B),\infty]$. Let $S_j$ be the descendants of $s_j$ with $(>i(A))$-level so that the path from $s_j$ to the descendant passes through no other nodes in the $(\leq i(A))$-levels. Add~$s_j$ to the set~$S_j$. Let $T_k$ be the descendants of $t_k$ with $(>i(B))$-level so that the path from $t_k$ to the descendant passes through no other nodes in the~$(\leq i(B))$-levels. Add~$t_k$ to the set $T_k$. Let $C_j = S_j \cap A$ and $D_k = T_k \cap B$. Note that computing the elements in the sets $C_j$ and $D_k$ explicitly would be too expensive, so instead we represent~$C_j$ implicitly with the pair $(s_j,A)$ and~$D_k$ implicitly with pair $(t_k,B)$. Add the well-separated pairs $\{(C_j,D_k)\}$ to the WSPD for all $j$ and $k$, and repeat this for every pair of well-separated quadtree nodes~$(A,B)$ in the Euclidean WSPD. This completes the description of the Graph WSPD algorithm.

In Figure~\ref{figure:net-tree4}, we provide an example where we construct the clusters $\{C_j\}$ from a set~$A$. In this example, we have $A = \{v_3,v_6\}$ and $2A = \{v_2,v_3,v_4,v_6,v_8\}$. We assume that $i(A) = 2$, and $N_2 = \{v_3,v_4,v_8,v_{11}\}$. Therefore, the three dimensional range query $[x_{min}(2A),x_{max}(2A)] \times [y_{min}(2A),y_{max}(2A)] \times [i(A),\infty]$ returns the set of points $\{v_3,v_4,v_8\}$, which we relabel with $\{s_1,s_2,s_3\}$. From the net-tree we obtain $S_1 = \{v_1,v_2,v_3\}$, $S_2 = \{v_4,v_5\}$, and $S_3 = \{v_6,v_7,v_8,v_9\}$. We clip these sets to the bounding box of~$A$ to obtain the clusters~$\{C_j\}$. In particular, we obtain~$C_1 = \{v_3\}$, $C_2 = \{\}$, and $C_3 = \{v_6\}$. In general, computing the clusters~$C_j$ explicitly is too expensive, so instead we represent these sets implicitly. In particular, the implicit representations are $C_1 = (v_3,A)$, $C_2 = (v_4,A)$, and $C_3 = (v_8,A)$. This completes the example. 

\begin{figure}[ht]
    \centering
    \includegraphics[width=\textwidth]{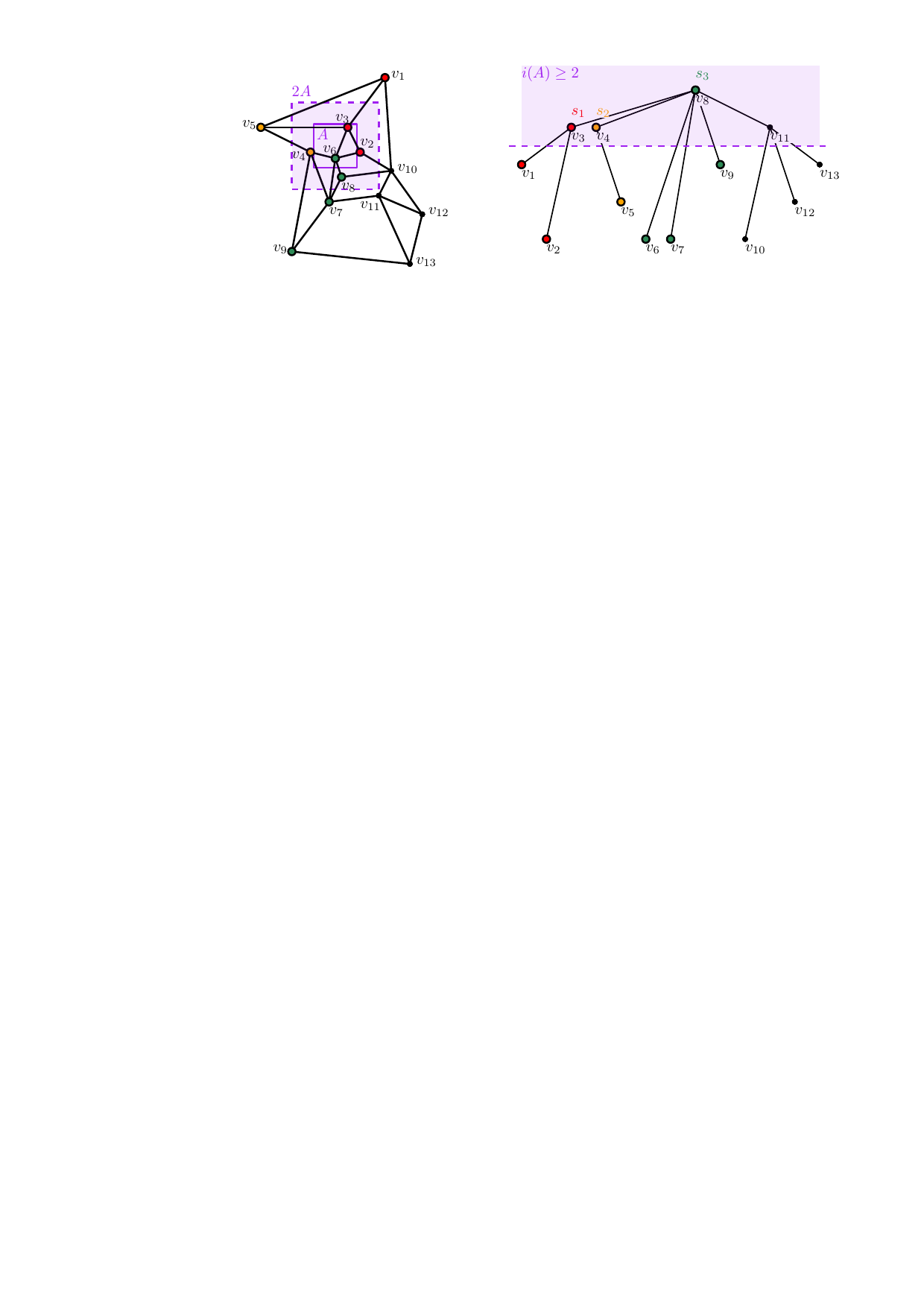}
    \caption{Given a quadtree cell~$A$, we query a three dimensional range searching data structure to obtain vertices~$\{s_j\}$ that lie in~$2A$, and have a level $\geq i(A)$. The clusters $\{C_j\}$ are represented implicitly by $(s_j,A)$.}
    \label{figure:net-tree4}
\end{figure}

Next, we prove the correctness of our Graph WSPD algorithm. The correctness of the Euclidean WSPD construction follows from Lemma~\ref{lemma:unbounded_spread_euclidean_wspd}, so it suffices to prove the correctness of the clustering. The clustering requirements of Theorem~\ref{theorem:unbounded_spread_wspd} are given in Lemma~\ref{lemma:clustering_strong}. The requirements are, first, that~$\{C_j\}$ forms a partition of $V \cap A$, second, that the diameters of~$\{C_j\}$ are at most $\ell(A)$, and third, that there are at most $O(\lambda \sqrt{\semisize(V \cap 6A)})$ clusters in $\{C_j\}$. 

First, we show that $\{C_j\}$ forms a partition of $V \cap A$. Let $r_{i(A)} = \Delta/2^{i(A) - 1} = \ell(A) / 4$. By Lemma~\ref{lemma:net_tree}, we get that querying $[x_{min}(2A),x_{max}(2A)] \times [y_{min}(2A),y_{max}(2A)] \times [i(A),\infty]$ returns the cluster centres of a~$2r_{i(A)}$-clustering that lie in $2A$. The $2r_{i(A)}$-clustering would be a partition of~$V$, except that we have removed all cluster centres that lie outside $2A$. However, since the cluster radius is $\ell(A)/2$, we cannot have removed any clusters that intersect $A$. Therefore, the $2r_{i(A)}$-clustering would still cover all vertices in $V \cap A$. The net-tree construction in Lemma~\ref{lemma:net_tree} states that the sets~$S_j$ form a $2r_i$-clustering, and therefore a partition of $V \cap A$, as required. 

Second, we show that the diameters of $\{C_j\}$ are at most $\ell(A)$. From Lemma~\ref{lemma:net_tree}, the radius of the cluster $S_j$ is at most $2r_i = \ell(A)/2$, so the diameter of $S_j$ is at most $\ell(A)$. Therefore, the diameter of $C_j$ is also at most $\ell(A)$, as required. 

Third, we show that there are at most $O(\lambda \sqrt{\semisize(V \cap 6A)})$ clusters in $\{C_j\}$. Here, we refer to Lemma~\ref{lemma:clustering_weak}. To show the bound on the number of clusters, we only require the length bound of Lemma~\ref{lemma:low_density_implies_packed_strong} (or Fact~\ref{fact:low_density_implies_packed_weak}) and that the distance between clusters is at least $\ell(A)/c_1$ for some constant $c_1$. The lower bound on the distance between clusters comes from the packing property in Lemma~\ref{lemma:net_tree}. Therefore, the clustering in our Graph WSPD algorithm satisfies the requirements of Lemma~\ref{lemma:clustering_strong}. The correctness of the remainder of the algorithm follows from Lemma~\ref{lemma:clustering_strong} and Theorem~\ref{theorem:unbounded_spread_wspd}, and we have completed the proof of correctness.

Finally, we analyse the running time of our Graph WSPD algorithm. Constructing the Euclidean WSPD using Lemma~\ref{lemma:euclidean_wspd_construction} takes~$O(n \log n + n \varepsilon^{-2})$ time. Constructing the compressed net-tree takes~$O(n \lambda \log^2 n)$ expected time by Lemma~\ref{lemma:net_tree}. Constructing the three-dimensional orthogonal range searching data structure takes~$O(n\log^2 n)$ time. Using fractional cascading~\cite{DBLP:books/lib/BergCKO08}, each orthogonal range query takes $O(\log^2 n + k)$ time, where $k$ is the size of the output. When querying for quadtree node~$A$, Lemma~\ref{lemma:clustering_strong} implies that the number of clusters is $k = O(\lambda \sqrt{\semisize(V \cap 6A)})$. Therefore, performing all orthogonal range queries takes $O(n \log^2 n + \lambda \sum (\sqrt{\semisize(V \cap 6A_i)} + \sqrt{\semisize(V \cap 6B_i)}))$ time in total, which is at most~$O(n \lambda^2 \varepsilon^{-4} \log^2 n)$. The overall running time is $O(n \lambda^2 \varepsilon^{-4} \log^2 n)$ deterministic plus $O(n \lambda \log^2 n)$ expected, which combines to the stated running time. 
\end{proof}

\section{Membership and distance oracle}
\label{section:wspd_oracle}

Similar to classic WSPD constructions, it is too expensive to enumerate the vertex pairs contained in a Graph WSPD pair, as there are $\Omega(n^2)$ vertex pairs in total to enumerate. In this section, we provide the next best thing; a membership oracle that, given any vertex pair, answers in $O(1)$ time the Graph WSPD that the vertex pair belongs to. In Section~\ref{subsection:membership_oracle}, we construct the membership oracle, which has near-linear size. In Section~\ref{subsection:approximate_distance_oracle}, we use the membership oracle to construct an approximate distance oracle, which also has near-linear size. Note that for queries, we require the word RAM model of computation.

\subsection{Membership oracle}
\label{subsection:membership_oracle}

Given a pair of vertices $a,b \in V$, a membership oracle returns the Graph WSPD pair $(C_{j}, D_{k})$ satisfying $a \in C_{j}$ and $b \in D_{k}$. Recall from our construction in Section~\ref{subsection:putting_it_together} that $C_{j}$ is represented implicitly by $(s_j,A)$, where~$s_j$ is a net-tree node and~$A$ is a quadtree node, and $D_k$ is represented implicitly by $(t_k,B)$ where $t_k$ is a net-tree node and~$B$ is a quadtree node. Let~$S_j$ be the set of descendants with $(>i)$-level so that the path from~$s_j$ to the descendant passes through no other nodes in the $(\leq i)$-level. Define $T_k$ similarly. Recall from Section~\ref{subsection:putting_it_together} that $a \in C_j \iff a \in A \cap S_j$, and $b \in D_k \iff b \in B \cap T_k$. Our goal is to compute~$(s_j,A)$ and~$(t_k,B)$. 

We start by computing the quadtree nodes $A$ and $B$. Recall that~$(A,B)$ is a Euclidean WSPD pair. Moreover, $(A,B)$ is the unique Euclidean WSPD pair that contains~$(a,b)$. Recall from the proof of Lemma~\ref{lemma:unbounded_spread_euclidean_wspd}a that all Euclidean WSPD pairs are congruent and $(1/\varepsilon)$-maximal. We use the following corollary of Lemma~\ref{lemma:quadtree_ancestor} to compute the unique pair of congruent and $(1/\varepsilon)$-maximal quadtree nodes~$A$ and~$B$ that are ancestors of~$a$ and $b$ respectively.

\begin{corollary}
\label{corollary:congruent_and_maximal}
Given a quadtree and a pair of vertices $a,b \in V$, one can compute in $O(1)$ time a pair of congruent and $(1/\varepsilon)$-maximal quadtree nodes $A$ and $B$ so that $a \in A$ and $b \in B$. 
\end{corollary}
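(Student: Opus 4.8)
The plan is to reduce to Lemma~\ref{lemma:quadtree_ancestor}. Given distinct vertices $a,b\in V$, I would first manufacture a pair of $(2/\varepsilon)$-well-separated grid cells, one containing $a$ and one containing $b$, and then hand this pair to Lemma~\ref{lemma:quadtree_ancestor}, which in $O(1)$ time returns congruent and $(1/\varepsilon)$-maximal quadtree nodes $A$ and $B$ that are ancestors of those cells; since $A,B$ are ancestors of cells containing $a$ and $b$, we still have $a\in A$ and $b\in B$, as required.

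To build the $(2/\varepsilon)$-well-separated pair, compute $d_2(a,b)$ (a single arithmetic operation in the word RAM model of Definition~\ref{definition:word_ram}) and set $\ell = 2^{\lfloor \log_2(\varepsilon\, d_2(a,b)/(6\sqrt2))\rfloor}$, which takes $O(1)$ time using the model's logarithm and floor operations. Let $A_i$ be the uncompressed quadtree cell of side length $\ell$ containing $a$ — obtained in $O(1)$ time by scaling and flooring the coordinates of $a$ — and let $B_i$ be the cell of side length $\ell$ containing $b$. Then $\diam_2(A_i)=\diam_2(B_i)=\sqrt2\,\ell \le \tfrac{\varepsilon}{6}\,d_2(a,b)$, so $d_2(A_i,B_i) \ge d_2(a,b) - 2\sqrt2\,\ell \ge (1-\tfrac{\varepsilon}{3})\,d_2(a,b) \ge \tfrac23\, d_2(a,b) \ge (2/\varepsilon)\sqrt2\,\ell$, using $0<\varepsilon<1$ for the penultimate step and $\ell \le \tfrac{\varepsilon}{6\sqrt2}\,d_2(a,b)$ for the last. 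Hence $(A_i,B_i)$ is $(2/\varepsilon)$-well-separated, and Lemma~\ref{lemma:quadtree_ancestor} applies, giving the $O(1)$ running time.

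Two points need care, and the second is the one I would state most carefully. First, Lemma~\ref{lemma:quadtree_ancestor}'s existence guarantee was phrased for WSPD pairs, so I would note that a congruent $(1/\varepsilon)$-maximal ancestor pair exists, and is in fact \emph{unique}, for an arbitrary $(2/\varepsilon)$-well-separated grid-cell pair too: walking up the quadtree, the predicate ``the level-$k$ cell of $a$ and the level-$k$ cell of $b$ are $(1/\varepsilon)$-well-separated'' compares $(1/\varepsilon)\sqrt2\cdot 2^{k}$, which is strictly increasing in $k$, with the distance between those two cells, which is non-increasing in $k$ since enlarging cells can only bring them closer; so the predicate holds for all sufficiently small levels, fails at the root, and switches exactly once, and the last level at which it holds is the unique $(1/\varepsilon)$-maximal one, which is precisely what Lemma~\ref{lemma:quadtree_ancestor} returns as ``the largest congruent $(1/\varepsilon)$-well-separated ancestor pair''. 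Second, since every pair in the Euclidean WSPD of Lemma~\ref{lemma:unbounded_spread_euclidean_wspd} is itself a congruent $(1/\varepsilon)$-maximal pair of grid cells (this is exactly how the pairs $(A',B')$ were defined), the unique WSPD pair whose point sets contain $a$ and $b$ must coincide with the unique congruent $(1/\varepsilon)$-maximal ancestor pair of $a$ and $b$, namely the $(A,B)$ just computed; this is what licenses the ``unique Euclidean WSPD pair'' phrasing used immediately after the corollary. Everything else is $O(1)$ arithmetic, so the claimed bound follows. The main obstacle is therefore not computational but conceptual: getting the monotonicity-and-uniqueness argument stated cleanly so that the output of Lemma~\ref{lemma:quadtree_ancestor} is identified with the sought Euclidean WSPD pair.
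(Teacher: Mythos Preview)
Your proposal is correct and follows essentially the same approach as the paper: both arguments reduce to checking the $O(1)$ candidate grid levels with side length in $\Theta(\varepsilon\,d_2(a,b))$ and returning the largest congruent $(1/\varepsilon)$-well-separated one, with the paper doing this by directly substituting $(a,b)$ for $(A_i,B_i)$ in Lemma~\ref{lemma:quadtree_ancestor}'s proof while you take the minor detour of first manufacturing a $(2/\varepsilon)$-well-separated cell pair. Your explicit monotonicity/uniqueness argument is a useful addition that the paper relies on but does not spell out.
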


\begin{proof}
Same as Lemma~\ref{lemma:quadtree_ancestor}, except that we replace the $(2/\varepsilon)$-well-separated pair $(A_i,B_i)$ with the pair of vertices~$(a,b)$. In particular, the side lengths of $A$ and $B$ must be in the range~$[c_1\varepsilon d_2(a,b), c_2\varepsilon \cdot d_2(a,b)]$ for constants $c_1$ and $c_2$, so there are only $O(1)$ ancestors of~$a$ and~$b$ to check. Constructing these ancestors takes~$O(1)$ time, since the floor function can be computed in~$O(1)$ time. Checking if a pair of ancestors is~$(1/\varepsilon)$-well-separated takes~$O(1)$ time. Finally, we return the largest of these pairs of congruent ancestors that are~$(1/\varepsilon)$-well-separated.
\end{proof}

This completes the query for quadtree nodes $A$ and $B$. It remains to query the net-tree nodes~$s_j$ and~$t_k$. Recall from our algorithm in Section~\ref{subsection:putting_it_together} that~$s_j$ and~$t_k$ are on the $(\leq i(A))$-level of the net-tree, where $i(A) = \log_2(\Delta / \ell(A)) - 1$, $\ell(A)$ is the side length of~$A$, and~$\Delta$ is the side length of the root node in the quadtree. Furthermore, $a \in C_j$ implies that $a \in S_j$, where $S_j$ is the set of descendants of~$s_j$ so that the path from~$s_j$ to the descendant passes through no other nodes in the $(\leq i(A))$-level. In other words,~$s_j$ is the first $(\leq i(A))$-level ancestor on the path from $a$ to the root. Similarly, $t_k$ is the first ($\leq i(B))$-level ancestor on the path from $b$ to the root.

In the case where the spread~$\Phi$ is bounded, our uncompressed net-tree has size~$O(n \log \Phi)$, see Figure~\ref{figure:net-tree1}. Since the net-tree is uncompressed, the $(\leq i(A))$-level ancestor of any leaf node is simply the $i(A)$-level ancestor, which is also exactly $i(A)$ hops from the root. Therefore, we can use a standard level ancestor query~\cite{DBLP:journals/tcs/BenderF04,DBLP:journals/jcss/BerkmanV94,DBLP:conf/wads/Dietz91} to obtain the net-tree ancestors~$s_j$ and~$t_k$.

\begin{fact}[Level ancestor query~\cite{DBLP:journals/tcs/BenderF04,DBLP:journals/jcss/BerkmanV94,DBLP:conf/wads/Dietz91}]
\label{fact:level_ancestor}
Given a rooted tree $T$ with~$n$ nodes, one can preprocess the tree in~$O(n)$ time and space, so that given a query node~$u$ and a query depth~$d$, one can return in $O(1)$ query time the ancestor of~$u$ with depth~$d$. The depth of a node is the number of hops from the root to the node.
\end{fact}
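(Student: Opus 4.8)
The plan is to recount the classical solution of Bender and Farach-Colton, which answers level-ancestor queries in $O(1)$ time after $O(n)$-time, $O(n)$-space preprocessing. The strategy has two layers: first build a simple structure — the \emph{ladder decomposition} together with \emph{jump pointers} — that already achieves $O(1)$ query time but $O(n\log n)$ space; then peel off the logarithmic factor with a micro-/macro-tree decomposition in the spirit of the Four Russians technique.

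First I would build the suboptimal $O(n\log n)$-space, $O(1)$-query structure. In the \emph{long-path decomposition}, each internal node designates as its \emph{long child} the child whose subtree has the largest depth, and the long paths are the maximal chains of long-child edges; every node lies on exactly one long path. For a long path $P$ of length $h$, the \emph{ladder} of $P$ is the array consisting of $P$ together with the (up to) $h$ ancestors of the top of $P$; the total length of all ladders is at most $2n$, so ladders cost $O(n)$. The crucial invariant is that the long path through a node $w$ continues below $w$ for at least $\operatorname{diam}$-worth — precisely, for at least $\operatorname{subtreedepth}(w)$ edges — so $w$'s ladder reaches above $w$ by at least $\min(\operatorname{subtreedepth}(w),\operatorname{depth}(w))$ edges. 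Additionally, store for every node $v$ and every $i$ a \emph{jump pointer} to the ancestor of $v$ at distance $2^i$; this takes $O(n\log n)$ time and space. To answer a query $(u,d)$, set $k=\operatorname{depth}(u)-d$ and $j=\lfloor\log_2 k\rfloor$, follow the jump pointer of $u$ to its ancestor $w$ at distance $2^j$, and then read off the ancestor of $w$ at distance $k-2^j$ directly from $w$'s ladder array. Correctness: the subtree of $w$ contains $u$ at relative depth $2^j$, so $\operatorname{subtreedepth}(w)\ge 2^j$, hence $w$'s ladder reaches up by at least $\min(2^j,\operatorname{depth}(w))$, which is $\ge k-2^j$ (using $k<2^{j+1}$ and $\operatorname{depth}(w)=\operatorname{depth}(u)-2^j\ge k-2^j$). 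The whole query is $O(1)$.

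Next I would remove the logarithmic factor. Greedily carve the bottom of $T$ into \emph{microtrees}, each a connected subtree with at most $\tfrac14\log_2 n$ nodes, and contract each microtree to a single node to obtain a \emph{macrotree} with $O(n/\log n)$ nodes. Running the $O(n\log n)$-time construction above on the macrotree costs $O\!\big((n/\log n)\log(n/\log n)\big)=O(n)$. For queries that remain inside a microtree, precompute a lookup table indexed by (microtree topology, query node, query depth): a microtree with $t$ nodes is encoded by $O(t)=O(\log n)$ bits, so there are at most $2^{O(\log n/4)}=n^{O(1/4)}$ topologies, and with at most $\tfrac14\log n$ choices each for the node and the target depth the table has $o(n)$ entries and is filled in $o(n)$ time. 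A query $(u,d)$ is then dispatched by comparing $d$ with the depth of $u$'s microtree root: if the answer lies within $u$'s microtree, read it from the table; otherwise jump to the macrotree node that is the root of $u$'s microtree and invoke the macrotree level-ancestor structure. Both branches are $O(1)$; total preprocessing and space are $O(n)$.

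The main obstacle is the bookkeeping that makes every query genuinely constant-time: one must check the ladder-length invariant (so that after the single jump-pointer hop the target ancestor is guaranteed to sit inside the landed node's ladder), and one must confirm that the Four Russians table is simultaneously sublinear in size, covers all within-microtree cases, and routes the remaining cases correctly into the macrotree. Both steps are standard but delicate; the decompositions themselves and the $O(n)$ accounting are routine.
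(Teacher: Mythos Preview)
The paper does not prove this statement; it is quoted as a cited fact from the literature, so there is nothing in the paper to compare against. Your sketch follows the Bender--Farach-Colton argument, and the first layer (long-path decomposition, ladders, full jump pointers, and the ladder-reach invariant guaranteeing that one jump plus one ladder lookup suffices) is correct.

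The Four Russians step, however, has a real gap. You assert that carving off bottom microtrees of size at most $\tfrac14\log n$ and contracting them yields a macrotree with $O(n/\log n)$ \emph{nodes}, and you charge the $O(N\log N)$ construction to that count. This is false under either reading of ``greedily carve the bottom''. If you carve only one layer of maximal small subtrees, then on a path of length $n$ you remove a single microtree of size $\tfrac14\log n$ and the macrotree is still a path of length $\Theta(n)$. If instead you iterate the carving until the whole tree is covered, you can indeed get $O(n/\log n)$ contracted macronodes, but depths in the contracted macrotree no longer correspond to depths in $T$; a macrotree level-ancestor query at some macrodepth does not identify the original ancestor at depth $d$, and recovering that correspondence is itself a weighted-ancestor problem.

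The correct statement is that after one round of carving the macrotree has $O(n/\log n)$ \emph{leaves} (each macroleaf roots a subtree of size at least $\tfrac14\log n$, and these subtrees are disjoint). The fix is to keep the ladder decomposition on the whole tree but store jump pointers only at macrotree leaves: for a query from any macrotree node $v$, redirect to a fixed precomputed leaf descendant $L(v)$ and use $L(v)$'s jump pointer followed by one ladder lookup. This gives $O((n/\log n)\cdot\log n)=O(n)$ jump pointers; the within-microtree lookup table is exactly as you describe.
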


In the case where the spread~$\Phi$ is unbounded, our compressed net-tree has size~$O(n)$, see Figure~\ref{figure:net-tree2}. We run into two obstacles that prevent us from using the level ancestor query in Fact~\ref{fact:level_ancestor} to query~$s_j$ and~$t_k$. The first obstacle is that we can no longer guarantee that a copy of the~$(\leq i(A))$-level ancestor will always be on the~$i(A)$-level. The second obstacle is that we no longer have that the parent of any node on the $i$-level is on the~$(i-1)$-level, and therefore, we can no longer guarantee that a node on the $i(A)$-level is exactly~$i(A)$ hops from the root. 

One approach to avoid both obstacles is to instead use a weighted ancestor query to compute~$s_j$ and~$t_k$. The weighted ancestor query returns the first $(\leq i(A))$-level ancestor on the path from~$a$ to the root. However, the downside of using the weighted level ancestor query is that it requires additional preprocessing time, see Fact~\ref{fact:weighted_ancestor}. 

\begin{fact}[Weighted ancestor query~\cite{DBLP:conf/cpm/FarachM96}]
\label{fact:weighted_ancestor}
Given a rooted tree~$T$ with~$n$ nodes, each node is given a weight in $\{1,\ldots,n\}$, and the weight of a child is greater than the weight of its parent. One can preprocess $T$ in $O(n^{1+\delta})$ time and space, so that given a query node~$u$ and weight~$w$, one can return in $O(1/\delta)$ time the first ancestor on the path from $u$ to the root with weight~$\leq w$. 
\end{fact}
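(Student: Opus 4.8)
The plan is to recast the query as a predecessor search and then exploit the fact that the weights live in the small universe $\{1,\dots,n\}$. Since the weight of a child strictly exceeds that of its parent, the weights along any root-to-node path $\pi(u)$ occur in strictly increasing order; hence the first ancestor of $u$ (reading from $u$ upwards) with weight $\le w$ is precisely the predecessor of $w$ among the weights of the ancestors of $u$. So it suffices to build a data structure that, given $u$ and $w$, returns this predecessor together with the node realising it.

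For the predecessor step I would use a $q$-ary search structure with branching parameter $q=\lceil n^{\delta}\rceil$. As a warm-up, on a single downward path with sorted weights one builds a $q$-ary trie over $\{1,\dots,n\}$: it has height $\lceil\log_q n\rceil=O(1/\delta)$, and at every internal node we store an explicit table of $q$ entries that, given the appropriate base-$q$ digit of $w$, reports in $O(1)$ which child to descend into, or, when that child is empty, which non-empty child carries the predecessor. A root-to-leaf descent then takes $O(1/\delta)$ time; with path compression the trie has $O(n)$ nodes, each carrying an $O(n^{\delta})$ table, for $O(n^{1+\delta})$ space. To move from a single path to the whole tree one first decomposes $T$ into heavy paths and combines the per-heavy-path answers along $\pi(u)$.

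The naive combination over the $O(\log n)$ heavy paths meeting $\pi(u)$ would cost $O((\log n)/\delta)$, and the \emph{hard part} is shaving the extra $O(\log n)$ so that the query is $O(1/\delta)$ as claimed. The way I would do this is to build one global $q$-ary structure indexed by the weight interval rather than one per heavy path: split $\{1,\dots,n\}$ into $q$ subintervals and, for each vertex $u$, precompute for each of the $O(n^{\delta})$ top-level endpoints $b$ the deepest ancestor of $u$ of weight $\le b$ (this is the $O(n^{1+\delta})$ term). A query first jumps in $O(1)$ to the subinterval $[a,a')$ containing $w$ and to the deepest ancestor $v$ of $u$ with weight $<a$; the portion of $\pi(u)$ that is still relevant is then a downward path all of whose weights lie in $[a,a')$, an interval of size $n^{1-\delta}$, so the recursion continues on a smaller universe and bottoms out after $1/\delta$ levels, the space being dominated by the top level. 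This is exactly weighted ancestor with polynomially bounded integer weights, so in the paper I would simply invoke the construction of Farach-Colton and Muthukrishnan; the point I expect to be most delicate is engineering each of the $1/\delta$ recursion levels to run in $O(1)$ time without a logarithmic penalty in either time or space.
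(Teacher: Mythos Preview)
The paper does not prove this statement at all: it is stated as a \emph{Fact} with a citation to Farach and Muthukrishnan and is used as a black box. There is therefore no ``paper's own proof'' to compare against; the paper immediately moves on to discuss why the $O(n^{1+\delta})$ preprocessing is undesirable and how to sidestep it via the special case in Lemma~\ref{lemma:weighted_ancestor_special_case}.

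Your sketch is a reasonable outline of how such a result is obtained, and you yourself note that in the end you would simply invoke the cited construction. The reduction to predecessor search along the root path is correct, and the $q$-ary branching with $q=\lceil n^{\delta}\rceil$ is indeed the mechanism behind the $O(1/\delta)$ query time. The part you flag as delicate---collapsing the $O(\log n)$ heavy-path factor---is genuinely the nontrivial step, and your proposed fix (precomputing, for every vertex and every top-level bucket boundary, the deepest ancestor below that boundary, then recursing on a universe of size $n^{1-\delta}$) is along the right lines but is not fully worked out: you have not argued that the relevant ancestors at each recursion level can themselves be tabulated within the $O(n^{1+\delta})$ budget, nor that the ``downward path'' you identify after the first jump is actually a single path rather than a subtree. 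Since the paper treats this as a citation, leaving it at ``invoke Farach--Muthukrishnan'' is exactly what the paper does.
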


Ideally, we would like to avoid the~$O(n^{1+\delta})$ preprocessing time and space solution in Fact~\ref{fact:weighted_ancestor}, and instead obtain a linear time and space solution with constant query time. Unfortunately, the weighted ancestor query is a generalisation of the predecessor query, which prevents any linear space solution with constant query time.

\begin{fact}[Predecessor query~\cite{DBLP:conf/esa/GawrychowskiLN14,DBLP:conf/stoc/PatrascuT06}]
\label{fact:predecessor}
Given a set $S$ of~$n$ integers in $\{1,\ldots,U\}$ where~$U \geq n$, any data structure that preprocesses the set $S$ in $O(n \polylog U)$ space requires at least $\Omega(\log \log U)$ time to answer predecessor queries. A predecessor query is, given an integer $x$, to find the largest integer in $S$ that is $\leq x$.
\end{fact}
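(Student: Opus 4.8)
The plan is to prove the bound in the cell-probe model with word size $w=\Theta(\log U)$ (a word must hold an element of $[U]$, and $\log n+\log s = O(\log U)$ since $n\le U$ and $s=n\cdot\polylog U$), since the word RAM is no stronger. So the object to analyse is a static structure occupying $s=n\cdot\polylog U$ cells that answers a predecessor query with $t$ cell probes, and the goal is $t=\Omega(\log\log U)$.

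\textbf{Reducing to asymmetric communication.} First I would pass from cell probes to a two-party protocol between Alice, who holds the query $x$, and Bob, who holds $S$ and hence the whole table of $s$ cells. Simulating a query: Alice sends the address of the next cell to probe ($\lceil\log s\rceil$ bits), Bob returns its contents ($w$ bits), repeated $t$ times, after which Alice announces the answer. This yields a $2t$-round protocol in which Alice's messages total $O(t\log s)=O(t(\log n+\log\log U))$ bits and Bob's total $O(tw)=O(t\log U)$ bits. It therefore suffices to show that any bounded-error protocol for (the decision version of) predecessor on $[U]$ with $n$ elements needs $\Omega(\log\log U)$ rounds once the per-round message lengths are this short.

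\textbf{Recursive structure and round elimination.} The key is that the decision version --- \emph{colored predecessor}, where each element of $S$ carries a colour and the query asks for the colour of the predecessor of $x$ --- admits a van-Emde-Boas-style self-reduction: splitting every element of $[2^\ell]$ into its top $\lceil\ell/2\rceil$ and bottom $\lfloor\ell/2\rfloor$ bits, a colored-predecessor instance on $[2^\ell]$ decomposes into $\approx 2^{\ell/2}$ independent colored-predecessor instances on $[2^{\ell/2}]$, one per populated top-half, selected by a coordinate that is a function of $x$ (known to Alice, not to Bob). This is exactly the $m$-fold ``indexed'' form to which the round-elimination lemma of Miltersen--Nisan--Safra--Wigderson and Sen--Venkatesh applies: an $r$-round protocol for the $m$-fold problem whose first message (Alice's, say) has $a$ bits and error $\delta$ yields an $(r-1)$-round protocol for a single instance with error $\delta+O(\sqrt{a/m})$, and a symmetric statement strips a leading Bob message of $b$ bits at cost $O(\sqrt{b/m})$ under the dual padding. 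Applying this alternately to remove Alice's and Bob's leading messages collapses a colored-predecessor instance on $[2^\ell]$ to one on $[2^{\ell/2}]$, decreases the round count by one, and inflates the error by $O(\sqrt{(\log s)/2^{\ell/2}})$.

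\textbf{Iterating and counting.} Starting from $\ell=\log U$ and iterating, after $j$ steps the universe has bit-length $\approx (\log U)/2^{j}$, so after $\Theta(\log\log U)$ steps the universe is of constant size, where colored predecessor cannot be solved in zero rounds with bounded error. The accumulated error $\sum_j O\bigl(\sqrt{(\log s)/2^{(\log U)/2^{j}}}\bigr)$ is $o(1)$ over the range of $j$ for which the branching factor $2^{(\log U)/2^{j}}$ dominates the message length $\log s=O(\log n+\log\log U)$. Hence $t=\Omega(\log\log U)$ rounds are needed, so the cell-probe and word-RAM query time is $\Omega(\log\log U)$; combined with the $y$-fast trie, which attains $O(\log\log U)$ with only $O(n)$ space, this shows $\Theta(\log\log U)$ is the right answer in the whole regime $O(n\,\polylog U)$, giving the Fact. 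The main obstacle is the error bookkeeping across $\Theta(\log\log U)$ elimination steps together with the requirement that \emph{both} message lengths stay small relative to the branching factor at every step: the crude halving recursion above only certifies $\Omega(\log\log U/\log\log\log U)$ rounds (the Beame--Fich bound), since once $(\log U)/2^{j}$ drops to $\Theta(\log\log U)$ the indexed structure no longer absorbs an $O(\log\log U)$-bit message. Removing this $\log\log\log U$ slack to get the tight $\Omega(\log\log U)$ is exactly the Pătrașcu--Thorup refinement, which replaces the plain halving by a more careful recursion (and a tailored direct cell-probe argument) keeping the branching factor large relative to the message lengths throughout; only the black-box bound is needed here, so either argument already rules out an $O(1)$-time, near-linear-space predecessor structure.
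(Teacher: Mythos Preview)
The paper does not prove this statement at all: it is stated as a \emph{Fact} with citations to~\cite{DBLP:conf/esa/GawrychowskiLN14,DBLP:conf/stoc/PatrascuT06} and used only as a black box to explain why one cannot hope for a linear-space, constant-time weighted ancestor structure. So there is no ``paper's own proof'' to compare against; the paper simply imports the result from the literature.

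Your sketch is a faithful outline of the standard cell-probe lower bound for predecessor: the reduction to asymmetric communication, the colored-predecessor self-reduction via top/bottom bit splitting, and round elimination are exactly the ingredients of the Miltersen--Nisan--Safra--Wigderson / Sen--Venkatesh / Beame--Fich line, and you correctly identify that the naive halving recursion only yields $\Omega(\log\log U/\log\log\log U)$ while the tight $\Omega(\log\log U)$ requires the P\u{a}tra\c{s}cu--Thorup refinement. For the paper's purposes even the weaker Beame--Fich bound (or indeed any $\omega(1)$ bound) would suffice, since the only use of Fact~\ref{fact:predecessor} is to rule out an $O(1)$-time solution and motivate the special case in Lemma~\ref{lemma:weighted_ancestor_special_case}. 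Your write-up is more than the paper itself provides, and is appropriate as background; just be aware that a fully rigorous proof of the tight bound is substantially more delicate than your last paragraph lets on, and that in this context you are entitled to cite it rather than reprove it.
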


Surprisingly, the predecessor query is equivalent to the weighted ancestor query. An efficient data structure for the predecessor query implies an efficient data structure for the weighted ancestor query. After linear preprocessing, a weighted ancestor query can be answered using a predecessor query plus constant query time. Next, we state the reduction together with a sketch of its proof. Then, we will modify this proof so that it can be applied to our special case to obtain constant query time. 

\begin{fact}[Predecessor query $\to$ weighted ancestor query~\cite{DBLP:conf/esa/GawrychowskiLN14,DBLP:conf/soda/KopelowitzL07}]
\label{fact:predecessor_to_weighted_ancestor}
Given a rooted tree~$T$ with~$n$ nodes, suppose that each node is given a weight in~$\{1,\ldots,U\}$, so that $U \geq n$ and the weight of a child is greater than the weight of its parent. Then, one can construct a data structure on~$T$ in~$\predecessor_p(n,U) + O(n)$ time and space, so that weighted ancestor queries on~$T$ can be answered in $\predecessor_q(n,U)+O(1)$ time. Here, $\predecessor_p(n,U)$ and $\predecessor_q(n,U)$ denote the preprocessing and query times for a predecessor data structure over a set of~$n$ integers in~$\{1,\ldots,U\}$. 
\end{fact}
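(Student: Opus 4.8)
The plan is to prove Fact~\ref{fact:predecessor_to_weighted_ancestor} via a heavy-path decomposition of $T$, which turns a weighted ancestor query into (essentially) a single predecessor query. First I would decompose $T$ into heavy paths following Sleator and Tarjan: every node lies on exactly one heavy path, and the path from the root to any node crosses only $O(\log n)$ of them. Since the weight of a child strictly exceeds that of its parent, the nodes of each heavy path, read from top to bottom, carry strictly increasing weights; with each heavy path I would store this sorted weight array, and with each node I would record its heavy path, the top of that heavy path, and the light edge leaving it upward.

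Next I would pin down where a query $(u,w)$ is answered. Let $P_1,\dots,P_k$ (with $k=O(\log n)$) be the heavy paths met by the root-to-$u$ path, ordered downward from the root, with top nodes $t_1,\dots,t_k$. As all the $t_j$ lie on the root-to-$u$ path, their weights strictly increase, and a short argument shows the sought ancestor lies on exactly the path $P_{j^*}$ with $j^*$ the largest index satisfying $w(t_{j^*})\le w$: every node strictly above $P_{j^*}$ has weight at most $w(t_{j^*})\le w$ but is shallower, while every node from $P_{j^*+1}$ downward has weight at least $w(t_{j^*+1})>w$ (the case $j^*=k$ being the trivial one where the answer sits on $u$'s own heavy path). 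Within $P_{j^*}$, the answer is then the deepest node that is simultaneously an ancestor of $u$ and has weight at most $w$, which is exactly a predecessor query on the sorted weight array restricted to the prefix of $P_{j^*}$ that is ancestral to $u$. Since the weights along the whole root-to-$u$ path are themselves strictly increasing, this two-step procedure — locate $P_{j^*}$, then search inside it — is really a single predecessor query on the virtual sorted sequence obtained by concatenating, in depth order, the appropriate prefixes of the sorted arrays of $P_1,\dots,P_k$.

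Finally I would realise that virtual predecessor query with only $O(n)$ extra space: the sorted arrays of the heavy paths, threaded together by fractional-cascading-style pointers along the heavy-path hierarchy (from each $P_j$'s array to its parent $P_{j-1}$'s array), so that one predecessor query at the topmost array cascades down to the correct node. All of this is built in $O(n)$ time and space plus $\predecessor_p(n,U)$ for the single predecessor structure, and a query costs $\predecessor_q(n,U)+O(1)$. I expect the main obstacle to be precisely this last step: the direct heavy-path walk spends one predecessor query per crossed heavy path, so the crux is to exploit the monotonicity of the heavy-path-top weights, and of the weights along the whole root path, to collapse those $O(\log n)$ searches into a single predecessor query with only constant additional overhead — which is also the only place where the predecessor lower bound of Fact~\ref{fact:predecessor} is paid, and exactly the step the paper later revisits to obtain $O(1)$ query time in its special case.
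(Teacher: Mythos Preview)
Your overall architecture matches the paper's: heavy-light decomposition, identify the unique heavy path $P_{j^*}$ carrying the answer, then do one predecessor search along that path. Your characterisation of $j^*$ via the monotone sequence of heavy-path tops is correct.

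The gap is in the step you yourself flag as the main obstacle. Your proposed ``fractional-cascading-style pointers'' from each heavy path's array to its parent heavy path's array do not give $O(1)$ total overhead: fractional cascading buys $O(1)$ \emph{per level} after the initial search, but the heavy-path hierarchy has depth $\Theta(\log n)$, so cascading from the root heavy path down to $P_{j^*}$ costs $\Theta(\log n)$, not $O(1)$. Equivalently, locating $j^*$ is itself a predecessor problem on the sequence $w(t_1)<\cdots<w(t_k)$, but that sequence depends on $u$, so you cannot simply preprocess one global structure for it; the monotonicity you invoke is necessary but not sufficient to collapse the $O(\log n)$ levels.

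The paper does not solve this step from scratch either: it invokes Lemma~11 of Gawrychowski, Lewenstein and Nicholson, which after $O(n)$ preprocessing returns in $O(1)$ time the heavy path containing the weighted ancestor. That lemma is the non-trivial ingredient your sketch is missing; once it is available, the remaining predecessor search on the located heavy path gives the claimed $\predecessor_q(n,U)+O(1)$ bound.
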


\begin{proof}[Proof (Sketch)]
    We follow the proof sketch in Section~5 of~\cite{DBLP:conf/cpm/BelazzouguiKPR21}. Perform a heavy-light decomposition~\cite{DBLP:conf/stoc/SleatorT81} of~$T$. A heavy-light decomposition is a decomposition of the tree into heavy paths. An edge~$(u,v)$ connecting a child~$v$ to its parent~$u$ is heavy if $\size(v) > \size(u)/2$, where $\size(w)$ denotes the number of nodes in the subtree rooted at~$w$; a path is heavy if all edges on the path are heavy. Lemma~11 of~\cite{DBLP:conf/esa/GawrychowskiLN14} states that one can construct a data structure in~$O(n)$ time and space, so that given a query node~$u$ and a weight~$w$, one can return in~$O(1)$ time the heavy path that contains the first ancestor of~$u$ with weight~$\leq w$. At preprocessing time, build a predecessor data structure on the set of heavy paths, and at query time, return the predecessor with weight~$\leq w$ along the heavy path. The preprocessing time is $O(n) + \predecessor_p(n,U)$, and the query time is~$O(1) + \predecessor_q(n,U)$, as required. Note that a similar result is provided in Theorem~7.1 of~\cite{DBLP:conf/soda/KopelowitzL07}, however, their result is weaker since a constant number of predecessor queries are required. 
\end{proof}

Motivated by Fact~\ref{fact:predecessor_to_weighted_ancestor}, we propose a special case of the weighted ancestor query that admits a linear space data structure with linear preprocessing time and constant query time. Our special case is: we either return a weighted ancestor of exactly weight~$w$, or we return that no such ancestor exists. In Lemma~\ref{lemma:weighted_ancestor_special_case}, we construct a data structure for our special case. It is straightforward to see that our special case is no longer a generalisation of a predecessor query, so the lower bound in Fact~\ref{fact:predecessor} no longer applies. 

\begin{lemma}
\label{lemma:weighted_ancestor_special_case}
Given a rooted tree~$T$ with~$n$ nodes, each node is given a weight in~$\{1,\ldots,n\}$, and the weight of a child is greater than the weight of its parent. One can preprocess~$T$ in $O(n)$ time and space, so that given a node~$u$ and a weight~$w$, one can return in~$O(1)$ time the ancestor of~$u$ with weight~$w$, or that no such ancestor exists, under the word ram model of computation.
\end{lemma}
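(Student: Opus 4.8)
\emph{The plan} is to adapt the proof of Fact~\ref{fact:predecessor_to_weighted_ancestor}, but to replace the predecessor data structure over heavy paths by a single static dictionary; this becomes possible precisely because we now ask for an ancestor of \emph{exactly} weight~$w$, not the largest weight~$\le w$. The key structural observation is that, since the weight of a child exceeds that of its parent, the weights along any root-to-$u$ path form a strictly increasing sequence; hence $u$ has at most one ancestor of weight~$w$, and if such an ancestor~$z$ exists then $z$ is in fact the \emph{deepest} ancestor of~$u$ of weight~$\le w$ (every ancestor of~$u$ strictly below~$z$ has weight $> w(z) = w$).

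For preprocessing I would compute an Euler tour of~$T$ to obtain in/out times, so that ancestry between two nodes can be tested in $O(1)$ time; perform a heavy-light decomposition~\cite{DBLP:conf/stoc/SleatorT81}; and build the data structure of Lemma~11 of~\cite{DBLP:conf/esa/GawrychowskiLN14} (exactly as in the proof of Fact~\ref{fact:predecessor_to_weighted_ancestor}), which in $O(n)$ time and space returns, given $(u,w)$, the heavy path~$P$ containing the deepest ancestor of~$u$ of weight~$\le w$. Finally I would build a static dictionary~$H$ keyed by pairs $(\text{heavy-path identifier},\text{weight})$: for each node~$x$, insert the entry $(\mathrm{hp}(x),w(x)) \mapsto x$, where $\mathrm{hp}(x)$ denotes the heavy path of~$x$. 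These $n$ keys are pairwise distinct, since any two nodes lying on a common heavy path are in an ancestor-descendant relation and hence have different weights; so $H$ has exactly $n$ entries and can be built in $O(n)$ time and space with perfect hashing, answering lookups in $O(1)$ time. The total preprocessing is $O(n)$ time and space.

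A query $(u,w)$ would proceed as follows: if $w$ is smaller than the weight of the root, report that no such ancestor exists; otherwise obtain the heavy path~$P$ from the Lemma~11 structure in $O(1)$ time and look up $(\mathrm{id}(P),w)$ in~$H$. If the key is absent, report that no such ancestor exists; if it returns a node~$x$, test via the Euler intervals whether~$x$ is an ancestor of~$u$, returning~$x$ if so and otherwise reporting none. For correctness, if~$u$ has an ancestor~$z$ of weight exactly~$w$, then by the observation above~$z$ is the deepest ancestor of~$u$ of weight~$\le w$, hence lies on~$P$, and since weights strictly increase along~$P$ it is the unique weight-$w$ node on~$P$, so $H$ returns $x=z$ and the ancestry test succeeds; conversely, any node returned is by construction an ancestor of~$u$ of weight exactly~$w$. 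Every step runs in $O(1)$ time in the word RAM model. \emph{The main obstacle} is the correctness reduction in this last paragraph: one must verify that the heavy path returned by the Lemma~11 structure (which is built for predecessor-style ``$\le w$'' queries) is also the heavy path on which an exact-weight-$w$ ancestor would have to lie — this is exactly what the strict monotonicity of weights along root-to-node paths guarantees. Everything else is bookkeeping, and the $\Omega(\log\log U)$ predecessor lower bound of Fact~\ref{fact:predecessor} is sidestepped because direct addressing on $(\mathrm{hp},\text{weight})$ pairs takes the place of the predecessor search inside a heavy path.
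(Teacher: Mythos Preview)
Your proposal is correct and follows essentially the same approach as the paper: invoke the reduction of Fact~\ref{fact:predecessor_to_weighted_ancestor} (heavy-light decomposition together with Lemma~11 of~\cite{DBLP:conf/esa/GawrychowskiLN14}) and replace the per-heavy-path predecessor structure by a direct lookup, which suffices because only an exact-weight match is required. Your write-up is more explicit than the paper's---you spell out the $(\mathrm{hp}(x),w(x))$ dictionary and add an Euler-tour ancestry check to rule out a spurious weight-$w$ node lying on~$P$ below the branch point---but the underlying idea is the same.
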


\begin{proof}
We use Fact~\ref{fact:predecessor_to_weighted_ancestor}, where our predecessor data structure is a lookup table. If there is an ancestor on the heavy path with weight exactly~$w$, we can return it from our lookup table, otherwise we can return that there is no such ancestor. Under the word RAM model of computation, we can implement a lookup table with~$O(n)$ preprocessing time and space and~$O(1)$ query time. Therefore, $\predecessor_p(n,U) = O(n)$ and $\predecessor_q(n,U) = O(1)$. By Fact~\ref{fact:predecessor_to_weighted_ancestor}, the overall preprocessing space and time is~$O(n)$ and the overall query time is~$O(1)$.
\end{proof}

Now, we are ready to prove the main theorem of this section. We will query the net-tree nodes~$s_j$ and~$t_k$ while avoiding the~$O(n^{1+\delta})$ preprocessing solution in Fact~\ref{fact:weighted_ancestor}. Recall that the two obstacles related to querying the net-tree nodes~$s_j$ and~$t_k$ are, first, the $(\leq i(A))$-level ancestor may not be on the~$i(A)$-level, and second, a node on the~$i(A)$ level may not be exactly~$i(A)$ hops from the root. We will resolve the first obstacle by constructing a semi-compressed net-tree. Our semi-compressed net-tree ensures that $s_j$ is on the $i(A)$-level and $t_k$ is on the $i(B)$-level for all Graph WSPD pairs~$(C_j, D_k)$, where~$C_j = (s_j,A)$ and~$D_k = (t_k,B)$. We will resolve the second obstacle by applying Lemma~\ref{lemma:weighted_ancestor_special_case} to solve a special case of the weighted ancestor query problem. We describe our algorithm and analyse its running time in Theorem~\ref{theorem:membership_oracle}.

\begin{theorem}
    \label{theorem:membership_oracle}
    Let $G = (V,E)$ be a $\lambda$-low-density graph with $n$ vertices. For $\varepsilon > 0$, one can construct a $(1/\varepsilon)$-WSPD for~$V$ in the graph metric~$G$ with $O(n \lambda^2 \varepsilon^{-4} \log n)$ pairs, and a membership oracle for the WSPD of $O(n \lambda^2 \varepsilon^{-4} \log n)$ size. Given a pair of vertices $a,b \in V$, the membership oracle returns in $O(1)$ time the unique WSPD pair that contains $(a,b)$, under the word RAM model of computation. The construction time for the membership oracle is~$O(n \lambda^2 \varepsilon^{-4} \log^2 n)$ expected time.
\end{theorem}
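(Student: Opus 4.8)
The plan is to assemble the membership oracle from four ingredients: the semi-compressed quadtree and Euclidean WSPD of Lemma~\ref{lemma:unbounded_spread_euclidean_wspd}, the compressed net-tree of Lemma~\ref{lemma:net_tree}, a new \emph{semi-compressed net-tree}, and the special-case weighted ancestor structure of Lemma~\ref{lemma:weighted_ancestor_special_case}. First I would run the construction of Theorem~\ref{theorem:wspd_construction}, which produces the Graph WSPD with $O(n\lambda^2\varepsilon^{-4}\log n)$ pairs in $O(n\lambda^2\varepsilon^{-4}\log^2 n)$ expected time, together with the semi-compressed quadtree, the Euclidean WSPD, the compressed net-tree, and (as a by-product of the clustering step) the set of $r_{i(A)}$-cluster centres lying in $2A$ for every Euclidean-WSPD quadtree node $A$. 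Recall that each Graph WSPD pair $(C_j,D_k)$ is stored implicitly as $((s_j,A),(t_k,B))$, where $A,B$ are the congruent $(1/\varepsilon)$-maximal quadtree nodes of the Euclidean WSPD and $s_j,t_k$ are net-tree nodes, and that $a\in C_j$ iff $a\in A$ and $s_j$ is the first $(\le i(A))$-level ancestor of $a$ in the net-tree. So a query must recover $A$, $B$, and these two ancestors in $O(1)$ time.

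Second, I would build the semi-compressed net-tree, in direct analogy with the semi-compressed quadtree of Lemma~\ref{lemma:unbounded_spread_euclidean_wspd}a. The obstruction noted in Section~\ref{subsection:membership_oracle} is that in the compressed net-tree the first $(\le i(A))$-level ancestor of $a$ may itself sit at a level strictly below $i(A)$, so it is neither a plain level ancestor nor an exact-weight ancestor. To repair this, for every Graph WSPD pair $((s_j,A),(t_k,B))$ I would splice a copy of $s_j$ at level exactly $i(A)$ and a copy of $t_k$ at level exactly $i(B)$ into the net-tree: since $s_j\in N_{i(A)}$ its original level is $\le i(A)$, so for each original node $v$ we can create a chain of new copies, one per distinct requested level exceeding $\operatorname{level}(v)$, and re-attach each original child $c$ of $v$ to the deepest chain node whose level is still below $\operatorname{level}(c)$. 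This keeps levels strictly increasing down every root-to-leaf path (so each such path still meets at most one node per level), does not change the clusters $S_j$ that define the Graph WSPD, and can be carried out by bucketing requests per net-tree node and relinking children in sorted order of level in $O((n+\#\text{requests})\log n)$ time. After deduplicating on $(s_j,A)$, the number of inserted copies is $O(\lambda n\varepsilon^{-4}\log n)$ by the same counting as in Theorem~\ref{theorem:unbounded_spread_wspd} (Lemma~\ref{lemma:clustering_strong} bounds the clusters per quadtree node, and their sum over the Euclidean WSPD is the semi-weight bound of Lemma~\ref{lemma:unbounded_spread_euclidean_wspd}e); crudely it is in any case at most the number of Graph WSPD pairs. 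Hence the semi-compressed net-tree has $O(n\lambda^2\varepsilon^{-4}\log n)$ nodes and is built within the stated time budget.

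Third, I would weight each semi-compressed net-tree node by its level, relabelled order-preservingly to lie in $\{1,\dots,N\}$ where $N$ is the node count; by the previous paragraph a child's weight exceeds its parent's, so Lemma~\ref{lemma:weighted_ancestor_special_case} applies and, in $O(N)$ time and space, answers ``does $u$ have an ancestor of weight exactly $w$, and which one'' in $O(1)$ time under the word RAM model. The query is then: given $a,b$, apply Corollary~\ref{corollary:congruent_and_maximal} to get in $O(1)$ time the congruent $(1/\varepsilon)$-maximal nodes $A\ni a$, $B\ni b$, which form the unique Euclidean WSPD pair with $a\in A$, $b\in B$; compute $i(A),i(B)$ from the side lengths in $O(1)$; query the structure of Lemma~\ref{lemma:weighted_ancestor_special_case} for the ancestor $s_j$ of $a$ of weight $i(A)$ and the ancestor $t_k$ of $b$ of weight $i(B)$; return $((s_j,A),(t_k,B))$. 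For correctness, $s_j$ is indeed the first $(\le i(A))$-level ancestor of $a$: a copy of that ancestor was spliced in at level $i(A)$ because $C_j\ni a$ is a nonempty cluster of $A$ paired with the nonempty cluster $D_k\ni b$, and it is the \emph{unique} level-$i(A)$ ancestor of $a$ since levels strictly increase along the path from $a$ to the root and that copy lies on the path between $a$ and the original $s_j$; hence $a\in C_j$, $b\in D_k$, and by the pair-decomposition property $((s_j,A),(t_k,B))$ is the unique Graph WSPD pair containing $(a,b)$. Summing the sizes of the semi-compressed quadtree $O(n\varepsilon^{-2})$, the net-trees $O(n\lambda^2\varepsilon^{-4}\log n)$, the weighted ancestor structure, and the stored pairs gives $O(n\lambda^2\varepsilon^{-4}\log n)$; the query is $O(1)$; and the construction is $O(n\lambda^2\varepsilon^{-4}\log^2 n)$ expected, the expectation entering only through Lemma~\ref{lemma:net_tree}.

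I expect the main obstacle to be the second step: specifying the semi-compressed net-tree so that (i) reintroducing cluster-centre copies at precisely the levels $i(A)$ demanded by the Graph WSPD keeps the node count $O(n\lambda^2\varepsilon^{-4}\log n)$, (ii) the re-attachment of children preserves the strict level-monotonicity required by Lemma~\ref{lemma:weighted_ancestor_special_case} and leaves every cluster $S_j$ unchanged, and (iii) the exact-level ancestor returned by that lemma provably coincides with the cluster centre used in the construction of Theorem~\ref{theorem:wspd_construction}. Everything else — the $O(1)$ quadtree lookup via Corollary~\ref{corollary:congruent_and_maximal}, the weight relabelling, and the size and time accounting — is routine bookkeeping layered on results already established.
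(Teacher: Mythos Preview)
Your proposal is correct and follows essentially the same approach as the paper: build the Graph WSPD via Theorem~\ref{theorem:wspd_construction}, splice copies of each cluster centre $s_j$ (resp.\ $t_k$) into the net-tree at level exactly $i(A)$ (resp.\ $i(B)$) to form a semi-compressed net-tree, then combine Corollary~\ref{corollary:congruent_and_maximal} for the quadtree nodes with Lemma~\ref{lemma:weighted_ancestor_special_case} for the exact-level ancestor query. Your explicit relabelling of levels into $\{1,\dots,N\}$ is a detail the paper glosses over, and your child re-attachment rule is equivalent to the paper's ``closest to the $(i{-}1)$-level'' parent choice.
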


\begin{proof}
    We state the preprocessing procedure. Construct a Euclidean WSPD using Lemma~\ref{lemma:euclidean_wspd_construction}. Construct a compressed net-tree using Lemma~\ref{lemma:net_tree}. Construct a $(1/\varepsilon)$-WSPD using Theorem~\ref{theorem:wspd_construction}. For each WSPD pair~$(C_j,D_k)$, where~$C_j = (s_j,A)$ and~$D_k = (t_k,B)$, insert the pairs $(s_j,i(A))$ and $(t_k,i(B))$ into a list~$L$. Next, we will use~$L$ to construct a semi-compressed net-tree. The semi-compressed net-tree will have the property that for each $(v,i) \in L$, we have a copy of the vertex~$v$ as a node on the $i$-level. 
    
    We state our construction for the semi-compressed net-tree. Construct a sequence of~$r_i$-cluster centres~$\{N_i\}$ using Fact~\ref{fact:net_tree_unbounded_spread}, where~$r_i = \Delta/2^{i-1}$. We make $N_1$ the root of the net-tree. Recall that for the compressed net-tree in Lemma~\ref{lemma:net_tree}, we place a vertex~$v$ into level~$i$ if $v \in N_i \setminus N_{i-1}$. For the semi-compressed net-tree, we instead place a vertex~$v$ into level~$i$ if either $v \in N_i \setminus N_{i-1}$, or $(v,i) \in L$. For a node $v \in N_i$, define $\parent(v)$ to be a vertex on the $(< i)$-level that is the cluster centre of the $r_{i-1}$-cluster that contains~$v$. If there are multiple copies of the cluster centre on the $(<i)$-level, choose the one that is closest to the~$(i-1)$-level. This completes the construction of the semi-compressed net-tree. The semi-compressed net-tree satisfies the covering, packing, and partition properties in Lemma~\ref{lemma:net_tree}, using the same proof. Moreover, by construction we have that a copy of the vertex~$v$ is on the~$i$-level for every $(v,i) \in L$. Construct the data structure in Lemma~\ref{lemma:weighted_ancestor_special_case} on the semi-compressed quadtree, where the weight of a node is given by its level. This completes the preprocessing procedure.

    We state the query procedure. Given a pair of vertices~$a,b \in V$, use Corollary~\ref{corollary:congruent_and_maximal} to query the quadtree nodes~$A$ and~$B$. Given~$a$ and~$A$, use Lemma~\ref{lemma:weighted_ancestor_special_case} to query the net-tree node~$s_j$, which is the ancestor of~$a$ with weight exactly~$i(A)$. Note that if we define~$s_j$ to be the first $(\leq i(A))$-level ancestor on the path from~$a$ to the root, then $(s_j,i(A)) \in L$, so $s_j$ is the $i(A)$-level ancestor of~$a$, as required. Repeat the query for~$b$ and~$B$ to obtain~$t_k$. This completes the query procedure.

% A figure here would be nice

    We analyse the preprocessing time. Constructing the Euclidean WSPD using Lemma~\ref{lemma:euclidean_wspd_construction} takes $O(n \log n + n \varepsilon^{-2})$ time. Constructing the $(1/\varepsilon)$-WSPD using Theorem~\ref{theorem:wspd_construction} takes $O(n \lambda^2 \varepsilon^{-4} \log^2 n)$ expected time. Constructing the list~$L$ takes $O(n \lambda^2 \varepsilon^{-4} \log n)$ time. Constructing the semi-compressed net-tree takes the same time as constructing the compressed net-tree, which by Lemma~\ref{lemma:net_tree} takes $O(n \lambda \log^2 n)$ time. Therefore, preprocessing takes $O(n \lambda^2 \varepsilon^{-4} \log^2 n)$ expected time. 

    The data structure size is the size of the semi-compressed quadtree plus the size of the semi-compressed net-tree. The semi-compressed quadtree has size~$O(n \varepsilon^{-2})$ by Lemma~\ref{lemma:unbounded_spread_euclidean_wspd}, and the semi-compressed net-tree has its size dominated by~$L$, which has the same size as the Graph WSPD. The size of the Graph WSPD is $O(n \lambda^2 \varepsilon^{-4} \log n)$ by Theorem~\ref{theorem:unbounded_spread_wspd}. Therefore, the overall size of the data structure is $O(n \lambda^2 \varepsilon^{-4} \log n)$.

    Querying the quadtree nodes~$A$ and~$B$ using Corollary~\ref{corollary:congruent_and_maximal} takes~$O(1)$ time. Querying the net-tree nodes~$s_j$ and~$t_k$, which are the $i(A)$-level ancestor of~$a$ and the $i(B)$-level ancestor of~$b$, takes~$O(1)$ time by Lemma~\ref{lemma:weighted_ancestor_special_case}. The overall query time is~$O(1)$.     
\end{proof}

\subsection{Distance oracle}
\label{subsection:approximate_distance_oracle}

Given a pair of vertices $u,v \in G$, an approximate distance oracle returns a $(1+\varepsilon)$-approximation of~$d_G(u,v)$ in $O(1)$ query time, see Definition~\ref{definition:ado}. In Section~\ref{subsection:putting_it_together}, we constructed a Graph WSPD for $G$, using Theorem~\ref{theorem:wspd_construction}. In Section~\ref{subsection:membership_oracle}, we constructed a membership oracle for the WSPD, using Theorem~\ref{theorem:membership_oracle}. In this section, we extend the membership oracle to an approximate distance oracle. 

Our approach is inspired by Gao and Zhang~\cite{DBLP:conf/stoc/GaoZ03}. First, we will construct an exact distance data structure for~$G$ with $O(\lambda \sqrt n)$ query time. Second, we will construct a lookup table, where the keys are Graph WSPD pairs, and its corresponding value is the distance between some representative pair for the WSPD. Third, we will argue that the distance between the representative pair is a $(1+\varepsilon)$-approximation for the distance between any pair in the Graph WSPD. 

First, we will describe our exact distance data structure. Our idea is to use a separator hierarchy~\cite{DBLP:conf/focs/LiptonT77}. To construct the separator hierarchy, we will use the separator theorem of Le and Than~\cite{DBLP:conf/soda/LeT22} for $\tau$-lanky graphs.

\begin{definition}[restate of Definition~\ref{definition:lanky}]
    Let $G=(V,E)$ be a graph embedded in $\mathbb R^2$, and let $\tau \in \mathbb N$. We say that $G$ is $\tau$-lanky if, for all $r \in \mathbb R^+$ and all balls $B$ of radius $r$ centred at a vertex~$v \in V$, there are at most $\tau$ edges in~$E$ that have length at least~$r$, have one endpoint inside~$B$, and have one endpoint outside~$B$.
\end{definition}

It is straightforward to verify that any $\lambda$-low-density graph is also $\lambda$-lanky. Next, we define a balanced separator, and then state the separator theorem of~\cite{DBLP:conf/soda/LeT22}.

\begin{definition}[Balanced separator]
    Let~$G = (V,E)$ be a graph with~$n$ vertices. A balanced separator is a subset of vertices~$S \subset V$ so that removing the vertices~$S$ from~$G$ separates~$G$ into two disjoint subgraphs, each of size at most~$cn$, for some constant~$c$.
\end{definition}

\begin{fact}[Theorem~1 in~\cite{DBLP:conf/soda/LeT22}]
    \label{fact:balanced_separator_lanky}
    Let $G = (V,E)$ be a $\tau$-lanky graph with $n$ vertices. Then, $G$ admits a balanced separator of size $O(\tau \sqrt n)$ that can be computed in $O(\tau n)$ expected time. 
\end{fact}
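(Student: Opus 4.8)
The plan is to realise the separator as a \emph{sphere separator}: pick a vertex $v_0 \in V$ and a radius $R$, and let $S$ be the set of endpoints of edges whose straight-line embedding has one endpoint strictly inside the ball $B(v_0,R)$ and one strictly outside. Such an $S$ automatically separates $V \cap B(v_0,R)$ from $V \setminus B(v_0,R)$, since any edge joining a vertex inside the ball to one outside must cross the sphere $\partial B(v_0,R)$ and hence contributes an endpoint to $S$. One first reduces to $G$ connected (recurse on components, combining the small ones) and, by perturbing the embedding infinitesimally, assumes no two vertices are equidistant from $v_0$, so that $O(1)$ vertices lie on any fixed sphere. The two things to arrange are then: (i) the chosen ball is balanced, $\tfrac14 n \le |V \cap B(v_0,R)| \le \tfrac34 n$, which survives deletion of $S$; and (ii) $|S| = O(\tau\sqrt n)$.

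For balance, set $f(r) = |V \cap B(v_0,r)|$, a nondecreasing step function from $1$ to $n$, let $r^\ast$ be the smallest radius with $f(r^\ast)\ge n/2$, and let $I=[r_1,r_2]$ be the maximal interval on which $f \in [\tfrac14 n,\tfrac34 n]$ (nonempty after perturbation, since $f(r^\ast)\le \tfrac12 n + O(1)$). Every radius in $I$ gives a balanced ball, so we sample $R$ uniformly from $I$ and bound $\mathbb E\,|S|$. An edge $e$ of Euclidean length $\ell_e$ is cut only when $R$ lies in an interval of length $\le \ell_e$ (the distance from $v_0$ to a point sliding along $e$ changes by at most $\ell_e$), so $\Pr[e\text{ cut}] \le \min\{1,\ \ell_e/|I|\}$. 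Split the edges that can possibly be cut (those meeting $B(v_0,r_2)$) by length: the ``long'' ones, of length $\ge |I|$, contribute $1$ each but number only $O(\tau)$ — by applying the $\tau$-lanky property at a few dyadic radii in $I$, using also that $\tau$-lanky implies maximum degree $\le \tau$ (take $r\to 0^+$ in the definition); the ``short'' ones contribute $\sum_e \ell_e/|I| = L/|I|$, where $L$ is the total edge length inside $B(v_0,r_2)$. A packing-length bound for lanky graphs, analogous to Fact~\ref{fact:low_density_implies_packed_weak}, states that this $L$ is $O(\tau\, r_2 \sqrt m)$ where $m=|V\cap B(v_0,O(r_2))|\le n$; it is proved by classifying edges into $O(\log n)$ length scales $2^j$ and covering $B(v_0,r_2)$ by $O((r_2/2^j)^2)$ vertex-centred balls of radius $2^j$, each meeting $\le\tau$ long edges. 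Provided $|I|=\Omega(r_2)$, this gives $L/|I| = O(\tau\sqrt n)$, hence $\mathbb E\,|S| = O(\tau\sqrt n)$; some realisation of $R$ then achieves $|S| = O(\tau\sqrt n)$ with the ball balanced. The running time is dominated by computing the $n$ Euclidean distances from $v_0$, selecting $r_1,r_2$, sampling $R$, and scanning the $O(\tau n)$ edges to collect $S$ — all $O(\tau n)$, with ``expected'' absorbing the randomisation.

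The main obstacle is the assumption $|I|=\Omega(r_2)$: adversarially the $\Theta(n)$ vertices straddling the median distance from $v_0$ could be packed into a thin annulus around $r^\ast$, making $|I|$ tiny and the short-edge bound $L/|I|$ useless. The fix is a preprocessing step that rules this out — e.g.\ choosing $v_0$ by a centrepoint-type argument, or first clustering $V$ into representatives of nearby points (much as Lemma~\ref{lemma:clustering_strong} does for the WSPD) so that no radius interval of a given length is overloaded, and then arguing that a random $v_0$ yields a long $I$ with good probability. This step, together with establishing the lanky packing-length bound (a genuine strengthening of the low-density statement in Fact~\ref{fact:low_density_implies_packed_weak}, since lankiness only constrains \emph{crossing} edges at \emph{vertex-centred} balls), is where the real work lies; here we simply invoke the result of Le and Than as Fact~\ref{fact:balanced_separator_lanky}.
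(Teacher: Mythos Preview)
The paper does not prove this statement at all: it is stated as a \emph{Fact}, attributed to Theorem~1 of Le and Than~\cite{DBLP:conf/soda/LeT22}, and used as a black box in the construction of the exact distance data structure (Lemma~\ref{lemma:exact_distance_oracle}). There is therefore nothing in the paper to compare your argument against. You yourself seem to recognise this in your final sentence, where you ultimately fall back on invoking the cited result.

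That said, your sketch is a reasonable outline of the sphere-separator method underlying Le and Than's proof, and you have correctly identified the genuine difficulty: without further work, the balanced annulus $I$ around the median radius can be arbitrarily thin, which would destroy the $L/|I|$ bound on the expected number of short cut edges. Your suggested remedies (choosing the centre more carefully, or a preliminary clustering) are in the right spirit but are left vague; in Le and Than's actual argument this is handled by a more careful choice of centre and scale, not by the ad hoc fixes you gesture at. You are also right that the packing-length bound needs to be reproved for lanky graphs rather than low-density graphs, since lankiness only constrains vertex-centred balls and only counts crossing edges. For the purposes of this paper, however, none of this matters: the result is imported, not proved.
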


Now, we are ready to construct the exact distance data structure.

\begin{lemma}
    \label{lemma:exact_distance_oracle}
    Let $G = (V,E)$ be a $\lambda$-low-density graph with $n$ vertices. There is a data structure of~$O(\lambda n \sqrt n )$ size that, given a pair of points~$u, v \in V$, returns in $O(\lambda \sqrt n)$ query time the distance $d_G(u,v)$. The preprocessing time is $O(\lambda^2 n \sqrt n \log n)$ expected. 
\end{lemma}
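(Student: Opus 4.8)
The plan is to build a recursive separator hierarchy in the style of Lipton and Tarjan~\cite{DBLP:conf/focs/LiptonT77}. Two structural facts drive the construction. First, every $\lambda$-low-density graph is $\lambda$-lanky, and low density (hence $\lambda$-lankiness) passes to induced subgraphs, so by Fact~\ref{fact:balanced_separator_lanky} every induced subgraph on $m$ vertices has a balanced separator of size $O(\lambda\sqrt m)$ computable in $O(\lambda m)$ expected time. Second, a $\lambda$-low-density graph has maximum degree at most $\lambda$ (shrink a ball around a vertex below all incident edge lengths), so every induced subgraph on $m$ vertices has $O(\lambda m)$ edges, and a single-source shortest path computation inside it takes $O(\lambda m + m\log m)$ time.

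\textbf{Construction and query.} Recursively split $V$: at a node with vertex set $V_t$ and $G_t := G[V_t]$, if $|V_t| = O(\lambda^2)$ store the full pairwise distance matrix of $G_t$; otherwise compute a balanced separator $S_t$ of $G_t$ of size $O(\lambda\sqrt{|V_t|})$ with sides $A_1,A_2$ and recurse on $A_1\cup S_t$ and $A_2\cup S_t$. At every internal node run Dijkstra in $G_t$ from each $s\in S_t$ and store $d_{G_t}(s,v)$ for all $v\in V_t$; also store each vertex's $O(\log n)$ ancestor pieces and an $O(1)$-time LCA structure on the recursion tree. For a query $(u,v)$, the LCA structure gives the deepest piece $P_k$ containing both, and $P_0 = G, P_1,\dots,P_k$ are exactly the pieces containing both; we return
\[
\min\Bigl(\ d_{G_{P_k}}(u,v),\ \min_{0\le i<k}\ \min_{s\in S_{t_i}}\bigl(d_{G_{P_i}}(u,s)+d_{G_{P_i}}(s,v)\bigr)\ \Bigr),
\]
where the first term is present only when $P_k$ is a leaf, and every needed value is stored. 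Each term is the length of a genuine walk in a subgraph of $G$, so the output is at least $d_G(u,v)$.

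\textbf{Correctness.} Fix a shortest $u$–$v$ path $\pi$ in $G$. A one-line induction shows that if $\pi$ avoids $S_{t_0},\dots,S_{t_{i-1}}$ then $\pi\subseteq V_{P_i}$: the inductive step uses that $\pi$ is connected, starts at $u\notin S_{t_{i-1}}$, lies in $V_{P_{i-1}}\setminus S_{t_{i-1}}$, and the latter splits into two edge-disjoint sides, so $\pi$ cannot leave the side leading to $P_i$. Either $\pi$ avoids $S_{t_0},\dots,S_{t_{k-1}}$ — then $\pi\subseteq V_{P_k}$, so $d_{G_{P_k}}(u,v)\le|\pi|=d_G(u,v)$ and the first term is tight — or $\pi$ meets some $S_{t_{i^\ast}}$ with $i^\ast<k$ minimal; then $\pi\subseteq V_{P_{i^\ast}}$, so for any $s^\ast\in\pi\cap S_{t_{i^\ast}}$ the two sub-walks of $\pi$ from $u$ to $s^\ast$ and from $s^\ast$ to $v$ lie in $G_{P_{i^\ast}}$, giving $d_G(u,v)=|\pi|\ge d_{G_{P_{i^\ast}}}(u,s^\ast)+d_{G_{P_{i^\ast}}}(s^\ast,v)\ge d_G(u,v)$, so that term is tight. (When an endpoint of $\pi$ already lies in some $S_{t_i}$ the corresponding $s=u$ or $s=v$ summand handles it.) Hence the query returns exactly $d_G(u,v)$.

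\textbf{Analysis and main obstacle.} Choosing the leaf threshold $\Theta(\lambda^2)$ makes $|S_t|$ a lower-order fraction of $|V_t|$, so each child has at most a constant fraction of its parent's vertices; the standard convexity argument then gives $\sum_t|V_t|^{3/2}=O(n^{3/2})$ and a chain $P_0\supseteq\dots\supseteq P_k$ of geometrically shrinking sizes. Node $t$ stores $|S_t|\cdot|V_t|=O(\lambda|V_t|^{3/2})$ values, so the total space is $O(\lambda n\sqrt n)$. Preprocessing at $t$ is $|S_t|$ Dijkstra runs costing $O(\lambda|V_t|+|V_t|\log n)$ each, i.e.\ $O(\lambda(\lambda+\log n)|V_t|^{3/2})$, plus $O(\lambda|V_t|)$ for the separator; summing and using $\lambda\ge1$ gives $O(\lambda^2 n\sqrt n\log n)$ expected time. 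A query is one LCA lookup plus $\sum_{i\le k}O(|S_{t_i}|)=O(\lambda\sqrt n)$ table lookups along the chain. The main obstacle is the correctness of the query rather than the bookkeeping: since a shortest path in $G$ need not stay inside a piece, one must store $G_t$-distances from \emph{every} separator at \emph{every} level and minimise along the whole root-to-$P_k$ chain (not merely use the deepest piece), and the induction above is exactly what certifies that this suffices.
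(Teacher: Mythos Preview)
Your proof follows the same separator-hierarchy approach as the paper (which gives only a brief sketch and defers to standard references), and you supply all the details the paper omits: that low density passes to induced subgraphs, that the degree bound gives $O(\lambda m)$ edges per piece, the leaf threshold $\Theta(\lambda^2)$ to force geometric shrinkage, and the inductive argument that a shortest path stays inside $V_{P_i}$ until it first meets one of the separators along the root-to-$P_i$ chain.

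There is one small slip in the query formula. When $P_k$ is an internal node, your minimisation over $0\le i<k$ omits the separator $S_{t_k}$ of $P_k$ itself. In your case~(a) you have $\pi\subseteq V_{P_k}$, but since $P_k$ is the deepest piece containing both $u$ and $v$, they lie on opposite sides $A_1,A_2$ of $S_{t_k}$, so $\pi$ must meet $S_{t_k}$; yet no term in your displayed minimum captures this (the first term $d_{G_{P_k}}(u,v)$ is, by your own stipulation, absent). The fix is to extend the range to $0\le i\le k$ whenever $P_k$ is internal---the values $d_{G_{P_k}}(s,\cdot)$ for $s\in S_{t_k}$ are already stored by your preprocessing, so no bound changes.
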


\begin{proof}
    The data structure is a separator hierarchy~\cite{DBLP:conf/focs/LiptonT77}. Constructing a separator heirarchy from a balanced separator is standard~\cite{DBLP:conf/compgeom/BuchinBG0W24}, so we only provide a very brief sketch of the technique. The construction procedure is use Fact~\ref{fact:balanced_separator_lanky} to recursively separate the graph. The result is a binary tree with a balanced separator stored at the internal nodes of the binary tree. Construct a shortest path tree rooted at each vertex of each of the balanced separators. The query procedure for vertices~$u$ and~$v$ is to compute their lowest common ancestor (LCA), and to retrieve the balanced separators between the LCA and the root. Then, we query the shortest path trees to compute $\min_{w \in B} d_G(u,w) + d_G(w,v)$, where $B$ is the union of the balanced separators. It is straightforward to verify that the data structure size is $O(\lambda n \sqrt n)$, the preprocessing time is~$O(\lambda^2 n \sqrt n \log n)$ expected, and the query time is $O(\lambda \sqrt n)$. 
\end{proof}

Second, we construct a lookup table. Construct a Graph WSPD using Theorem~\ref{theorem:wspd_construction} with a separation constant of~$4/\varepsilon$. The keys of the lookup table are the well-separated pairs in the Graph WSPD. Let~$(C_j, D_k)$ be one of these well-separated pairs. Then $C_j = (s_j,A)$ and $D_k = (t_k,B)$, where $A,B$ are quadtree nodes and~$s_j,t_k$ are net-tree nodes. Recall that net-tree nodes are also vertices in the graph. In the lookup table, for the key~$(C_j, D_k)$, let its corresponding value be the graph distance~$d_G(s_j,t_k)$. This completes the description of the lookup table.

Third, we will argue that the distance $d_G(s_j,t_k)$ is a $(1+\varepsilon)$-approximation of the distance between any pair of vertices contained in $(C_j,D_k)$. Let $u \in C_j$ and $v \in D_k$. Then~$u \in A$ and~$v \in B$. Recall from Theorem~\ref{theorem:wspd_construction} that~$C_j$ is a cluster centred at~$s_j$ with radius~$\ell(A)/2$, where $\ell(A)$ denotes the side length of~$A$. Therefore, $d_G(u,s_j) \leq \ell(A)/2$, and similarly, $d_G(t_k,v) \leq \ell(B)/2$.

Since the quadtree nodes~$A$ and~$B$ are $(4/\varepsilon)$-well-separated, we have:
\[
    d_G(u,v) \geq d_2(u,v) \geq d_2(A,B) \geq \frac 4 \varepsilon \cdot \max(\ell(A),\ell(B)) \geq \frac 4 \varepsilon \cdot (d_G(u,s_j)+d_G(t_k,v)).
\]
We can now rearrange the above inequality to show that the distance $d_G(s_j,t_k)$ is a $(1+\varepsilon)$-approximation of $d_G(u,v)$. We have:
\[
    (1+\frac \varepsilon 4) \cdot d_G(u,v) \geq d_G(s_j,u) + d_G(u,v) + d_G(v,t_k) \geq d_G(s_j,t_k),
\]
\[
    d_G(s_j,t_k) \geq d_G(u,s_j) + d_G(s_j,t_k) + d_G(t_k,v) - \frac \varepsilon 4 \cdot d_G(u,v) \geq (1-\frac \varepsilon 4) \cdot d_G(u,v).
\]
Finally, $(1+\varepsilon/4)/(1-\varepsilon/4) < (1+\varepsilon/4)^2 < (1+\varepsilon)$, so $d_G(s_j,t_k)$ is a $(1+\varepsilon)$-approximation of $d_G(u,v)$.

Now we are ready to prove the main theorem of this section. 

\begin{theorem}
    \label{theorem:ado_construciton}
    Let $G=(V,E)$ be a $\lambda$-low-density graph with~$n$ vertices. For all $\varepsilon > 0$, there is an approximate distance oracle of $O(n \lambda^2 \varepsilon^{-4} \log n)$ size, that, given vertices $u,v \in V$, returns in $O(1)$ time a $(1+\varepsilon)$-approximation of $d_G(u,v)$, under the word RAM model of computation. The preprocessing time is $O(n \sqrt n \lambda ^3 \varepsilon^{-4} \log n)$ expected.
\end{theorem}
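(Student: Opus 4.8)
The plan is to assemble the approximate distance oracle from three components already built in this section: the Graph WSPD of Theorem~\ref{theorem:wspd_construction} (instantiated with separation constant $4/\varepsilon$), the membership oracle of Theorem~\ref{theorem:membership_oracle}, and the exact distance data structure of Lemma~\ref{lemma:exact_distance_oracle}. First I would run the preprocessing: build the $(4/\varepsilon)$-WSPD together with its membership oracle, and separately build the separator-hierarchy exact distance structure. Then, for each Graph WSPD pair $(C_j,D_k)$ — recall $C_j=(s_j,A)$ and $D_k=(t_k,B)$ with $s_j,t_k$ net-tree nodes, hence vertices of $G$ — I query the exact distance structure once to compute $d_G(s_j,t_k)$, and store this value in a lookup table keyed by the pair $(C_j,D_k)$. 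Since there are $O(n\lambda^2\varepsilon^{-4}\log n)$ pairs and the exact structure has $O(\lambda\sqrt n)$ query time, populating the table costs $O(n\sqrt n\,\lambda^3\varepsilon^{-4}\log n)$ expected time, which dominates the preprocessing; the other two constructions cost only $O(\lambda^2 n\sqrt n\log n)$ and $O(n\lambda^2\varepsilon^{-4}\log^2 n)$ expected respectively. After populating the table I may discard the $O(\lambda n\sqrt n)$ separator hierarchy, so the oracle's residual size is the table plus the membership oracle, i.e. $O(n\lambda^2\varepsilon^{-4}\log n)$.

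The query procedure is: given $u,v\in V$, use the membership oracle (Theorem~\ref{theorem:membership_oracle}) to find in $O(1)$ time the unique WSPD pair $(C_j,D_k)$ containing $(u,v)$, then return the stored value $d_G(s_j,t_k)$ from the lookup table in $O(1)$ time. Correctness of the approximation is exactly the computation carried out just above the theorem statement: for $u\in C_j$, $v\in D_k$ we have $d_G(u,s_j)\le \ell(A)/2$ and $d_G(t_k,v)\le \ell(B)/2$ because $C_j$ is a cluster centred at $s_j$ of radius $\ell(A)/2$ (and likewise for $D_k$), while $(4/\varepsilon)$-well-separatedness of $A,B$ in the Euclidean metric gives $d_G(u,v)\ge d_2(u,v)\ge d_2(A,B)\ge (4/\varepsilon)\max(\ell(A),\ell(B))\ge(4/\varepsilon)(d_G(u,s_j)+d_G(t_k,v))$. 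Two triangle-inequality manipulations then sandwich $d_G(s_j,t_k)$ between $(1-\varepsilon/4)d_G(u,v)$ and $(1+\varepsilon/4)d_G(u,v)$, and since $(1+\varepsilon/4)/(1-\varepsilon/4)<(1+\varepsilon/4)^2<1+\varepsilon$ for $0<\varepsilon<1$, the returned value is a $(1+\varepsilon)$-approximation of $d_G(u,v)$.

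I do not expect a genuine obstacle here, since all the heavy lifting — the WSPD size bound, the $O(1)$-time membership oracle in the word RAM model, and the lanky-graph separator hierarchy — is already in place. The only points needing care are bookkeeping ones: confirming that the lookup table can be implemented with $O(1)$ access in the word RAM model (standard, and already assumed elsewhere in the paper), checking that we are entitled to delete the exact distance structure after preprocessing so that it does not inflate the final size bound, and making sure the separation-constant rescaling from $1/\varepsilon$ to $4/\varepsilon$ only changes constants in the $\varepsilon^{-4}$ term. The expected running time comes solely from the expected bounds in Theorem~\ref{theorem:wspd_construction} and Lemma~\ref{lemma:exact_distance_oracle}, and multiplying the $O(\lambda\sqrt n)$ per-pair exact-distance query cost by the $O(n\lambda^2\varepsilon^{-4}\log n)$ pair count gives the claimed $O(n\sqrt n\,\lambda^3\varepsilon^{-4}\log n)$ expected preprocessing time.
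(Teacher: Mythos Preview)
Your proposal is correct and follows essentially the same approach as the paper: build the $(4/\varepsilon)$-Graph WSPD with its membership oracle, build the separator-hierarchy exact distance structure, populate a lookup table with $d_G(s_j,t_k)$ for every WSPD pair, then discard the exact structure; queries use the membership oracle followed by a table lookup, and correctness is the triangle-inequality sandwich you describe. The cost accounting and size analysis you give match the paper's line by line.
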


\begin{proof}
    We state the preprocessing procedure. Construct a Graph WSPD with separation constant $4/\varepsilon$, using Theorem~\ref{theorem:wspd_construction}. Construct a membership oracle for the Graph WSPD using Theorem~\ref{theorem:membership_oracle}. Construct an exact distance data structure for~$G$ using Lemma~\ref{lemma:exact_distance_oracle}. Construct a lookup table with \mbox{key-value} pairs $((C_j,D_k),d_G(s_j,t_k))$, where $(C_j,D_k)$ is a well-separated pair in the Graph WSPD, $C_j = (s_j,A)$ and~$D_k = (t_k,B)$. This completes the preprocessing procedure.

    We state the query procedure. Given vertices $u,v \in V$, query the membership oracle to obtain the Graph WSPD pair~$(C_j,D_k)$ that contains~$(u,v)$. Query the lookup table using the key~$(C_j,D_k)$ to obtain the distance~$d_G(s_j,t_k)$. Return the distance~$d_G(s_j,t_k)$. The correctness of the $(1+\varepsilon)$-approximation follows from the analysis prior to Theorem~\ref{theorem:ado_construciton}.

    We analyse the data structure size. The query only requires the membership oracle and the lookup table. The size of the membership oracle is~$O(n \lambda^2 \varepsilon^{-4} \log n)$ by Theorem~\ref{theorem:membership_oracle}, and the size of the lookup table is~$O(n \lambda^2 \varepsilon^{-4} \log n)$ by Theorem~\ref{theorem:wspd_construction}. This yields the stated data structure size.

    We analyse the preprocessing time. Constructing the Graph WSPD and the membership oracle takes $O(n \lambda^2 \varepsilon^{-4} \log^2 n)$ expected time by Theorems~\ref{theorem:wspd_construction} and~\ref{theorem:membership_oracle}. Constructing the exact distance data structure takes $O(\lambda^2 n \sqrt n \log n)$ expected time by Lemma~\ref{lemma:exact_distance_oracle}. For each key in the lookup table, we query the exact distance data structure to obtain its corresponding value. There are $O(n \lambda^2 \varepsilon^{-4} \log n)$ keys, by Theorem~\ref{theorem:wspd_construction}, and each query takes $O(\lambda \sqrt n)$ time, by Lemma~\ref{lemma:exact_distance_oracle}. Therefore, constructing the lookup table takes $O(n \sqrt n \lambda ^3 \varepsilon^{-4} \log n)$ time. Overall, the preprocessing time is $O(n \sqrt n \lambda ^3 \varepsilon^{-4} \log n)$ expected.

    We analyse the query time. Querying the membership oracle takes $O(1)$ time by Theorem~\ref{theorem:membership_oracle}. Querying the lookup table takes $O(1)$ time by Definition~\ref{definition:word_ram}. The overall query time is $O(1)$.
\end{proof}

\section{Conclusion}

In this paper, we constructed a $(1/\varepsilon)$-well-separated pair decomposition for a $\lambda$-low-density graph. Our well-separated pair decomposition has $O(n \lambda^2 \varepsilon^{-4} \log n)$ pairs. It admits a membership oracle of $O(n \lambda^2 \varepsilon^{-4} \log n)$ size, so that given a pair of vertices~$(a,b)$, returns in $O(1)$ time the WSPD pair containing~$(a,b)$. We use our well-separated pair decomposition and its membership oracle to construct an approximate distance oracle for low density graphs, that has $O(n \lambda^2 \varepsilon^{-4} \log n)$ size, and answers $(1+\varepsilon)$-approximate distance queries in~$O(1)$ time. Our queries assume the word RAM model of computation.

We argued that low density graphs are a realistic and useful graph class for studying road networks. We believe that low density provides a theoretical explanation as to why real-world road networks admit efficient shortest path data structures and small separators. We hope that by studying low density graphs, we can gain insights into the structure of road networks, which hopefully will lead to more efficient algorithms on road networks in the future. 

We conclude with a list of open problems. Can the size of a well-separated pair decomposition or the size of an approximate distance oracle be improved to~$O(n)$, for low density graphs? Even for subsets of the $\sqrt n \times \sqrt n$ grid graph, which is a special case of both low density and unit disc graphs, this problem remains unresolved. Otherwise, can a lower bound be shown? Low density graphs are a special case of lanky graphs and graphs with constant crossing degeneracy; can we extend our results to more general graph classes? Can the preprocessing time of our approximate distance oracle be made determistic? Using the same techniques as Gao and Zhang~\cite{DBLP:conf/stoc/GaoZ03}, our well-separated pair decomposition implies randomised $\tilde O(n^{3/2})$ time approximation algorithms for the diameter, radius, and stretch of a low density graph; can these be improved? Can the linear time single source shortest path algorithm of Henzinger, Klein, Rao and Subramanian's~\cite{DBLP:journals/jcss/HenzingerKRS97} be adapted to low density graphs? The result of~\cite{DBLP:journals/jcss/HenzingerKRS97} only applies to graph classes that are minor closed.

Our Euclidean WSPD construction, using a semi-compressed quadtree, may be of independent interest. One advantage of our Euclidean WSPD over previous constructions are that it admits a membership oracle with $O(1)$ query time, assuming arithmetic operations and the floor function can be performed in $O(1)$ time. Another advantage of our Euclidean WSPD is that each node in the semi-compressed quadtree participates in only a constant number of WSPD pairs, although, a WSPD with this property was already essentially known (see Footnote~3 in Chapter~3.5 of~\cite{har2011geometric}).

Finally, we believe that the SELG graph class may be of independent interest. Any algorithmic results for realistic graph classes, under our definition of realistic in Section~\ref{subsection:other_graph_classes}, also applies to SELG graphs. This inludes low density graphs, bounded growth graphs, disc neighbourhood systems, and graphs with constant crossing degeneracy. As a result, we believe that the SELG graph class may be a good testing ground for algorithmic results for road networks.

\subsubsection*{Acknowledgements}

The second author would like to thank Jacob Holm for useful discussions related to Fact~\ref{fact:predecessor_to_weighted_ancestor} and Lemma~\ref{lemma:weighted_ancestor_special_case}.

\newpage
\bibliographystyle{plain}
\bibliography{bib.bib}

\end{document}